\theoremstyle{plain}
\newtheorem{observation}[proposition]{Observation}
\newtheorem{myclaim}{Claim}
\newcommand{\removelatexerror}{\let\@latex@error\@gobble}
\renewenvironment{framed}[1][\hsize]
   {\MakeFramed{\hsize#1\advance\hsize-\width \FrameRestore}}%
   {\endMakeFramed}
\newcommand{\Mod}[1]{\ (\mathrm{mod}\ #1)}
\newcommand{\Thcref}[1]{Theorem~\ref{#1}}
\newcommand{\Prcref}[1]{Proposition~\ref{#1}} 
\newcommand{\Clcref}[1]{Claim~\ref{#1}} 
\newcommand{\Obcref}[1]{Observation~\ref{#1}}   
\newcommand{\Lecref}[1]{Lemma~\ref{#1}}
\newcommand{\Cocref}[1]{Corollary~\ref{#1}}
\newcommand{\biglefttriangle}{\text{\larger[3]$\vartriangleleft$}}
\newcommand{\nagents}{\mathrm{\# agents}}
\newcommand{\nitems}{\mathrm{\# items}}
\newcommand{\Tcref}[1]{T.\ref{#1}}
\newcommand{\Ccref}[1]{C.\ref{#1}}
\newcommand*{\myproofname}{Proof}
\newcommand{\defproblem}[3]{
\vspace{1mm}
\noindent\fbox{
  \begin{minipage}{0.96\textwidth}
  \begin{tabular*}{\textwidth}{@{\extracolsep{\fill}}lr} #1 \\ \end{tabular*}
  {\bf{Input:}} #2  \\
  {\bf{Question:}} #3
  \end{minipage}
}\vspace{1mm}}
\newcommand{\ThreeSat}{\textsc{3-SAT}}
\newcommand{\assign}{\textsc{Assignment}}
\newcommand{\goassign}{\textsc{$\alpha$-Globally Optimal Assignment}}
\newcommand{\serDictFeas}{\textsc{Serial Dictatorship Feasibility}}
\newcommand{\MCIS}{\textsc{Multicolored Independent Set}}
\newcommand{\pc}{\textsc{$k \times k$ Permutation Clique}}
\newcommand{\ind}{\mathrm{ind}}
\newcommand{\bb}{\mathrm{b}}
\newcommand{\dd}{\mathrm{d}}
\newcommand{\XX}{\mathcal{X}}
\newcommand{\CC}{\mathcal{C}}
\newcommand{\OO}{\mathcal{O}}
\newcommand{\yes}{\textsf{Yes}}
\newcommand{\no}{\textsf{No}}
\newcommand{\yesinstance}{\yes-instance}
\newcommand{\NP}{\textsf{NP}}
\newcommand{\coNPpoly}{\textsf{coNP/poly}}
\newcommand{\FPT}{\textsf{FPT}}
\newcommand{\XP}{\textsf{XP}}
\newcommand{\WO}{\textsf{W[1]}}
\newcommand{\WOH}{\textsf{W[1]}-hard}
\newcommand{\NPH}{\textsf{NP}-hard}
\newcommand{\paraH}{\textsf{para-NP}-hard}
\newcommand{\ETH}{\textsf{ETH}}
\tikzstyle{vertex}=[fill=white, draw=black, shape=circle, minimum size=1.2cm]
\tikzstyle{rec}=[fill=white, draw=black, shape=rectangle, minimum width=1.4cm, minimum height=0.7cm]
\tikzstyle{edge}=[->]
\begin{document}\sloppy

\title{Parameterized Analysis of Assignment Under Multiple Preferences}
\author{Barak Steindl \and Meirav Zehavi}
\authorrunning{Barak Steindl and Meirav Zehavi}

\institute{Ben Gurion University of the Negev, Beer-Sheva, Israel}

\maketitle

\begin{abstract}
The \assign\ problem is a fundamental and well-studied problem in the intersection of Social Choice, Computational Economics and Discrete Allocation. In the \assign\ problem, a group of agents expresses preferences over a set of items, and the task is to find a {\em pareto optimal} allocation of items to agents. We introduce a generalized version of this problem, where each agent is equipped with {\em multiple} incomplete preference lists: each list (called a {\em layer}) is a ranking of items in a possibly different way according to a different criterion. We introduce the concept of {\em global optimality}, which extends the notion of pareto optimality to the multi-layered setting, and we focus on the problem of deciding whether a {\em globally optimal} assignment exists. We study this problem from the perspective of Parameterized Complexity: we consider several natural parameters such as the number of layers, the number of agents, the number of items, and the maximum length of a preference list. We present a comprehensive picture of the parameterized complexity of the problem with respect to these parameters.
\end{abstract}
\section{Introduction}
The field of resource allocation problems has been widely studied in recent years. A fundamental and one of the most well-studied problems in this field is the \assign\ problem\footnote{The problem is called \assign\ in all relevant literature. Although this name is somewhat generic, to be consistent with the literature, we use it here as well.} \cite{RePEc:ecm:emetrp:v:66:y:1998:i:3:p:689-702,Bogomolnaia2001ANS,Abdulkadirog1999HouseAW,10.1007/978-3-540-30551-4-3,RePEc:eee:jetheo:v:52:y:1990:i:1:p:123-135,AAAI1817396,ijcai2017-12,soton425734,gourves:hal-01741519}. In the \assign\ problem we are given a set of $n$ agents, and a set of $m$ items. Each agent (human, company, or any other entity) has strict preferences over a subset of items, and the objective is to allocate items to agents in an ``optimal" way. Different notions of optimality have been considered in the literature, but the one that has received the most attention is {\em pareto optimality} (see, e.g., \cite{RePEc:ecm:emetrp:v:66:y:1998:i:3:p:689-702,ijcai2017-12,soton425734}). Intuitively, an assignment $p$ is called {\em pareto optimal} if there is no other assignment $q$ that is at least good as $p$ for all the agents and also strictly better than $p$ for at least one agent.

Besides its theoretical interest, the problem has also practical importance. Algorithms for the \assign\ problem have applications in a variety of real-world situations, such as assigning jobs to workers, campus houses to students, time stamps to users on a common machine, players to sports teams, graduating medical students to their first hospital appointments, and so on \cite{doi:10.3138/infor.45.3.123,singh2012comparative,faudzi2018assignment,doi:10.4103/0256-4602.78092}.

A simple and well-studied allocation mechanism is the greedy serial dictatorship mechanism, introduced by Abdulkadiroglu and Sönmez \cite{RePEc:ecm:emetrp:v:66:y:1998:i:3:p:689-702}. In the serial dictatorship mechanism, we draw a random permutation on the agents from the uniform distribution. The agent ordered first in the permutation is assigned to its top choice, the agent ordered second is assigned to its top choice among the remaining items and so on. Abdulkadiroglu and Sönmez \cite{RePEc:ecm:emetrp:v:66:y:1998:i:3:p:689-702} proved that applying this mechanism results in a pareto optimal assignment.

In the \assign\ problem, each agent has exactly one preference list. The preference lists may represent a single subjective criterion according to which each agent ranks the items. However, they may also represent a combination of different such criteria: each agent associates a score to each item per criterion, and a single preference list is derived from some weighted sum of the scores. In many cases, it is unclear how to combine scores associated with criteria of inherently incomparable nature - that is like ``comparing apples with oranges". Additionally, even if a single list can be forcefully extracted, most data is lost.\footnote{Our new generalized model provides the ability to limit the amount of data that can be ignored.} 

Thus, the classic model seems somewhat restrictive in real world scenarios where people rely on highly {\em varied} aspects to rank other entities. For example, suppose that there are $n$ candidates who need to be assigned to $n$ positions. The recruiters may rank the candidates for each position according to different criteria, such as academic background, experience, impression by the interview, and so on \cite{KINICKI1985117,alderfer1970personal}. Moreover, when assigning campus houses to students, the student may rank the houses by multiple criteria such as their location (how close the house is to their faculty), rent, size etc \cite{wu2016students}. This motivates the employment of multiple preference lists where each preference list (called a {\em layer}) is defined by a different criterion.

In many real-world scenarios, the preferences of the agents may sometimes depend on external circumstances that may not be completely known in advance such as growth of stocks in the market, natural phenomena, outbreak of pandemics \cite{zeren2020impact,TOPCU2020101691} and so on. In such cases, each layer in our generalized model can represent a possible ``state'' of the world, and we may seek an assignment that is optimal in as many states as possible. For instance, suppose that there is a taxi company with $n$ taxis and $m$ costumers ($n > m$) that want to be picked at a specific time in future. The ``cost'' of each taxi depends on the time taken to reach the costumer from the starting location of the taxi. Many factors (that may not be completely known a-priori) may affect the total cost such as road constructions, weather, car condition and the availability of the drivers \cite{cools2010assessing,saleh2017study}. The firm may suggest different possible scenarios (each represents a layer). For each scenario, the costumers may be ranked differently by the taxis, and an assignment that is pareto optimal in as many layers as possible will cover most of the scenarios and will give the lowest expected total cost. 

Furthermore, it is not always possible to completely take hold of preferences of some (or all) agents due to lack of information or communication, as well as security and privacy issues \cite{nass2009value,browne2000lives}. In addition, even if it is technically and ethically feasible, it may be costly in terms of money, time, or other resources to gather all information from all the agents \cite{mulder2019willingness}. In these cases, we can ``complete the preferences'' using different assumptions on the agents. As a result, we will have a list of preference profiles that represent different possible states of the world. An assignment that is pareto optimal in as many preference profiles as possible will be pareto optimal with high probability.

Our work is inspired by the work of Chen et al.~\cite{10.1145/3219166.3219168}, who studied the \textsc{Stable Marriage} problem under multiple preference lists. In contrast to the \assign\ problem, the \textsc{Stable Marriage} problem is a {\em two-sided} matching problem, i.e.~it consists of two disjoint sets of agents $A$ and $B$, such that each agent strictly ranks the agents of the opposite set (in the \assign\ problem, only the agents rank the items). The objective in the \textsc{Stable Marriage} problem is to find a matching (called a {\em stable matching}) between $A$ and $B$ such that there do not exist agents $a \in A$ and  $b \in B$ that are not matched to each other but rank each other higher than their matched partners.\footnote{In Section \ref{sec:preliminaries}, we further argue that the \assign\ and \textsc{Stable Marriage}, problems, being based on different concepts of stability, are very different problems.}

Chen et al.~\cite{10.1145/3219166.3219168} considered an extension of the \textsc{Stable Marriage} problem where there are $\ell$ layers of preferences, and adapted the definition of stability accordingly. Specifically, three notions of stability were defined: {\em $\alpha$-global stability}, {\em $\alpha$-pair stability}, and {\em $\alpha$-individual stability}. In their work, Chen et al.~\cite{10.1145/3219166.3219168} studied the algorithmic complexity of finding matchings that satisfy each of these stability notions. Their notion of $\alpha$-global stability extends the original notion of stability in a natural way, by requiring the sought matching to be stable in (at least) some $\alpha$ layers. Our notion of {\em $\alpha$-global optimality} extends pareto optimality in the same way, by requiring an assignment to be pareto optimal in some $\alpha$ layers.  
In contrast to \cite{10.1145/3219166.3219168}, our research focuses more on the perspective of parameterized complexity.  We take into account a variety of parameters (\cite{10.1145/3219166.3219168} focuses only on the parameter $\alpha$), and we provide parameterized algorithms, parameterized hardness results, and \ETH -based lower bounds regarding them (\cite{10.1145/3219166.3219168} focuses only on parameterized hardness results). We also study the kernelization complexity of the problem: We provide kernels, and lower bounds on kernel sizes.

Although the \assign\ problem can be solved in polynomial time using a mechanism called ``serial dictatorship" \cite{RePEc:ecm:emetrp:v:66:y:1998:i:3:p:689-702}, we show that the problem becomes much harder when multiple preference lists are taken into account. So, in this paper, we study the parameterized complexity of deciding whether a globally optimal assignment exists with respect to various parameters.

\smallskip
\noindent\textbf{Contributions and Methods.} \begin{table*}[t]
\centering
\resizebox{\columnwidth}{!}{
\begin{tabular}{l | c | c | c}
Parameter $k$ &\:\:\:\:\:Complexity Class\:\:\:\:\:&\:\:\:\:\:\:\:\:Running Time\:\:\:\:\:\:\:\:& Polynomial Kernel? \\
\hline
$\ell + d$ & \paraH\ [\Tcref{theorem:threesat-reduction}]& - & -\\
$\ell + (m - d)$ & \paraH\ [\Tcref{theorem:threesat-reduction}]& - & -\\
$n$& \FPT & $\OO^{*}(n!)^{\dagger}$ [\Tcref{theorem:goassign-is-fpt-num-of-agents} + \Tcref{theorem:optimal-agents-items-alpha}] & no [\Tcref{theorem:nopoly-kernel-agents-items-alpha}]\\
$m$ & \XP, \WOH\ [\Tcref{w-one-hardness-items-alpha}] &  $(nm)^{\OO(m)}$ [\Tcref{xp-for-items}] & -\\
$m + \alpha$ & \XP, \WOH\ [\Tcref{w-one-hardness-items-alpha}] &  $(nm)^{\OO(m)}$ [\Tcref{xp-for-items}]& -\\
$n+m+\alpha$ & \FPT & $\OO^*(n!)^{\dagger}$ [\Tcref{theorem:goassign-is-fpt-num-of-agents} + \Tcref{theorem:optimal-agents-items-alpha}]& no [\Tcref{theorem:nopoly-kernel-agents-items-alpha}]\\
$m + (\ell - \alpha)$ & \XP, \WOH\ [\Tcref{theorem:w-one-hardness-items-l-alpha}] &  $(nm)^{\OO(m)}$ [\Tcref{xp-for-items}] & -\\
$n + m + (\ell - \alpha)$\:\: & \FPT & $\OO^{*}(n!)^{\dagger}$ [\Tcref{theorem:goassign-is-fpt-num-of-agents} + \Tcref{theorem:optimal-agents-items-ell-alpha}] & no [\Tcref{theorem:nopoly-kernel-agents-items-alpha}]\\
$m + \ell$     & \FPT & $\OO^{*}(((m!)^{\ell+1})!)$ [\Ccref{corollary:items-plus-ell-algorithm}] & no [\Tcref{theorem:nopoly-kernel-items-l}]\\
$n + \ell$& \FPT & $\OO^{*}(n!)$ [\Tcref{theorem:goassign-is-fpt-num-of-agents}]& yes [\Tcref{theorem:poly-kernel}]\\
$n + m + \ell$ & \FPT & $\OO^{*}(n!)$ [\Tcref{theorem:goassign-is-fpt-num-of-agents}] & yes [\Tcref{theorem:poly-kernel}]\\
\end{tabular}
}
\caption{Summary of our results for \goassign. Results marked with $\dagger$ are proved to be optimal under the exponential-time hypothesis.}
\label{tab:results-summary}
\end{table*} One important aspect of our contribution is conceptual: we are the first to study pareto optimality (in the \assign\ problem) in the presence of multiple preference lists. This opens the door to many future studies (both theoretical and experimental) of our concept, as well as refinements or extensions thereof (see Section \ref{sec:conclusion}). In this work, we focus on the classical and parameterized complexity of the problem.

We consider several parameters such as the number of layers $\ell$, the number of agents $n$ (also denoted by $\nagents$), the number of items $m$ (also denoted by $\nitems$), the maximum length of a preference list $d$, and the given number of layers $\alpha$ for which we require an assignment to be pareto optimal. The choice of these parameters is sensible because in real-life scenarios such as those mentioned earlier, some of these parameters may be substantially smaller than the input size. For instance, $\ell$, $\alpha$ and $\ell - \alpha$ are upper bounded by the number of criteria according to which the agents rank the items. Thus, they are likely to be small in practice: when ranking other entities, people usually do not consider a substantially large number of criteria (further, up until now, attention was only given to the case where $\ell = \alpha = 1$). For instance, when sports teams rank candidate players, only a few criteria such as the player's winning history, its impact on its previous teams, and physical properties are taken into account \cite{fearnhead2011estimating}. In addition, the parameter $\ell - \alpha$ may be small particularly in cases where we want to find an assignment that is optimal with respect to as many criteria as possible. Moreover, in various cases concerning ranking of people, jobs, houses etc., people usually have a limited number of entities that they want or are allowed to ask for \cite{clinedinst2019state}. In these cases, the parameter $d$ is likely to be small. Moreover, in small countries (such as Israel), the number of universities, hospitals, sports teams and many other facilities and organizations is very small \cite{chernichovskystate,educationInIsrael}. Thus, in scenarios concerning these entities, at least one among $n$ and $m$ may be small. A summary of our results is given in Table \ref{tab:results-summary}.

\smallskip
\noindent\textbf{Fixed-Parameter Tractability and ETH Based Lower Bounds.}
We prove that \goassign\ is {\em fixed-parameter tractable} (\FPT) with respect to $n$ by providing an $\OO^{*}(n!)$ time algorithm that relies on the connection between pareto optimality and serial dictatorship. We then prove that the problem admits a polynomial kernel with respect to $n + \ell$ and that it is \FPT\ with respect to $m + \ell$ by providing an exponential kernel. We also prove that the problem is {\em slice-wise polynomial} (\XP) with respect to $m$ by providing an $m^{\OO(m)}\cdot n^{\OO(m)}$ time algorithm. In addition, we prove that $\OO^{*}(2^{\OO(t\log{t})})$ is a tight lower bound on the running time (so, our $\OO^{*}(n!)$ time algorithm is essentially optimal) under \ETH\ (defined in Section \ref{sec:preliminaries}) for even larger parameters such as $t = n + m + \alpha$ and $t = n + m + (\ell - \alpha)$ using two linear parameter reductions from the \pc\ problem. Lastly, we prove that the problem is \WOH\ with respect to $m + \alpha$ and $m + (\ell - \alpha)$ using two parameterized reductions from \MCIS.

\smallskip
\noindent\textbf{NP-Hardness.}
We first prove that the problem is \NPH\ for any fixed values of $\alpha$ and $\ell$ such that $2 \leq \alpha \leq \ell$ using a polynomial reduction from the \serDictFeas\ problem that relies on a reduction by Aziz el al.~\cite{ijcai2017-12}. We also define three polynomial-time constructions of preference profiles given an instance of \ThreeSat, and we rely on them in the construction of polynomial reductions from \ThreeSat\ to the problem, such that in the resulting instances $\ell+d$ and $\ell + (m - d)$ are bounded by fixed constants. This proves that the problem is \paraH\ for $\ell+d$ and $\ell + (m - d)$.

\smallskip
\noindent\textbf{Non-Existence of Polynomial Kernels.}
We prove that the problem does not admit polynomial kernels unless \NP$\subseteq $\coNPpoly\  with respect to $n + m + \alpha$, $n + m + (\ell - \alpha)$, and $m + \ell$ using three {\em cross-compositions} (defined in Section \ref{sec:preliminaries}) from \ThreeSat\ that rely on the aforementioned reduction to prove \paraH ness.
\section{Preliminaries}
\label{sec:preliminaries}
For any natural number $t$, we denote $[t] = \lbrace 1,\ldots,t \rbrace$. We use the $\OO^{*}$ and the $\Omega^{*}$ notations to suppress polynomial factors in the input size, that is, $\OO^{*}(f(k))=f(k) \cdot n^{\OO(1)}$ and $\Omega^{*}(f(k))=\Omega(f(k)) \cdot n^{\OO(1)}$.

\smallskip
\noindent\textbf{The \assign\ problem.}
An instance of the \assign\ problem is a triple $(A,I,P)$ where $A$ is a set of $n$ agents $\lbrace a_{1},\ldots,a_{n} \rbrace$, $I$ is a set of $m$ items $\lbrace b_{1},\ldots,b_{m} \rbrace$, and $P=(<_{a_{1}},\ldots,<_{a_{n}})$, called the {\em preference profile}, contains the preferences of the agents over the items, where each $<_{a_{i}}$ encodes the preferences of $a_{i}$ and is a linear order over a {\em subset} of $I$ (preferences are allowed to be incomplete). We refer to such linear orders as {\em preference lists}. If $b_{j} <_{a_{i}} b_{r}$, we say that agent $a_{i}$ {\em prefers} item $b_{r}$ over item $b_{j}$, and we write $b_{j} \leq_{a_{i}} b_{r}$ if $b_{j} <_{a_{i}} b_{r}$ or $b_{j} = b_{r}$. Item $b$ is {\em acceptable} by agent $a$ if $b$ appears in $a$'s preference list. An {\em assignment} is an allocation of items to agents such that each agent is allocated at most one item, and each item is allocated to at most one agent. Since the preferences of the agents may be incomplete, or the number of items may be smaller than the number of agents, some agents may not have available items to be assigned to. In order to deal with this case, we define a special item $b_{\emptyset}$, seen as the least preferred item of each agent, and will be used as a sign that an agent is not allocated an item. Throughout this paper, we assume that $b_{\emptyset}$ is not part of the input item set, and that it appears at the end of every preference list (we will not write $b_{\emptyset}$ explicitly in the preference lists). We formally define assignments as follows:

\begin{definition}\label{def:assignment}
Let $A=\lbrace a_{1},\ldots,a_{n} \rbrace$ be a set of $n$ agents and let $I = \lbrace b_{1},\ldots,b_{m} \rbrace$ be a set of $m$ items. A mapping $p:A \rightarrow I \cup \lbrace b_{\emptyset} \rbrace$ is called an {\em assignment} if for each $i \in [n]$, it satisfies one of the following conditions:
\begin{enumerate}
\item $p(a_{i})=b_{\emptyset}$.
\item Both $p(a_{i}) \in I$ and for each $j \in [n]\setminus \lbrace i \rbrace$, $p(a_{i}) \neq p(a_{j})$.
\end{enumerate}
\end{definition}

We refer to $p$ as {\em legal} if, for each $i \in [n]$, it satisfies $p(a_{i})=b_{\emptyset}$ or $p(a_{i}) \in I$ is acceptable by $a_{i}$. For brevity, we will omit the term ``legal'', referring to a legal assignment just as an assignment.\footnote{All the ``optimal'' assignments that we construct in this paper will be legal in a sufficient number of layers, where they are claimed to be pareto optimal.} Moreover, when we write a set in a preference list, we assume that its elements are ordered arbitrarily, unless stated otherwise. In the \assign\ problem, we are given such triple $(A,I,P)$, and we seek a pareto optimal assignmen.

\bigskip
\noindent\textbf{Pareto Optimality.}
There are different ways to define optimality of assignments, but the one that received the most attention in the literature is {\em pareto optimality}. Informally speaking, an assignment $p$ is {\em pareto optimal} if there does not exist another assignment $q$ that is ``at least as good'' as $p$ for all the agents, and is ``better'' for at least one agent. It is formally defined as follows.

\begin{definition}\label{def:poAssignment}
Let $A = \lbrace a_{1},\ldots,a_{n} \rbrace$ be a set of agents, and let $I$ be a set of items. An assignment $p:A \rightarrow I \cup \lbrace b_{\emptyset} \rbrace$ is {\em pareto optimal} if there does not exist another assignment $q : A \rightarrow I \cup \lbrace b_{\emptyset} \rbrace$ that satisfies:
\begin{enumerate}
\item $p(a_{i}) \leq_{a_{i}} q(a_{i})$ for every $i \in [n]$.
\item There exists $i \in [n]$ such that $p(a_{i}) <_{a_{i}} q(a_{i})$.
\end{enumerate}
If such an assignment $q$ exists, we say that $q$ {\em pareto dominates} $p$.
\end{definition}

\bigskip
\noindent\textbf{The \assign\ problem and the \textsc{Stable Marriage} problem.}
On first sight, it may seem that the \assign\ problem is a special case of the \textsc{Stable Marriage} problem with indifferences (ties) \cite{irving1994stable} (the preferences of the agents remain the same, and all the items rank all the agents equally). We stress that, in fact, it is a very different problem since pareto optimality is not equivalent to the stability notions defined by Irving \cite{irving1994stable} for the the \textsc{Stable Marriage} problem with indifference. Consider the following instance of the \assign\ problem with three agents $a_{1},a_{2},a_{3}$, and three items $b_{1},b_{2},b_{3}$:
\begin{framed}[.4\textwidth]
\vspace{-0.5em}
\begin{itemize}
\item $a_{1}\ $: $\ b_{1}>b_{2}>b_{3}$
\item $a_{2}\ $: $\ b_{3}> b_{2}>b_{1}$
\item $a_{3}\ $: $\ b_{3}>b_{1}>b_{2}$
\end{itemize}
\vspace{-0.2em}
\end{framed}

We transform this instance to an instance of \textsc{Stable Marriage} with indifference by making all the items to rank $a_{1},a_{2}$ and $a_{3}$ equally as follows:
\begin{framed}[.75\textwidth]
\begin{itemize}
\item $a_{1}\ $: $\ b_{1}>b_{2}>b_{3}\ \ \ \ \ \ \ \ \ \ \bullet\ b_{1} :\ a_{1}=a_{2}=a_{3}$
\item $a_{2}\ $: $\ b_{3}> b_{2}>b_{1}\ \ \ \ \ \ \ \ \ \ \bullet\ b_{2} :\ a_{1}=a_{2}=a_{3}$
\item $a_{3}\ $: $\ b_{3}>b_{1}>b_{2}\ \ \ \ \ \ \ \ \ \ \bullet\ b_{3} :\ a_{1}=a_{2}=a_{3}$
\end{itemize}
\end{framed}

In the context of \textsc{Stable Marriage} with indifference, a matching $M$ is called {\em weakly stable} if there is no pair of agents each of whom strictly prefers the other to its matched partner. In addition, $M$ is called {\em super stable} if there is no pair of agents each of whom either strictly prefers the other to its partner or is indifferent between them. Moreover, $M$ is called {\em strongly stable} if there is no pair of agents such that the first one strictly prefers the second to its partner, and the second one strictly prefers the first to its partner or is indifferent between them (see \cite{irving1994stable} for the formal definitions).

Consider the matching $p$ that satisfies $p(a_{i}) = b_{i}$ for every $i \in \lbrace 1,2,3 \rbrace$. Then, $p$ is pareto optimal for the \assign\ instance, but it is not strongly stable  for the constructed \textsc{Stable Marriage} instance since $\lbrace a_{2},b_{3} \rbrace$ is a {\em blocking pair}. Suppose the matching $q$ that satisfies $q(a_{1})=b_{2}$, $q(a_{2}) = b_{1}$ and $q(a_{3}) = b_{3}$. Then, $q$ is not pareto optimal for the \assign\ instance because it admits the trading cycle $(a_{1},b_{2},a_{2},b_{1})$, but it is weakly stable and even super stable for the \textsc{Stable Marriage} instance. Thus, we conclude that the problems are different from each other.

We first give some well-known characterizations of assignments, and after that we introduce new concepts of optimality and three new multi-layered assignment problems. 

Intuitively, an assignment admits a {\em trading cycle} if there exists a set of agents who all benefit by exchanging their allocated items among themselves. For example, a simple trading cycle among two agents $a$ and $b$ occurs when agent $a$ prefers agent $b$'s item over its own item, and agent $b$ prefers agent $a$'s item over its own item. Both $a$ and $b$ would benefit from exchanging their items. Formally, a trading cycle is defined as follows.

\begin{definition}
An assignment $p$ admits a {\em trading cycle} $(a_{i_{0}},b_{j_{0}},a_{i_{1}},b_{j_{1}},\ldots,a_{i_{k-1}},b_{j_{k-1}})$ if for each $r \in \lbrace 0,\ldots,k-1 \rbrace$, we have that $p(a_{i_r})=b_{j_r}$ and $b_{j_r} <_{a_{i_{r}}} b_{j_{r+1 \Mod{k}}}$.
\end{definition}

\begin{definition}
An assignment $p$ admits a {\em self loop} if there exist an agent $a_{i}$ and an item $b_j$ such that $b_{j}$ is not allocated to any agent by $p$, and $p(a_{i}) <_{a_{i}} b_{j}$.
\end{definition}

\begin{proposition}[Folklore; see, e.g., Aziz et al.~\cite{ijcai2017-12,soton425734}]
\label{prop:po-iff-no-tc-and-sl}
An assignment $p$ is pareto optimal if and only if it does not admit trading cycles and self loops.
\end{proposition}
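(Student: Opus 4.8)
The plan is to prove the biconditional in \Prcref{prop:po-iff-no-tc-and-sl} by establishing both contrapositives, since each direction is most naturally phrased as ``if $p$ admits a trading cycle or a self loop, then $p$ is not pareto optimal'' and ``if $p$ is not pareto optimal, then $p$ admits a trading cycle or a self loop''. For the first (easy) direction, I would suppose $p$ admits a trading cycle $(a_{i_0},b_{j_0},\ldots,a_{i_{k-1}},b_{j_{k-1}})$ and construct an explicit assignment $q$ that agrees with $p$ everywhere except on the agents of the cycle, where I rotate the items by setting $q(a_{i_r}) = b_{j_{r+1 \Mod{k}}}$. By the defining inequality $b_{j_r} <_{a_{i_r}} b_{j_{r+1 \Mod{k}}}$, every agent on the cycle strictly improves and all others are unchanged, so $q$ pareto dominates $p$. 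I must check that $q$ is a legal assignment: the rotated items are exactly the items already allocated along the cycle, so no item is used twice and each reassigned item is acceptable to its new owner (it was strictly preferred, hence appears in that agent's list). The self-loop case is even simpler: if $b_j$ is unallocated and $p(a_i) <_{a_i} b_j$, then reassigning $a_i$ to $b_j$ and leaving everyone else fixed yields a dominating $q$.

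For the converse (the harder direction), I would assume $p$ is pareto dominated by some assignment $q$ and deduce the existence of a trading cycle or self loop. The key combinatorial idea is to look at the set $S$ of agents that strictly improve under $q$, i.e.\ $S = \{a_i : p(a_i) <_{a_i} q(a_i)\}$, which is nonempty by the second condition of pareto domination, and to trace how $q$ shuffles the items among these agents. Starting from any $a_{i_0} \in S$, the item $q(a_{i_0})$ is strictly preferred to $p(a_{i_0})$, so in particular $q(a_{i_0}) \neq b_{\emptyset}$. Either $q(a_{i_0})$ is unallocated by $p$ (no agent $a$ has $p(a) = q(a_{i_0})$), in which case we immediately obtain a self loop at $a_{i_0}$; or $q(a_{i_0}) = p(a_{i_1})$ for a unique agent $a_{i_1}$, and since $p(a_{i_1}) = q(a_{i_0})$ is strictly better for $a_{i_1}$ than $p(a_{i_1})$ would require comparison—here I instead track that $a_{i_1}$ also gives up the item $p(a_{i_1})$ to $a_{i_0}$ under $q$.

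The main obstacle will be making this tracing argument rigorous: I must show the process of following items backwards (from $a_{i_0}$ to the agent $a_{i_1}$ who held $q(a_{i_0})$ under $p$, then to the agent holding $q(a_{i_1})$, and so on) is well defined and must eventually close into a cycle or terminate at an unallocated item. The clean way to organize this is to define a functional digraph on the agents: draw an arc from $a_i$ to the agent $a_{i'}$ with $p(a_{i'}) = q(a_i)$ whenever such $a_{i'}$ exists. Because each agent in $S$ receives under $q$ an item that is strictly better than its $p$-item, and because $q$ is injective on items, following arcs out of a vertex of $S$ either reaches a vertex whose $q$-item is $p$-unallocated (giving a self loop, after checking the strict-preference inequality propagates) or enters a directed cycle. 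I would argue that along such a cycle every agent receives a strictly preferred item, so restricting to the cycle yields precisely a trading cycle in the sense of the definition. The delicate points to handle carefully are the finiteness of the traversal (guaranteed since there are finitely many agents and $q$ is injective, so no item repeats along a path), and verifying that the agents appearing on the detected cycle all satisfy the strict inequality $p(a_{i_r}) <_{a_{i_r}} b_{j_{r+1}}$ rather than mere weak preference, which follows from the domination condition $p(a_i) \leq_{a_i} q(a_i)$ for all $i$ together with the fact that the cycle consumes only items reassigned strictly upward.
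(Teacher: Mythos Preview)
The paper does not supply its own proof of this proposition; it is stated as folklore with a citation to Aziz et al. Your argument is the standard one and is correct in both directions.

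One small point worth tightening in your converse direction: the sentence beginning ``and since $p(a_{i_1}) = q(a_{i_0})$ is strictly better for $a_{i_1}$ than $p(a_{i_1})$ would require comparison'' is garbled, but the underlying step is fine and can be stated cleanly. If $a_{i_r}$ strictly improves and $q(a_{i_r}) = p(a_{i_{r+1}})$ for some agent $a_{i_{r+1}}$, then $p(a_{i_{r+1}}) \neq b_{\emptyset}$, and injectivity of $q$ on real items gives $q(a_{i_{r+1}}) \neq q(a_{i_r}) = p(a_{i_{r+1}})$; together with the weak domination $p(a_{i_{r+1}}) \leq_{a_{i_{r+1}}} q(a_{i_{r+1}})$ this yields strict improvement, so $a_{i_{r+1}} \in S$. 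This is exactly what makes the trace stay inside $S$ and guarantees that the cycle you eventually detect consists entirely of strictly improving agents, hence is a trading cycle in the paper's sense.
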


By this proposition, the problem of checking whether an assignment admits trading cycles or self loops can be reduced to the problem of checking whether the directed graph defined next contains cycles. For an instance $(A,I,P)$ and an assignment $p$, the corresponding {\em trading graph} is the directed graph defined as follows. Its vertex set is $A \cup I$, and there are three types of edges:
\begin{itemize}
\item For each $a \in A$ such that $p(a) \neq b_{\emptyset}$, there is a directed edge from $p(a)$ to $a$. Namely, each allocated item points to its owner.
\item For each agent $a \in A$, there is an edge from $a$ to all the items it prefers over its assigned item $p(a)$ (if $p(a)=b_{\emptyset}$, $a$ points to all its acceptable items).
\item Each item with no owner points to all the agents that accept it.
\end{itemize} 

\begin{proposition}[Folklore; see, e.g., Aziz et al.~\cite{ijcai2017-12,soton425734}]\label{prop:po-iff-no-cycles-in-td}
An assignment $p$ is pareto optimal if and only if its corresponding trading graph does not contain cycles.
\end{proposition}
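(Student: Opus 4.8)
The plan is to derive the statement from \Prcref{prop:po-iff-no-tc-and-sl}, which already characterizes pareto optimality through the absence of trading cycles and self loops. It therefore suffices to show that the trading graph contains a directed cycle if and only if $p$ admits a trading cycle or a self loop; combining this equivalence with \Prcref{prop:po-iff-no-tc-and-sl} immediately yields the claim.

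First I would record the structural observation that drives everything: every edge of the trading graph joins an agent to an item, with the agent-to-item edges being exactly the preference edges (type~2) and the item-to-agent edges being exactly the ownership edges (type~1) together with the unallocated-item edges (type~3). Consequently the graph is bipartite between $A$ and $I$, every directed cycle alternates between agents and items, and, crucially, each item vertex has all its incoming edges of type~2 and all its outgoing edges of a single kind, namely type~1 if the item is allocated and type~3 if it is not.

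For the forward direction (a trading cycle or self loop produces a cycle in the graph) I would translate the definitions directly. Given a trading cycle $(a_{i_{0}},b_{j_{0}},\ldots,a_{i_{k-1}},b_{j_{k-1}})$, the condition $b_{j_{r}} <_{a_{i_{r}}} b_{j_{r+1 \Mod{k}}}$ gives a type~2 edge $a_{i_{r}} \to b_{j_{r+1}}$, while $p(a_{i_{r+1}})=b_{j_{r+1}}$ gives a type~1 edge $b_{j_{r+1}} \to a_{i_{r+1}}$; chaining these yields a directed cycle through the $a_{i_{r}}$'s and $b_{j_{r}}$'s. Similarly, a self loop $(a_{i},b_{j})$ yields a $2$-cycle $a_{i} \to b_{j} \to a_{i}$ using the type~2 edge (since $p(a_{i}) <_{a_{i}} b_{j}$) and the type~3 edge (since $b_{j}$ is unallocated and accepted by $a_{i}$).

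The converse is where the real case analysis lies, and I expect it to be the main, though still mild, obstacle. Take any directed cycle $C$; by the structural observation it alternates agents and items. If \emph{every} item on $C$ is allocated, then every item-to-agent edge of $C$ is of type~1, so each such item equals the assigned item of the agent it points to; reading $C$ as $a_{i_{0}} \to p(a_{i_{1}}) \to a_{i_{1}} \to p(a_{i_{2}}) \to \cdots$ and setting $b_{j_{r}}=p(a_{i_{r}})$ recovers exactly the defining inequalities of a trading cycle, and here one must check that no $p(a_{i_{r}})$ equals $b_{\emptyset}$ --- which holds precisely because the incoming edge of each agent on $C$ must be of type~1 (type~3 requires an unallocated item). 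If instead some item $b_{j}$ on $C$ is unallocated, then its necessarily type~2 incoming edge comes from an agent $a'$ with $p(a') <_{a'} b_{j}$, so the pair $(a',b_{j})$ is by definition a self loop. In either case $p$ admits a trading cycle or a self loop, which establishes the equivalence and hence, via \Prcref{prop:po-iff-no-tc-and-sl}, the proposition.
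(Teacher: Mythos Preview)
The paper does not actually prove this proposition; it is stated as folklore with a citation to Aziz et al., so there is no in-paper argument to compare against. Your proof is correct and is exactly the standard way one would justify the statement: reduce to \Prcref{prop:po-iff-no-tc-and-sl} by showing that cycles in the trading graph correspond precisely to trading cycles (when all traversed items are allocated) or to self loops (when some traversed item is unallocated), using the bipartite alternation and the dichotomy between type~1 and type~3 outgoing edges at item vertices.
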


\bigskip
\noindent\textbf{Example.} Suppose that $A = \lbrace a_{1},a_{2},a_{3},a_{4},a_{5} \rbrace$ and $I = \lbrace b_{1},b_{2},b_{3},b_{4},b_{5} \rbrace$. Assume that the preferences of the agents are defined as follows.

\begin{framed}[.5\textwidth]
\begin{itemize}
\vspace{-0.5em}
  \item $a_{1}\ $: $\ b_{4}\ >\ b_{1}\ >\ b_{2}\ >\ b_{5}$
  \item $a_{2}\ $: $\ b_{1}\ >\ b_{4}\ >\ b_{5}$
  \item $a_{3}\ $: $\ b_{2}\ >\ b_{1}$
  \item $a_{4}\ $: $\ b_{3}\ >\ b_{5}$
  \item $a_{5}\ $: $\ b_{5}$
\end{itemize}
\vspace{-0.4em}
\end{framed}

Let $p: A \rightarrow I \cup \lbrace b_{\emptyset} \rbrace$ be an assignment such that $p(a_{1}) = b_{2}$, $p(a_{2}) = b_{4}$, $p(a_{3}) = b_{1}$, $p(a_{4}) = b_{5}$, and $p(a_{5}) = b_{\emptyset}$.
The trading graph of the preference profile with respect to $p$ is:

\begin{center}
\tikzfig{Figures/P5Example}
\end{center}

Observe that agents $a_{1}$, $a_{2}$ and $a_{3}$ admit the trading cycle $(a_{1},b_{2},a_{2},b_{4},a_{3},b_{1})$, and that agent $a_{4}$ admits a self loop with $b_{3}$. By \Prcref{prop:po-iff-no-cycles-in-td}, $p$ is not pareto optimal. If $a_{1}$, $a_{2}$, and $a_{3}$ exchange their items, $a_{4}$ gets $b_{3}$, and $a_{5}$ gets $b_{5}$, we have a pareto optimal assignment $q$ in which $q(a_{1}) = b_{4}$, $q(a_{2}) = b_{1}$, $q(a_{3}) = b_{2}$, $q(a_{4}) = b_{3}$ and $q(a_{5}) = b_{5}$.

A simple assignment mechanism is the greedy {\em serial dictatorship} mechanism. For a given permutation over the agents, the mechanism takes agents in turns, one in each turn, according to the permutation. That is, the agent which is ordered first allocates its most preferred item, the second allocates its most preferred item among the remaining items, and so on. If at some point, an agent has no available item to allocate in its preference list, it allocates $b_{\emptyset}$. We say that an assignment $p$ is a {\em possible outcome} of serial dictatorship if there exists a permutation $\pi$ such that applying serial dictatorship with respect to $\pi$ results in $p$.

\begin{proposition}[Abdulkadiroglu and Sönmez \cite{RePEc:ecm:emetrp:v:66:y:1998:i:3:p:689-702}] \label{prop:po-iff-sd} 
An assignment $p$ is pareto optimal if and only if it is a possible outcome of serial dictatorship.
\end{proposition}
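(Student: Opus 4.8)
The plan is to prove both directions of this classical equivalence between pareto optimality and being a possible outcome of serial dictatorship, leaning on \Prcref{prop:po-iff-no-tc-and-sl} to characterize pareto optimality via the absence of trading cycles and self loops.

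\textbf{Serial dictatorship outcomes are pareto optimal.} First I would show that if $p$ is produced by serial dictatorship with respect to some permutation $\pi$, then $p$ is pareto optimal. Suppose toward a contradiction that some assignment $q$ pareto dominates $p$. Consider the agents in the order given by $\pi$, and let $a_{i}$ be the \emph{first} agent in this order for which $p(a_{i}) <_{a_{i}} q(a_{i})$ (such an agent exists by the second condition of Definition~\ref{def:poAssignment}). By minimality, every agent appearing before $a_{i}$ in $\pi$ receives the same item under $q$ as under $p$ (since $q$ weakly dominates $p$ by the first condition, and strict improvement is ruled out by the choice of $a_{i}$). Hence when serial dictatorship reaches $a_{i}$, the items $q(a_{i})$ is more preferred than $p(a_{i})$ and was still available, because $q(a_{i})$ is allocated under $q$ either to $a_{i}$ or to an agent after $a_{i}$ in $\pi$, and no earlier agent took it. This contradicts the greedy choice of serial dictatorship, which would have assigned $a_{i}$ its most preferred \emph{available} item. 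Therefore no dominating $q$ exists and $p$ is pareto optimal.

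\textbf{Pareto optimal assignments arise from serial dictatorship.} For the converse, I would start from a pareto optimal $p$ and construct a permutation $\pi$ witnessing it. By \Prcref{prop:po-iff-no-cycles-in-td}, the trading graph of $p$ is acyclic, so I can take a topological order of its vertices. The idea is to extract from this order an ordering on the agents so that each agent, when processed, still finds its assigned item $p(a_{i})$ available and indeed prefers it most among the remaining items. Concretely, I would argue that an agent $a$ whose assigned item $p(a)$ is a ``source-like'' vertex (no agent strictly prefers an item that $a$ currently blocks) can safely be placed first, peel it off together with its item, and recurse on the remaining instance, whose trading graph is still acyclic. Formally this amounts to ordering the agents by a topological sort of the trading graph and checking that serial dictatorship with this $\pi$ reproduces $p$: acyclicity guarantees that at each step the next agent's assigned item is available and is its top remaining choice, since any more-preferred available item would create an incoming edge forcing an earlier position or a cycle.

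\textbf{Main obstacle.} The delicate part is the converse direction: making the topological-order argument fully rigorous requires carefully tracking that removing an agent together with its item preserves acyclicity and that ``most preferred among remaining items'' is exactly enforced by the edge structure, including the treatment of unallocated items and self loops (an agent with $p(a)=b_{\emptyset}$ or with an empty remaining list must allocate $b_{\emptyset}$, which is consistent only because pareto optimality forbids self loops). I expect the bookkeeping around $b_{\emptyset}$ and unowned items, rather than any deep idea, to be where the proof needs the most care.
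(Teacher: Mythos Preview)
The paper does not prove \Prcref{prop:po-iff-sd}; it is stated as a known result of Abdulkadiroglu and S\"onmez and used as a black box, so there is no in-paper argument to compare your proposal against. Your outline follows the standard proof and the forward direction is clean and correct.

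For the converse, your peeling idea via acyclicity of the trading graph is the right one, but the description of which agent to place first is inverted. The agent you can safely place \emph{first} is one for whom $p(a)$ is its overall top choice, i.e., an agent with no outgoing edges in the trading graph (a sink). Your parenthetical ``no agent strictly prefers an item that $a$ currently blocks'' instead describes an agent whose item $p(a)$ is not envied by anyone (a source); such an $a$ may still strictly prefer some other allocated item and would grab it if it went first, so serial dictatorship would not reproduce $p$. The fix is immediate: either peel off agent-sinks first (equivalently, order agents by the reverse topological order), or keep the un-envied characterization but append such agents at the \emph{end} of $\pi$ and build the permutation backwards. Acyclicity guarantees the required agent exists at every step, and the $b_{\emptyset}$ bookkeeping you flag goes through precisely because pareto optimality rules out self loops, so every item an unassigned agent accepts is already taken.
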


\begin{corollary}\label{corollary:number-of-po-assign}
A pareto optimal assignment always exists and can be found in polynomial time, and the number of pareto optimal assignments for an instance with $n$ agents is at most $n!$. 
\end{corollary}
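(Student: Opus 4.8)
The plan is to derive \Cocref{corollary:number-of-po-assign} as a direct consequence of \Prcref{prop:po-iff-sd}, which establishes the equivalence between pareto optimal assignments and possible outcomes of serial dictatorship. The three claims of the corollary --- existence, polynomial-time computability, and the upper bound of $n!$ --- all follow from analyzing the serial dictatorship mechanism.

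First I would establish existence and polynomial-time computability together. Fix any permutation $\pi$ over the $n$ agents; for instance, the identity permutation $\pi = (a_1,\ldots,a_n)$. Running serial dictatorship with respect to $\pi$ is a well-defined deterministic procedure: in each of the $n$ turns we scan one agent's preference list (of length at most $m$) to find its most preferred item among those not yet allocated, and assign it (or $b_{\emptyset}$ if no acceptable item remains). This clearly terminates and produces a legal assignment $p$ in time polynomial in the input size. By \Prcref{prop:po-iff-sd}, any possible outcome of serial dictatorship is pareto optimal, so $p$ is pareto optimal; this simultaneously shows that a pareto optimal assignment exists and that one can be computed in polynomial time.

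For the counting bound, I would argue that every pareto optimal assignment arises as the outcome of serial dictatorship for \emph{some} permutation $\pi$ (again by \Prcref{prop:po-iff-sd}, in the forward direction). Hence the set of pareto optimal assignments is contained in the image of the map that sends a permutation $\pi \in S_n$ to the assignment produced by serial dictatorship under $\pi$. Since there are exactly $n!$ permutations of the agents, the image of this map --- and therefore the set of pareto optimal assignments --- has size at most $n!$.

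The only subtlety, and the point I would state carefully rather than treat as a genuine obstacle, is the direction of the inclusion in the counting argument: the bound relies on the fact that \emph{every} pareto optimal assignment is the outcome of \emph{at least one} permutation, so that the surjection from $S_n$ onto (a superset of) the pareto optimal assignments is what yields the inequality. The map need not be injective --- distinct permutations may yield the same outcome --- but this only makes the number of distinct pareto optimal assignments smaller, so the $n!$ upper bound is unaffected. No further computation is required, as all three statements reduce immediately to \Prcref{prop:po-iff-sd}.
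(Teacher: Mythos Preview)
Your proof is correct and follows essentially the same approach as the paper: both derive all claims directly from \Prcref{prop:po-iff-sd} by viewing serial dictatorship as a surjection from the $n!$ permutations onto the set of pareto optimal assignments. You are more explicit than the paper about the existence and polynomial-time parts, and you omit the paper's additional remark that the $n!$ bound is tight, but neither difference affects correctness.
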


\Prcref{prop:po-iff-sd} yields a surjective mapping from the set of permutations on the agents to the set of pareto optimal assignments. This implies an upper bound of $n!$ on the number of pareto optimal assignments. Observe that this bound is tight: consider an instance where there is an equal number of agents and items, and all the agents share the same complete preference list. Observe that each permutation gives us a unique assignment after applying serial dictatorship with respect to it. Thus, there exist exactly $n!$ pareto optimal assignments.

\smallskip
\noindent\textbf{Generalization of the Assignment Problem.} We introduce a generalized version of the \assign\ problem where there are $\ell$ layers of preferences. For each $j \in [\ell]$, we refer to $<_{a_{i}}^{(j)}$ as $a_{i}$'s preference list in layer $j$. The {\em preference profile in layer $j$} is the collection of all the agents' preference lists in layer $j$, namely, $P_{j}=(<_{a_{1}}^{(j)},\ldots,<_{a_{n}}^{(j)})$. We say that assignment $p$ is {\em pareto optimal in layer $j$} if it is pareto optimal in the single-layered instance $(A,I,P_{j})$. To adapt the notion of optimality to the new context, we introduce a natural generalization requiring an assignment to be optimal in a given number of layers.

\begin{definition} \label{def:alpha-globally-optimal}
An assignment $p$ is {\em $\alpha$-globally optimal} for an instance $(A,I,P_{1},\ldots,P_{\ell})$ if there exist $\alpha$ layers $i_{1},\ldots ,i_{\alpha} \in [\ell]$ such that $p$ is pareto optimal in layer $i_{j}$ for each $j \in [\alpha]$.
\end{definition}

Thus, the new problem is defined as follows.

\defproblem {\textsc{$\alpha$-Globally Optimal Assignment}}{$(A,I,P_{1},\ldots,P_{\ell},\alpha)$, where $A$ is a set of $n$ agents, $I$ is a set of $m$ items, $P_{i}$ is the preference profile in layer $i$ for each $i \in [\ell]$, and $\alpha \in [\ell]$.}{Does an $\alpha$-globally optimal assignment exist?}

{\it Example.}
Consider the following instance, where the agent set is $A= \lbrace a_{1},a_{2},a_{3},a_{4}\rbrace$, the item set is $I = \lbrace b_{1},b_{2},b_{3},b_{4} \rbrace$, and there are four layers, defined as follows.

Layer 1:
\begin{framed}[.35\textwidth]
\begin{itemize}
\item $a_{1}\ $: $\ b_{1}$
\item $a_{2}\ $: $\ b_{3}> b_{2}>b_{1}$
\item $a_{3}\ $: $\ b_{3}>b_{1}$
\item $a_{4}\ $: $\ b_{2}>b_{1}>b_{3}$
\end{itemize}
\end{framed} 
Layer 2:
\begin{framed}[.35\textwidth]
\begin{itemize}
\item $a_{1}\ $: $\ b_{2} > b_{1}$
\item $a_{2}\ $: $\ b_{2}> b_{3}$
\item $a_{3}\ $: $\ b_{1}>b_{2}>b_{3}$
\item $a_{4}\ $: $\ b_{3}$
\end{itemize}
\end{framed} 
Layer 3:
\begin{framed}[.35\textwidth]
\begin{itemize}
\item $a_{1}\ $: $\ b_{2} > b_{1}$
\item $a_{2}\ $: $\ b_{4}> b_{2}>b_{1}$
\item $a_{3}\ $: $\ b_{1} > b_{3}$
\item $a_{4}\ $: $\ b_{2}>b_{1}>b_{3}$

\end{itemize}
\end{framed}
Layer 4:
\begin{framed}[.35\textwidth]
\begin{itemize}
\item $a_{1}\ $: $\ b_{3}>b_{1}>b_{2}$
\item $a_{2}\ $: $\ b_{1}> b_{2}$
\item $a_{3}\ $: $\ b_{2}>b_{3}$
\item $a_{4}\ $: $\ \emptyset$
\end{itemize}
\end{framed}

\begin{figure}[t]
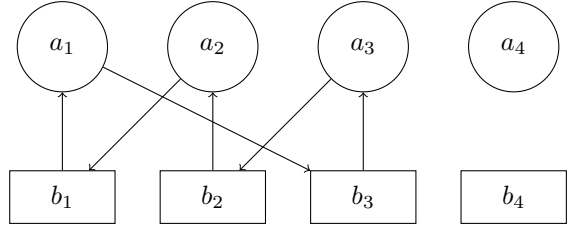

\center
\tikzfig{Figures/3}
\caption{The trading graph with respect to $p$ in the fourth layer, where $p$ admits the trading cycle $(a_{1},b_{1},a_{3},b_{3},a_{2},b_{2})$.}
\label{fig:example}
\end{figure}

Consider an assignment $p$ in which $a_{i}$ gets $b_{i}$ for every $i \in \lbrace 1,2,3 \rbrace$, and $a_{4}$ gets $b_{\emptyset}$. The assignment is $2$-globally optimal since it is pareto optimal in the first two layers. To see this, apply serial dictatorship in the first layer with respect to the permutation $(a_{1},a_{3},a_{2},a_{4})$, and in the second layer with respect to the permutation $(a_{2},a_{1},a_{3},a_{4})$, and verify that $p$ is the outcome of both runs. In contrast, $p$ is not pareto optimal in the third layer since it admits a self loop among $a_{2}$ and $b_{4}$ ($b_{4}$ is available and is preferred by $a_{2}$ over its assigned item $b_{2}$). Furthermore, assignment $p$ is not pareto optimal in the fourth layer because it admits a trading cycle $(a_{1},b_{1},a_{3},b_{3},a_{2},b_{2})$ (see Figure \ref{fig:example}): if $a_{1},a_{2}$ and $a_{3}$ trade their items, we get a new assignment $q$ in which $a_{1}$ gets $b_{3}$, $a_{2}$ gets $b_{1}$, and $a_{3}$ gets $b_{2}$; assignment $q$ pareto dominates $p$ in the fourth layer, and we can verify that it is also pareto optimal in this layer.      

\smallskip
\noindent\textbf{Parameterized Complexity.} Let $\Pi$ be an \NPH\ problem. In the framework of Parameterized Complexity, each instance of $\Pi$ is associated with a {\em parameter} $k$. Here, the goal is to confine the combinatorial explosion in the running time of an algorithm for $\Pi$ to depend only on $k$. Formally, we say that $\Pi$ is {\em fixed-parameter tractable (\FPT)} if any instance $(I, k)$ of $\Pi$ is solvable in time $f(k)\cdot |I|^{\OO(1)}$, where $f$ is an arbitrary computable function of $k$. A weaker request is that for every fixed $k$, the problem $\Pi$ would be solvable in polynomial time. Formally, we say that $\Pi$ is {\em slice-wise polynomial (\XP)} if any instance $(I, k)$ of $\Pi$ is solvable in time $f(k)\cdot |I|^{g(k)}$, where $f$ and $g$ are arbitrary computable functions of $k$. Nowadays, Parameterized Complexity supplies a rich toolkit to design \FPT\ and \XP\ algorithms.

Parameterized Complexity also provides methods to show that a problem is unlikely to be \FPT. The main technique is the one of parameterized reductions analogous to those employed in classical complexity. Here, the concept of \WO-hardness replaces the one of \NP-hardness, and for reductions we need not only construct an equivalent instance in \FPT\ time, but also ensure that the size of the parameter in the new instance depends only on the size of the parameter in the original one.

\begin{definition}\label{definition:parameterized-reduction}({\bf Parameterized Reduction})
Let $\Pi$ and $\Pi'$ be two parameterized problems. A {\em parameterized reduction} from $\Pi$ to $\Pi'$ is an algorithm that, given an instance $(I,k)$ of $\Pi$, outputs an instance $(I',k')$ of $\Pi'$ such that:
\begin{itemize}
\item $(I,k)$ is a \yes-instance\ of $\Pi$ if and only if $(I',k')$ is a \yes-instance\ of $\Pi'$.
\item $k' \leq g(k)$ for some computable function $g$.
\item The running time is $f(k) \cdot |\Pi|^{\OO(1)}$ for some computable function $f$.
\end{itemize}
\end{definition}

If there exists such a reduction transforming a problem known to be \WOH\ to another problem $\Pi$, then the problem $\Pi$ is \WO-hard as well. Central \WOH-problems include, for example, deciding whether a nondeterministic single-tape Turing machine accepts within $k$ steps, {\sc Clique} parameterized be solution size, and {\sc Independent Set} parameterized by solution size. To show that a problem $\Pi$ is not \XP\ unless \textsf{P}=\NP, it is sufficient to show that there exists a fixed $k$ such $\Pi$ is \NPH. Then, the problem is said to be \paraH.

A companion notion to that of fixed-parameter tractability is the one of a polynomial kernel. Formally, a parameterized problem $\Pi$ is said to admit a {\em polynomial compression} if there exists a (not necessarily parameterized) problem $\Pi'$ and a polynomial-time algorithm that given an instance $(I,k)$ of $\Pi$, outputs an equivalent instance $I'$ of $\Pi'$ (that is, $(I,k)$ is a \yes-instance of $\Pi$ if and only if $I'$ is a \yes-instance of $\Pi'$) such that $|I'|\leq p(k)$ where $p$ is some polynomial that depends only on $k$. In case $\Pi'=\Pi$, we further say that $\Pi$ admits a {\em polynomial kernel}. For more information on Parameterized Complexity, we refer the reader to recent books such as \cite{DBLP:series/txcs/DowneyF13,DBLP:books/sp/CyganFKLMPPS15,fomin2019kernelization}.

\smallskip
\noindent\textbf{Non-Existence of a Polynomial Compression.} Our proof of the ``unlikely existence'' of a polynomial kernel for \goassign\ relies on the well-known notion of cross-composition, defined as follows.

\begin{definition} [{\bf Cross-Composition}]\label{def:cross-comp} A (not parameterized) problem $\Pi$ {\em cross-composes} into a parameterized problem $\Pi'$ if there exists a polynomial-time algorithm, called a {\em cross-composition}, that given instances $I_1,I_2,\ldots,I_t$ of $\Pi$ for some $t \in \mathbb{N}$ that are of the same size $s$ for some $s\in\mathbb{N}$, outputs an instance $(I,k)$ of $\Pi'$ such that the following conditions are satisfied.
\begin{itemize}
\item $k\leq p(s + \log{t})$ for some polynomial $p$.
\item $(I,k)$ is a \yes-instance of $\Pi'$ if and only if at least one of the instances $I_1,I_2,\ldots,I_t$ is a \yes-instance of $\Pi$.
\end{itemize}
\end{definition}

\begin{proposition}[\cite{DBLP:journals/jcss/BodlaenderDFH09,DBLP:journals/siamdm/BodlaenderJK14}]\label{prop:noKern}
Let $\Pi$ be an \NP-hard (not parameterized) problem that cross-composes into a parameterized problem $\Pi'$. Then, $\Pi'$ does not admit a polynomial compression, unless \NP$\subseteq $\coNPpoly. 
\end{proposition}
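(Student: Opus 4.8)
The plan is to prove the contrapositive in spirit: assuming $\Pi'$ \emph{does} admit a polynomial compression, I would derive the (conjecturally false) collapse $\NP\subseteq\coNPpoly$. Suppose such a compression exists, say a polynomial-time algorithm $\mathcal{C}$ mapping each instance $(I,k)$ of $\Pi'$ to an equivalent instance of some problem $R$ of size at most $p_1(k)$. The first step is to \emph{stack} the cross-composition on top of the compression. Given $t$ instances $I_1,\dots,I_t$ of $\Pi$ of common size $s$, the cross-composition produces an instance $(I,k)$ of $\Pi'$ with $k\le p_2(s+\log t)$ that is a \yes-instance\ iff some $I_j$ is; applying $\mathcal{C}$ then yields an $R$-instance $J$ of size at most $p_1(p_2(s+\log t))$ with $J\in R$ iff $(I,k)\in\Pi'$ iff $\exists j:\ I_j\in\Pi$. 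Composing these two polynomial-time maps gives a single polynomial-time algorithm that takes $t$ same-size instances of $\Pi$ and outputs one instance of $R$ whose membership encodes their logical OR, and whose size is polynomial in $s+\log t$.

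The crucial quantitative point is that this output size grows only \emph{polylogarithmically} in the number $t$ of composed instances: as long as $t\le 2^{s}$, say, the bound $p_1(p_2(s+\log t))$ is polynomial in $s$ alone, independent of $t$. This is exactly the defining property of an \emph{OR-distillation} of $\Pi$ into $R$. Instances of differing sizes are handled by the standard bucketing argument: partition the inputs according to their length, of which only polynomially many values are relevant, distill within each bucket, and take the OR over buckets. Thus a polynomial compression of $\Pi'$, together with the cross-composition, manufactures an OR-distillation of the \NP-hard problem $\Pi$.

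The final and technically deepest step is to invoke the distillation lower bound: an \NP-hard problem cannot admit an OR-distillation unless $\NP\subseteq\coNPpoly$ (equivalently, unless the polynomial hierarchy collapses). I would derive this through the \emph{complementary witness} argument of Fortnow and Santhanam: using the distillation one shows that the complement $\overline{\Pi}$ lies in $\NP/\mathrm{poly}$, and since $\Pi$ is \NP-hard this forces $\coNP\subseteq\NP/\mathrm{poly}$, i.e. $\NP\subseteq\coNPpoly$. Concretely, for inputs of length $s$ one fixes a polynomial-size family of tuples (the advice) such that every \no-instance\ can be certified by exhibiting a tuple in which it is embedded and whose distilled image is a \no-instance\ of $R$, while every \yes-instance\ provably fails this test; the existence of such a small certifying family is established by a counting/covering argument exploiting that the distilled images live in a space too small to separate more than polynomially many inputs.

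I expect this counting/covering argument — producing the polynomial advice that lets short nondeterministic certificates verify membership in $\overline{\Pi}$ — to be the main obstacle; it is the genuine content of the Fortnow--Santhanam theorem (as refined by Bodlaender--Jansen--Kratsch and Dell--van Melkebeek to the cross-composition formulation), whereas the stacking of the compression onto the cross-composition in the earlier steps is routine bookkeeping of polynomial size bounds. In a self-contained write-up I would most likely cite this distillation theorem as a black box; everything specific to the proposition is the observation that compression $+$ cross-composition $=$ OR-distillation.
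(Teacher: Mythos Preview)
The paper does not prove this proposition at all: it is stated with citations to \cite{DBLP:journals/jcss/BodlaenderDFH09,DBLP:journals/siamdm/BodlaenderJK14} and used as a black box. Your sketch is the standard argument behind those cited results (compression composed with cross-composition yields an OR-distillation, and the Fortnow--Santhanam complementary-witness argument then forces $\NP\subseteq\coNPpoly$), so it is correct in outline and matches what the cited papers do; there is nothing in the present paper to compare it against.
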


To obtain (essentially) tight conditional lower bounds for the running times of algorithms, we rely on the well-known {\em Exponential-Time Hypothesis (\ETH)} \cite{DBLP:journals/jcss/ImpagliazzoP01,DBLP:journals/jcss/ImpagliazzoPZ01,DBLP:conf/iwpec/CalabroIP09}. To formalize the statement of \ETH, first recall that  given a formula $\varphi$ in conjuctive normal form (CNF) with $n$ variables and $m$ clauses, the task of {\sc CNF-SAT} is to decide whether there is a truth assignment to the variables that satisfies $\varphi$. In the {\sc $p$-CNF-SAT} problem, each clause is restricted to have at most $p$ literals. \ETH\ asserts that {\sc 3-CNF-SAT} cannot be solved in time $\OO(2^{o(n)})$.
\section{Fixed-Parameter Tractability and ETH Based Lower Bounds}
\label{sec:part1FPT}
We first prove that \goassign\ is \FPT\ with respect to the parameter $n=\nagents$ by presenting an algorithm with running time $\OO^{*}(n!)$. Afterwards, we prove that this is essentially the best possible running time for this parameter (and even for a larger parameter) under \ETH. Let us first prove the following lemma.
\begin{lemma} \label{lemma:verification}
Let $(A,I,P_{1},\ldots,P_{\ell},\alpha)$ be an instance of \goassign, and let $p$ be an assignment. Let $G_{i}$ denote the trading graph of $P_{i}$ with respect to $p$ for each $i \in [\ell]$. Then $p$ is $\alpha$-globally optimal for the instance if and only if there exist $\alpha$ trading graphs among $G_{1}\ldots,G_{\ell}$ that contain no cycles.
\end{lemma}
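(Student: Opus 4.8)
The plan is to reduce the statement to a layer-by-layer application of the trading-graph characterization of pareto optimality, namely \Prcref{prop:po-iff-no-cycles-in-td}. The crucial observation is that, for each $i \in [\ell]$, the graph $G_i$ is \emph{by definition} exactly the trading graph of the single-layered instance $(A,I,P_i)$ taken with respect to the same assignment $p$. Consequently, \Prcref{prop:po-iff-no-cycles-in-td} applies verbatim in each layer and yields the per-layer equivalence: $p$ is pareto optimal in layer $i$ if and only if $G_i$ contains no cycles. Establishing this equivalence cleanly is the substantive step; everything else is bookkeeping over the index set $[\ell]$.

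With the per-layer equivalence in hand, I would prove both implications by unfolding Definition~\ref{def:alpha-globally-optimal}. For the forward direction, assume $p$ is $\alpha$-globally optimal. Then there exist $\alpha$ layers $i_1,\ldots,i_\alpha \in [\ell]$ in which $p$ is pareto optimal, so applying the per-layer equivalence to each of these layers shows that $G_{i_1},\ldots,G_{i_\alpha}$ are all acyclic; these are $\alpha$ trading graphs among $G_1,\ldots,G_\ell$ that contain no cycles. For the converse, suppose some $\alpha$ of the graphs, say $G_{i_1},\ldots,G_{i_\alpha}$, contain no cycles. By the per-layer equivalence (read in the other direction), $p$ is pareto optimal in each of the layers $i_1,\ldots,i_\alpha$, and hence $p$ is $\alpha$-globally optimal by Definition~\ref{def:alpha-globally-optimal}. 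The two directions are symmetric and follow immediately once the equivalence is available.

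I do not expect a genuine combinatorial obstacle here: the entire content of the lemma is already packaged inside \Prcref{prop:po-iff-no-cycles-in-td}, and the proof amounts to invoking it once per layer and then quantifying over the choice of $\alpha$ layers. The only points that require a modicum of care are (i) verifying that $G_i$ matches the hypotheses of \Prcref{prop:po-iff-no-cycles-in-td} precisely, i.e.\ that it is the trading graph of the instance $(A,I,P_i)$ with respect to $p$ and not some other graph; and (ii) keeping the convention, noted earlier in the paper, that $p$ is treated as a legal assignment in those layers where acyclicity (equivalently, pareto optimality) is being asserted. Once this indexing and legality bookkeeping is settled, the lemma follows directly.
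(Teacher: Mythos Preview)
Your proposal is correct and follows essentially the same approach as the paper: both directions are obtained by invoking \Prcref{prop:po-iff-no-cycles-in-td} layer-by-layer and then unfolding Definition~\ref{def:alpha-globally-optimal}. The paper's proof is slightly terser but identical in structure and content.
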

\begin{proof}
$(\Rightarrow)$ Assume that $p$ is $\alpha$-globally optimal. Then there exist $\alpha$ layers $i_{1},\ldots,i_{\alpha}$ such that for each $j \in [\alpha]$, $p$ is pareto optimal in $P_{i_{j}}$. By \Prcref{prop:po-iff-no-cycles-in-td}, for each $j \in [\alpha]$, $G_{i_{j}}$ does not contain cycles. 

$(\Leftarrow)$ Assume that there exist $\alpha$ layers $i_{1},\ldots,i_{\alpha}$, such that for each $j \in [\alpha]$, $G_{i_{j}}$ does not contain cycles. By \Prcref{prop:po-iff-no-cycles-in-td}, $p$ is pareto optimal in each $P_{i_{j}}$, implying that it is $\alpha$-globally optimal.\qed
\end{proof}

\begin{theorem}\label{theorem:goassign-is-fpt-num-of-agents}
There exists an $\OO^{*}(n!)$ algorithm for \goassign.
\end{theorem}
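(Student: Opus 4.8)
The plan is to exploit \Prcref{prop:po-iff-sd} to confine the search to a space of size $\OO^{*}(n!)$, rather than iterating over all (possibly exponentially many) assignments. The key observation is that since $\alpha \geq 1$, any $\alpha$-globally optimal assignment $p$ must be pareto optimal in at least one layer $j \in [\ell]$; by \Prcref{prop:po-iff-sd}, such a $p$ is then necessarily a possible outcome of serial dictatorship in layer $j$ with respect to some permutation $\pi$ of the agents. Hence every candidate assignment we ever need to examine arises as the serial dictatorship outcome of some permutation in some layer, and there are at most $\ell \cdot n!$ such outcomes (cf.\ \Cocref{corollary:number-of-po-assign}, which bounds the per-layer count by $n!$).

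Concretely, I would iterate over all $n!$ permutations $\pi$ of $A$, and for each $\pi$ and each layer $j \in [\ell]$ compute the assignment $p_{\pi,j}$ obtained by running serial dictatorship with respect to $\pi$ on the single-layer instance $(A,I,P_{j})$; this costs $\mathrm{poly}(n,m)$ time per pair. For each resulting candidate $p_{\pi,j}$, I would test whether it is $\alpha$-globally optimal by invoking the characterization of \Lecref{lemma:verification}: build the $\ell$ trading graphs $G_{1},\ldots,G_{\ell}$ of $p_{\pi,j}$ (each of size polynomial in $n+m$), detect cycles in each via a topological sort or DFS in linear time, and count the acyclic ones. If at least $\alpha$ of them are acyclic, then $p_{\pi,j}$ is $\alpha$-globally optimal and I report \yes; if no candidate passes this test, I report \no.

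For correctness, one direction is immediate: if any candidate is certified $\alpha$-globally optimal, the instance is a \yes-instance. For the converse, if the instance is a \yes-instance witnessed by some $\alpha$-globally optimal $p$, then (as noted) $p$ is pareto optimal in some layer $j$ and therefore equals $p_{\pi,j}$ for the permutation $\pi$ that generates it under serial dictatorship in that layer; thus $p$ lies among the enumerated candidates and is correctly recognized by \Lecref{lemma:verification}. The running time is $n!$ permutations times $\ell$ layers times the polynomial cost of one serial dictatorship run plus one verification, i.e.\ $n! \cdot \ell \cdot \mathrm{poly}(n,m) = \OO^{*}(n!)$. I expect no genuine obstacle beyond the single conceptual step of recognizing that \Prcref{prop:po-iff-sd} collapses the search space to serial-dictatorship outcomes; everything else is routine bookkeeping and the linear-time cycle checks justified by \Prcref{prop:po-iff-no-cycles-in-td}.
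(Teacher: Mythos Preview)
Your proposal is correct and follows essentially the same approach as the paper: enumerate all $\ell \cdot n!$ serial-dictatorship outcomes (one per permutation per layer), then verify each candidate via \Lecref{lemma:verification} by counting acyclic trading graphs. The paper's Algorithm~\ref{algorithm:goassign} is exactly this, with the cosmetic difference that it initializes the count at $1$ (for the generating layer) and checks only the remaining $\ell-1$ layers, whereas you check all $\ell$; the correctness and running-time arguments are otherwise identical.
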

\begin{proof}
We present a brute-force algorithm (described formally in Algorithm \ref{algorithm:goassign}). The algorithm enumerates all possible pareto optimal assignments for each layer, using serial dictatorship. For each assignment, it constructs the corresponding trading graphs, and checks whether there exist $\alpha$ graphs with no cycles.

\begingroup
\removelatexerror
\begin{algorithm}[t]
\DontPrintSemicolon
\SetKwData{Left}{left}\SetKwData{This}{this}\SetKwData{Up}{up}
\SetKwFunction{Union}{Union}\SetKwFunction{FindCompress}{FindCompress}
\SetKwInOut{Input}{input}\SetKwInOut{Output}{output}
\Input{An instance $(A,I,P_{1},\ldots,P_{\ell},\alpha)$}
\Output{Does an $\alpha$-globally optimal assignment exist?}
\BlankLine
 \ForEach{$i \in [\ell]$}{
	\ForEach{permutation $\pi$ on $A$}{
		apply serial dictatorship on profile $P_{i}$ with respect to $\pi$, to obtain an assignment $p$\;
		$count \longleftarrow 1$\;
		\ForEach{$j \in [\ell] \setminus \lbrace i \rbrace$}{
			$G \longleftarrow$ $P_{j}$'s trading graph with respect to $p$\;
			\If {$G$ contains no cycles}{
				$count \longleftarrow count + 1$\;
			}		
		}
		\If{$count \geq \alpha$}{
			\KwRet{\yes}\;
		}		
	}
 }
 \KwRet{\no}\;
 \caption{Algorithm for \goassign.}
 \label{algorithm:goassign}
\end{algorithm}
\endgroup

The running time of the algorithm is $\OO^{*}(n!)$, since it iterates over $\ell n!$ assignments, and for each assignment, it takes polynomial time to construct $\ell-1$ trading graphs, and to count how many of them contain no cycles.

Let us now prove that the algorithm returns \yes\ if and only if the input is a \yes-instance.

$(\Rightarrow)$ Suppose that the algorithm returns \yes. This implies that there exist a layer $i \in [\ell]$ and a permutation $\pi$ on $A$, such that serial dictatorship on $P_{i}$ with respect to $\pi$ produces an assignment $p$ satisfying that there exist $\alpha - 1$ graphs among the trading graphs of $\lbrace P_{j} \mid j \in [\ell] , j \neq i \rbrace$ with respect to $p$ which do not contain cycles. By \Prcref{prop:po-iff-sd}, $p$ is pareto optimal in $P_{i}$, and by \Lecref{lemma:verification}, it is pareto optimal in $\alpha - 1$ preference profiles among $\lbrace P_{j} \mid j \in [\ell], j \neq i \rbrace$. This implies that $p$ is $\alpha$-globally optimal.

$(\Leftarrow)$ Suppose we are dealing with a \yesinstance; then, there exists an assignment $p$, and $\alpha$ layers $i_{1},\ldots,i_{\alpha}$ in which $p$ is pareto optimal. \Prcref{prop:po-iff-sd} implies that for each of these layers $i_{j}$, there exists a permutation $\pi_{j}$ such that applying serial dictatorship on $P_{i_{j}}$ with respect to $\pi_{j}$ results in $p$. Thus, when the algorithm reaches one of the layers $i_{1},\ldots,i_{\alpha}$ and its corresponding permutation for the first time, it generates $p$. By \Lecref{lemma:verification}, the algorithm verifies correctly that it is $\alpha$-globally optimal.\qed
\end{proof}

We first give a simple lemma that will help us to design a polynomial kernel for \goassign\ with respect to the parameter $n + \ell$. 
\begin{lemma}\label{lemma:first-k}
Let $(A,I,P)$ be an instance of the \assign\ problem where $|A|= n$. Then, for any agent $a \in A$ and pareto optimal assignment, $a$ is assigned to $b_{\emptyset}$ or to one of the $n$ most preferred items in its preference list.
\end{lemma}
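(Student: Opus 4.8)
The plan is to argue by contradiction, invoking the characterization of pareto optimality through the absence of self loops given in \Prcref{prop:po-iff-no-tc-and-sl}. Suppose $p$ is a pareto optimal assignment for $(A,I,P)$ and that some agent $a \in A$ is assigned to an item $b = p(a) \in I$ that is \emph{not} among the $n$ most preferred items of $a$. Since $b$ is then ranked strictly below position $n$ in $a$'s preference list, the set $S = \lbrace b' \in I : b <_{a} b' \rbrace$ of items that $a$ strictly prefers over $b$ must contain at least $n$ items.

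The heart of the argument is a pigeonhole count on how many items of $S$ can be allocated by $p$. Since $a$ is assigned to $b \notin S$, every item of $S$ that is allocated is allocated to one of the remaining $n-1$ agents, and by the definition of an assignment distinct items go to distinct agents. Hence at most $n-1$ items of $S$ are allocated. As $\lvert S \rvert \geq n > n-1$, at least one item $b^{*} \in S$ is left unallocated by $p$.

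I would then observe that $a$ together with $b^{*}$ forms a self loop: the item $b^{*}$ is allocated to no agent, and $p(a) = b <_{a} b^{*}$. By \Prcref{prop:po-iff-no-tc-and-sl}, an assignment admitting a self loop cannot be pareto optimal, contradicting the choice of $p$. This forces $p(a) = b_{\emptyset}$ or $p(a)$ to be among the $n$ most preferred items of $a$, as claimed.

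There is no real obstacle here; the whole content is the counting step combined with the right characterization of optimality (using self loops rather than trading cycles). The only point requiring a word of care is the degenerate case in which $a$'s preference list contains at most $n$ items: there the statement holds vacuously, since every listed item is then among the top $n$, so the interesting case is precisely when $a$ lists strictly more than $n$ items, which is exactly what guarantees $\lvert S \rvert \geq n$ above.
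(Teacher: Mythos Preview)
Your argument is correct. It is, however, a genuinely different route from the paper's. The paper invokes \Prcref{prop:po-iff-sd} (the serial dictatorship characterization): it fixes a permutation $\pi$ witnessing that $p$ arises from serial dictatorship, and observes that when the mechanism reaches the $i$-th agent at most $i-1 \leq n-1$ items have been removed, so the assigned item (if any) sits within the first $i \leq n$ positions of that agent's list. You instead use \Prcref{prop:po-iff-no-tc-and-sl} and a pigeonhole count to manufacture a self loop directly. Your approach is arguably more elementary, since it only needs the easy implication ``pareto optimal $\Rightarrow$ no self loops'' rather than the full serial dictatorship equivalence; the paper's approach has the minor advantage of yielding the slightly sharper bound that the $i$-th agent in the witnessing order gets one of its top $i$ items, though this refinement is not used anywhere.
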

\begin{proof}
Let $p$ be a pareto optimal assignment for $(A,I,P)$. By \Prcref{prop:po-iff-sd}, there exists a permutation $\pi = (a_{1},\ldots,a_{n})$ on $A$ such that applying serial dictatorship with respect to $\pi$ results in $p$. Let $i \in [n]$. When the mechanism is in the $i$-th step, it has already allocated at most $i-1$ items from $I$ to agents $a_{1},\ldots,a_{i-1}$. Hence, $p(a_{i}) = b_{\emptyset}$, or $p(a_{i})$ is the most preferred item of $a_{i}$ among the remaining items in its preference list, and is ranked at position $j \leq i \leq n$.\qed
\end{proof}

\begin{figure}[t]
\center
\includegraphics[width=0.8\textwidth, angle=0]{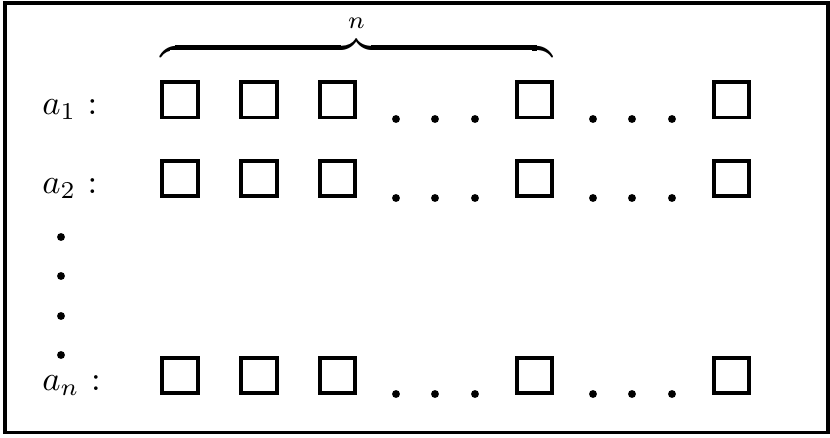}  
\caption{The kernel keeps only the $n$ most preferred items in each preference list; thus the size of the reduced instance is $\OO(\ell n ^{2})$.}
\label{fig:kernelagents}
\end{figure}
\begin{theorem}\label{theorem:poly-kernel}
\goassign\ admits a kernel of size $\OO(\ell n^{2})$.
\end{theorem}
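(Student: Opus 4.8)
The plan is to apply a single reduction rule suggested directly by \Lecref{lemma:first-k}: in every layer, truncate each agent's preference list to its $n$ most preferred items (keeping the whole list if it is shorter), and then delete every item that no longer appears in any truncated list. Formally, given $(A,I,P_{1},\ldots,P_{\ell},\alpha)$, for each layer $i \in [\ell]$ and each agent $a \in A$ I replace $<_{a}^{(i)}$ by its restriction to the top (at most) $n$ items; let $P_{i}'$ be the resulting profile and $I' \subseteq I$ the set of items occurring in some truncated list, relabelled by $[|I'|]$. The output is $(A,I',P_{1}',\ldots,P_{\ell}',\alpha)$. Since each of the $n$ agents retains at most $n$ items in each of the $\ell$ layers, we have $|I'| \leq \ell n^{2}$ and the total length of all preference lists is $\OO(\ell n^{2})$, so the reduced instance has size $\OO(\ell n^{2})$; the rule is plainly computable in polynomial time.

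The core of the proof is to show that the rule is \emph{safe}, i.e.\ that the two instances are equivalent. I will establish the stronger per-layer statement: for every layer $i \in [\ell]$, the set of pareto optimal assignments of $(A,I,P_{i})$ coincides with that of $(A,I',P_{i}')$. Granting this, equivalence of the two \goassign\ instances is immediate from Definition~\ref{def:alpha-globally-optimal}, since any assignment is then pareto optimal in exactly the same set of layers in both instances and hence $\alpha$-globally optimal for one iff for the other.

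To prove the per-layer equality I will argue that, for every fixed permutation $\pi$ on $A$, serial dictatorship with respect to $\pi$ produces the \emph{same} assignment on $P_{i}$ and on $P_{i}'$; by \Prcref{prop:po-iff-sd}, which identifies pareto optimal assignments with outcomes of serial dictatorship, this yields the desired set equality. The argument is an induction on the step number mirroring the proof of \Lecref{lemma:first-k}: at step $j$ at most $j-1 \leq n-1$ items have been allocated, so the most preferred still-available item of the acting agent lies at position $\leq j \leq n$ in its list and therefore survives truncation, and moreover every item allocated in earlier steps was likewise a top-$n$ item of its owner and hence belongs to $I'$. Consequently the two executions see the same available items among each agent's top $n$ and make identical choices at every step, so they return the same assignment.

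The only genuinely delicate point is this last coincidence of executions: one must check that the most preferred \emph{available} item in the full list equals the most preferred available item in the truncated list. This holds because the chosen item $b$ is a top-$n$ item, every item the agent ranks above $b$ has already been allocated (that is precisely why $b$ is the most preferred available one in the full list), and all those higher-ranked items are themselves in the top $n$ and so are retained after truncation; hence $b$ remains the most preferred available item. The deletion of items outside $I'$ is harmless because, by the same induction, no such item is ever allocated by serial dictatorship and none is acceptable to any agent in $P_{i}'$, so it can lie on neither a trading cycle nor a self loop and does not affect pareto optimality. Combining the per-layer equality with the size bound and the polynomial running time completes the proof.
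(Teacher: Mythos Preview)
Your proposal is correct and follows essentially the same approach as the paper: the identical truncation rule, the same size bound $|I'|\le \ell n^{2}$, and the same two ingredients (\Lecref{lemma:first-k} and \Prcref{prop:po-iff-sd}) drive the safety argument. The only cosmetic difference is packaging: the paper proves the two directions separately (forward via \Lecref{lemma:first-k} plus a domination argument, backward via serial dictatorship), whereas you prove the cleaner per-layer statement that serial dictatorship with any fixed permutation yields the same outcome on $P_i$ and $P_i'$, which immediately gives both directions at once.
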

\begin{proof}
Given an instance of \goassign\ $I_{1} = (A,I,P_{1},\ldots,P_{\ell},\alpha)$, the kernel reduces each preference profile $P_{i}$ to a preference profile $P_{i}'$ by keeping only the $n$ first-ranked items in each preference list (shown in Figure \ref{fig:kernelagents}). Let $I'$ be a set containing the items ranked in the first $n$ positions in some preference list in $I_{1}$; then, $|I'|\leq \ell n^{2}$. The resulting instance is $I_{2}=(A,I',P_{1}',\ldots,P_{\ell}',\alpha)$, and it satisfies $|I_{2}| = \OO(\ell n^{2})$. We claim that an assignment $p$ is $\alpha$-globally optimal for the instance $I_{1}$ if and only if it is $\alpha$-globally optimal for the instance $I_{2}$.

($\Rightarrow$) Let $p$ be an $\alpha$-globally optimal assignment for $I_{1}$; then, there exist $\alpha$ layers $i_{1},\ldots,i_{\alpha}$, such that for each $j \in [\alpha]$, $p$ is pareto optimal in $P_{i_{j}}$. By \Lecref{lemma:first-k}, each agent $a \in A$ is assigned an item that appears in the first $n$ items in its preference list in each $P_{i_{j}}$. Hence, $p$ assigns an acceptable item to $a$ in each $P_{i_{j}}'$. Moreover, it is pareto optimal in each $P_{i_{j}}'$, as otherwise, it would contradict $p$'s optimality in each $P_{i_{j}}$.

($\Leftarrow$) Let $p$ be an $\alpha$-globally optimal assignment for $I_{2}$. Then by \Prcref{prop:po-iff-sd}, there exist $\alpha$ profile-permutation pairs $(P_{i_{1}}',\pi_{1}),\ldots,(P_{i_{\alpha}}',\pi_{\alpha})$, such that for each $j \in [\alpha]$, applying serial dictatorship on profile $P_{i_{j}}'$ with respect to $\pi_{j}$ results in $p$. Observe that applying serial dictatorship on each $P_{i_{j}}$ with respect to the same permutation $\pi_{j}$ results also in $p$. Therefore, by \Prcref{prop:po-iff-sd}, $p$ is also $\alpha$-globally optimal for $I_{1}$.\qed
\end{proof}

\begin{corollary}
\goassign\ admits a polynomial kernel with respect to the parameter $k=n + \ell$.
\end{corollary}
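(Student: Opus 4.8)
The plan is to read off this corollary directly from \Thcref{theorem:poly-kernel}. Recall that a polynomial kernel for a parameter $k$ requires a polynomial-time self-reduction producing an equivalent instance of the \emph{same} problem whose total size is bounded by a polynomial in $k$. \Thcref{theorem:poly-kernel} already supplies precisely such a self-reduction: given an instance $I_{1}=(A,I,P_{1},\ldots,P_{\ell},\alpha)$ it outputs an equivalent instance $I_{2}=(A,I',P_{1}',\ldots,P_{\ell}',\alpha)$ of \goassign\ (and not of some auxiliary problem), obtained by truncating each preference list to its $n$ top-ranked items and restricting the item set to $I'$ accordingly, all in polynomial time. Hence it is a genuine kernelization, and the theorem certifies $|I_{2}|=\OO(\ell n^{2})$.

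The only remaining task is to confirm that the size bound $\OO(\ell n^{2})$ is polynomial in the combined parameter $k=n+\ell$. This is immediate: since both $n\le k$ and $\ell\le k$, we obtain $\ell n^{2}\le k\cdot k^{2}=k^{3}$, so $|I_{2}|=\OO(k^{3})$. I would also observe that the reduction of \Thcref{theorem:poly-kernel} leaves the agent set $A$ and the number of layers $\ell$ untouched, so the parameter of the output instance is exactly $k$ again and is in particular bounded by a polynomial in $k$, as the definition of a kernel demands.

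I do not expect any genuine obstacle here: the corollary is essentially a restatement of \Thcref{theorem:poly-kernel} phrased in the language of the parameter $n+\ell$, combined with the elementary fact that any quantity bounded by $\ell n^{2}$ is polynomial in $n+\ell$.
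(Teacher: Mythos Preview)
Your proposal is correct and matches the paper's approach: the corollary is stated in the paper without proof, as it follows immediately from \Thcref{theorem:poly-kernel} via the observation that $\ell n^{2}$ is polynomial in $n+\ell$.
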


Before we present an exponential kernel for \goassign\ with respect to the parameter $k = m + \ell$, let us define the following.

\begin{definition}
Let $Q = (A,I,P_{1},\ldots,P_{\ell},\alpha)$ be an instance of \goassign\ and $u \in A$. The {\em agent class} of $u$ in $Q$, $\CC(u,Q)$, is the tuple that contains the preference lists of $u$ in all the layers, namely, $\CC(u,Q)=(<_{u}^{1},\ldots,<_{u}^{\ell})$. Define $D = \lbrace B \subseteq I \times I | B\text{ is a linear order} \rbrace$. For a given tuple of length $\ell$ consisting of linear orderings on subsets of $I$, $C \subseteq D^{\ell}$, define $A(C,Q) = \lbrace a \in A \mid \CC(a,Q)=C \rbrace$.
\end{definition}

\begin{theorem}\label{theorem:itemsPlusEllkernel}
\goassign\ admits a kernel of size $\OO((m!)^{\ell+1})$. Thus, it is \FPT\ with respect to $m + \ell$.
\end{theorem}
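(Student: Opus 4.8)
\goassign\ admits a kernel of size $\OO((m!)^{\ell+1})$; hence it is \FPT\ with respect to $m+\ell$.

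The plan is to bound the number of agents by a function of $m$ and $\ell$ alone, so that after merging, the total instance size depends only on these two parameters. The key observation is that an agent's role in any trading graph (and therefore in every pareto-optimality test) is completely determined by its \emph{agent class} $\CC(a,Q)=(<_a^1,\ldots,<_a^\ell)$: two agents with identical preference lists in all $\ell$ layers are interchangeable. First I would count the number of distinct agent classes. Each single-layer preference list is a linear order on a \emph{subset} of the $m$ items; the number of such linear orders is $\sum_{k=0}^{m}\binom{m}{k}k! = \lfloor e\cdot m!\rfloor = \OO(m!)$, i.e. $|D|=\OO(m!)$. A class is an $\ell$-tuple of such lists, so the number of possible classes is $|D|^{\ell}=\OO((m!)^{\ell})$. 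This is the combinatorial heart of the argument.

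Next I would apply a \emph{bounding reduction rule}: for each agent class $C\in D^{\ell}$, if $|A(C,Q)|>m$, delete agents from $A(C,Q)$ until exactly $m$ remain (keeping at least $\min\{|A(C,Q)|,m\}$). The justification is that in any assignment at most $m$ agents can receive a (non-empty) item, so at most $m$ agents of a single class can be usefully distinguished; any surplus agents in a class can only ever be assigned $b_\emptyset$ and contribute nothing to creating or destroying cycles/self-loops in any layer. I would prove equivalence by the exchange argument of \Lecref{lemma:verification} and \Prcref{prop:po-iff-no-cycles-in-td}: given an $\alpha$-globally optimal assignment $p$ for the reduced instance, extend it to the original by sending all deleted agents to $b_\emptyset$; conversely, given such a $p$ for the original instance, if a deleted agent of class $C$ received an item, swap its item with a retained agent of the same class that got $b_\emptyset$ (one exists since we kept $m$ copies and at most $m-1$ other items are taken), obtaining an assignment that is still pareto optimal in the same layers because same-class agents are indistinguishable in every trading graph.

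After exhaustively applying this rule, the number of retained agents is at most $m\cdot|D|^{\ell}=\OO((m!)^{\ell+1})$, and the number of relevant items is trivially $m$, so the total instance size is $\OO((m!)^{\ell+1})$, as claimed. Fixed-parameter tractability with respect to $m+\ell$ then follows immediately by combining the kernel with the brute-force algorithm of \Thcref{theorem:goassign-is-fpt-num-of-agents} (now running on an instance with only $\OO((m!)^{\ell+1})$ agents). The main obstacle I anticipate is the equivalence direction that moves an optimal assignment from the \emph{original} to the \emph{reduced} instance: I must argue that deleted agents were never essential, i.e.\ that an optimal $p$ can always be massaged so the deleted agents hold $b_\emptyset$ without losing pareto optimality in the chosen $\alpha$ layers. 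The cleanest way to handle this is the same-class swapping argument above, using that exchanging items between two agents with identical preference lists maps the trading graph to itself (up to renaming the two vertices) and hence preserves acyclicity in each layer, which by \Prcref{prop:po-iff-no-cycles-in-td} preserves pareto optimality layer-by-layer.
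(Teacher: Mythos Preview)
Your approach is the same as the paper's: bound the number of agents per agent class, using that there are $\OO((m!)^{\ell})$ classes. The paper keeps $m+1$ agents per class rather than your $m$; both give the stated kernel size. The extra agent is there precisely to streamline the direction you treat as ``obvious'': when you extend an $\alpha$-globally optimal assignment for the reduced instance to the original by assigning $b_\emptyset$ to all deleted agents, you must argue that no deleted agent creates a \emph{self-loop} in one of the $\alpha$ layers. With $m+1$ retained agents per class there is always a retained agent of the same class that also holds $b_\emptyset$ (only $m$ items exist), so any such self-loop already appears in the reduced instance---contradiction. With only $m$ retained agents this witness may not exist, but then all $m$ retained agents of that class hold items, hence all $m$ items are allocated, so there is no unallocated item to form a self-loop in the first place. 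Your bound of $m$ is therefore correct, but you should spell out this pigeonhole step; your write-up treats the Reduced $\to$ Original direction as immediate and focuses all the care on the other direction, when in fact the self-loop case here is exactly the subtlety the paper's ``$m+1$'' is designed to sidestep.

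Otherwise the plan is sound: the class count is right, the swap argument for Original $\to$ Reduced is correct (same-class swaps are automorphisms of every trading graph, so acyclicity is preserved layer by layer via \Prcref{prop:po-iff-no-cycles-in-td}), and combining the kernel with \Thcref{theorem:goassign-is-fpt-num-of-agents} yields the \FPT\ claim exactly as the paper does.
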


\begin{proof}
Given an instance of \goassign\ $Q = (A,I,P_{1},\ldots,P_{\ell},\alpha)$, the kernelization algorithm works as follows (formally described in Algorithm \ref{algorithm:kernelForItemsPlusEll}): It removes from $A$ agents which share the same agent class together with all their preference lists, such that in the resulting instance there will be at most $m+1$ agents in the set $A(\CC(a,Q),Q)$, for each $a \in A$. Intuitively, the idea behind the correctness is that since there are $m$ items, at most $m$ agents in $A(\CC(a,Q),Q)$ will be assigned to items; we keep at most $m+1$ agents (rather than $m$) in each agent class to cover the case where an agent is assigned to $b_{\emptyset}$ and admits a self-loop. The kernelization algorithm clearly runs in a polynomial time. 
\begingroup
\removelatexerror
\begin{algorithm}[h]
\setstretch{1.2}
\DontPrintSemicolon
\BlankLine
 \ForEach{$a \in A$}{
 	construct $A(\CC(a,Q),Q)$\;
 	\If {$|A(\CC(a,Q),Q)|>m+1$}{
		remove $|A(\CC(a,Q),Q)|-m-1$ arbitrary agents from $A(\CC(a,Q),Q)$ together with all their preference lists\;
	}	
	}
	\KwRet{the reduced instance}\;
 \caption{Kernel for \goassign\ with respect to the parameter $k = m + \ell$.}
 \label{algorithm:kernelForItemsPlusEll}
\end{algorithm}
\endgroup

Assume that we run the kernel on $I_{1} = (A_{1},I,P_{1},\ldots,P_{\ell},\alpha)$ to obtain an instance $I_{2} = (A_{2},I,Q_{1},\ldots,Q_{\ell},\alpha)$. We first observe the following:

\begin{myclaim}
$|I_{2}| = \OO((m!)^{\ell+1})$.
\end{myclaim}
\begin{proof}
Note that there exist $\sum_{j=0}^{m}{{\binom{m}{j}} \cdot j!} = \sum_{j=0}^{m}{\frac{m!}{j!(m-j)!}j!} = m! \sum_{j=0}^{m}{\frac{1}{j!}} \leq e \cdot m! = \OO(m!)$ possible linear orderings of subsets of $I$. Then, there exist $\OO((m!)^{\ell})$ different combinations of such $\ell$ orderings, implying that there exist $\OO((m!)^{\ell})$ possible agent classes defined over the item set $I$. Since for each agent class $C$, $|A_{2}(C,I_{2})| \leq m+1$, we have that  $|A_{2}| = \Sigma_{\text{agent class C}}{|A_{2}(C,I_{2})|} \leq (m!)^{\ell} \cdot (m+1)$. Thus, $|I_{2}| = \OO((m!)^{\ell} \cdot (m+1)) = \OO((m!)^{\ell+1})$.\qed
\end{proof}

We now prove that $I_{1}$ is a \yes-instance\ if and only if $I_{2}$ is a \yes-instance.

($\Rightarrow$): Assume that there exists an $\alpha$-globally optimal assignment $p$ for $I_{1}$. Then, there exist $\alpha$ layers $i_{1},\ldots,i_{\alpha}$ of $I_{1}$ in which $p$ is pareto optimal. We create an assignment $q:A_{2} \rightarrow I \cup \lbrace b_{\emptyset} \rbrace$ for the reduced instance as follows: For each $a \in A_{2}$, let $p(A_{1}(\CC(a,I_{1}),I_{1}))$ denote the set of items allocated to the agents from $A_{1}(\CC(a,I_{1}),I_{1})$ by $p$. We allocate the items in $p(A_{1}(\CC(a,I_{1}),I_{1}))$ to agents in $A_{2}(\CC(a,I_{2}),I_{2})$ arbitrarily (observe that $\CC(a,I_{1}) = \CC(a,I_{2})$). Agents that do not have available items are assigned to $b_{\emptyset}$. First, observe that $q$ allocates all the items which are allocated by $p$ since there are at most $m$ items, and the algorithm keeps all or exactly $m+1$ agents from each set $A_{1}(\CC(a,I_{1}),I_{1})$. As a result, $q$ cannot admit self loops in layers $i_{1},\ldots,i_{\alpha}$ of $I_{2}$. Formally, the sets $A_{1}(\CC(a,I_{1}),I_{1})$ and $A_{2}(\CC(a,I_{2}),I_{2})$ satisfy $|A_{2}(\CC(a,I_{2}),I_{2})| \leq |A_{1}(\CC(a,I_{1}),I_{1})|$. Since the agents in these sets are allocated the same number of items by $p$ and $q$, if there exists an agent in $A_{2}(\CC(a,I_{2}),I_{2})$ that admits a self loop in $I_{2}$, there must exist an agent in $A_{1}(\CC(a,I_{1}),I_{1})$ that admits a self loop in $I_{1}$. Second, we claim that $q$ does not admit trading cycles in these layers. For the sake of contradiction, suppose there exists a layer $i_{j}$ in $I_{2}$, and $t$ agents $a_{1}',\ldots,a_{t}' \in A_{2}$ that admit a trading cycle $(a_{1}',q(a_{1}'),\ldots,a_{t}',q(a_{t}'))$ in $Q_{i_{j}}$. By the construction of $q$, notice that there exist $t$ agents $a_{1},\ldots,a_{t} \in A_{1}$, such that for each $i \in [t]$, $\CC(a_{i},I_{1})$ = $\CC(a_{i}',I_{2})$, and $q(a_{i}')=p(a_{i})$. Then, $p$ admits the trading cycle  $(a_{1},p(a_{1}),\ldots,a_{t},p(a_{t}))$ in $P_{i_{j}}$. This gives a contradiction to \Prcref{prop:po-iff-no-tc-and-sl}.

($\Leftarrow$): Assume that there exists an $\alpha$-globally optimal assignment $q$ for $I_{2}$. Then there exist $\alpha$ layers $i_{1},\ldots,i_{\alpha}$ in $I_{2}$ in which $q$ is pareto optimal. We denote an assignment $p$ for $I_{1}$ by $
  p(a) =
  \begin{cases}
    q(a) \ \ a \in A_{2}\\
    b_{\emptyset} \text{   otherwise}
  \end{cases}
$, 
and we claim that $p$ is pareto optimal in layers $i_{1},\ldots,i_{\alpha}$ in $I_{1}$. By the construction of $p$, for each $a_{1} \in A_{1} \setminus A_{2}$, there exists an agent $a_{2} \in A_{2}$ such that $\CC(a_{1},I_{1}) = \CC(a_{2},I_{2})$ and $p(a_{1}) = q(a_{2})$. Namely, there exists a mapping $f$ from agents in $A_{1}$ to agents in $A_{2}$ such that for each $a_{1} \in A_{1}$, $\CC(a_{1},I_{1}) = \CC(f(a_{1}),I_{2})$ and $p(a_{1}) = q(f(a_{1}))$. If $p$ admits a trading cycle $(a_{1},p(a_{1}), \ldots , a_{r},p(a_{r}))$ in some layer $i_{j}$ of $I_{1}$, then $q$ admits the trading cycle $(f(a_{1}),q(f(a_{1})), \ldots , f(a_{r}),q(f(a_{r})))$ in layer $i_{j}$ of $I_{2}$. If $p$ admits a self loop in layer $i_{j}$ of $I_{1}$ with agent $a_{1} \in A_{1}$, then $q$ admits a self loop with agent $f(a_{1})$ in layer $i_{j}$ of $I_{2}$. Thus by \Prcref{prop:po-iff-no-tc-and-sl}, we conclude that $p$ is $\alpha$-globally optimal in $I_{1}$.\qed
\end{proof}

\begin{corollary}\label{corollary:FptItemsPlusEll}
\goassign\ is \FPT\ with respect to the parameter $k = m + \ell$.
\end{corollary}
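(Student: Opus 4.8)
The plan is to derive this corollary immediately from the kernel already constructed in \Thcref{theorem:itemsPlusEllkernel}, using the standard principle that any decidable parameterized problem which admits a kernel is fixed-parameter tractable. Since \Thcref{theorem:itemsPlusEllkernel} produces a genuine kernel (an instance of \goassign\ again, not merely a compression into another problem), I can feed the reduced instance back into an exact decision procedure for \goassign\ and bound the total running time by a function of $m+\ell$ times a polynomial in the input size.

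First I would recall that the kernelization algorithm of \Thcref{theorem:itemsPlusEllkernel} runs in polynomial time and, given an instance $I_1$, outputs an equivalent instance $I_2$ with $|I_2| = \OO((m!)^{\ell+1})$. In particular, the number of agents $n'$ in $I_2$ satisfies $n' \leq (m!)^{\ell}\cdot(m+1)$, a quantity that depends only on $m$ and $\ell$. This is the crucial point: although $I_2$ is equivalent to $I_1$, its agent count is bounded purely by a computable function of the parameter $k = m+\ell$, independent of the original number of agents.

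Next I would apply the brute-force algorithm of \Thcref{theorem:goassign-is-fpt-num-of-agents} to the kernelized instance $I_2$. That algorithm decides \goassign\ in time $\OO^{*}(n'!)$, where the suppressed polynomial factor is in $|I_2|$. Substituting the bound $n' \leq (m!)^{\ell}\cdot(m+1)$ shows that $n'!$ is at most $g(m+\ell)$ for some computable function $g$, while $|I_2| \leq |I_1|^{\OO(1)}$ because the kernel is computed in polynomial time. Composing the two phases, the overall running time is $f(m+\ell)\cdot |I_1|^{\OO(1)}$ for a suitable computable $f$, which is exactly the definition of \FPT\ with respect to $k = m+\ell$.

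There is essentially no obstacle here; the only point requiring a word of care is the cosmetic one that the polynomial factor hidden by $\OO^{*}$ in \Thcref{theorem:goassign-is-fpt-num-of-agents} is measured in $|I_2|$, and that this is polynomial in $|I_1|$ precisely because the kernelization step is polynomial. Everything else is the routine "kernel implies \FPT'' observation, so the proof amounts to chaining \Thcref{theorem:itemsPlusEllkernel} with \Thcref{theorem:goassign-is-fpt-num-of-agents}.
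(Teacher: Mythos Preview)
Your proposal is correct and matches the paper's approach: the corollary is stated immediately after \Thcref{theorem:itemsPlusEllkernel} (whose statement already contains the clause ``Thus, it is \FPT\ with respect to $m+\ell$''), and the paper then explicitly combines \Thcref{theorem:itemsPlusEllkernel} with \Thcref{theorem:goassign-is-fpt-num-of-agents} to extract the concrete running time in \Cocref{corollary:items-plus-ell-algorithm}, exactly as you do. Your write-up simply spells out the standard ``kernel $+$ brute force $=$ \FPT'' argument that the paper leaves implicit.
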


By Theorems \ref{theorem:itemsPlusEllkernel} and \ref{theorem:goassign-is-fpt-num-of-agents}, we conclude the following. 
\begin{corollary}\label{corollary:items-plus-ell-algorithm}
\goassign\ is solvable in time $\OO^{*}(((m!)^{\ell+1})!)$.
\end{corollary}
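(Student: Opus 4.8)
The plan is to combine the exponential kernelization of \Thcref{theorem:itemsPlusEllkernel} with the exact algorithm of \Thcref{theorem:goassign-is-fpt-num-of-agents} in the standard ``kernelize, then brute-force'' fashion. Given an instance $I_{1}=(A,I,P_{1},\ldots,P_{\ell},\alpha)$, I would first run Algorithm~\ref{algorithm:kernelForItemsPlusEll} (the kernelization procedure of \Thcref{theorem:itemsPlusEllkernel}) in polynomial time to obtain an equivalent instance $I_{2}=(A_{2},I,Q_{1},\ldots,Q_{\ell},\alpha)$, and then invoke the $\OO^{*}(n!)$ algorithm of \Thcref{theorem:goassign-is-fpt-num-of-agents} on $I_{2}$.

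Correctness is immediate and requires no new argument: \Thcref{theorem:itemsPlusEllkernel} guarantees that $I_{1}$ is a \yes-instance if and only if $I_{2}$ is a \yes-instance, and \Thcref{theorem:goassign-is-fpt-num-of-agents} correctly decides $I_{2}$. Hence the composed procedure correctly decides $I_{1}$, and the only real content to establish is the running-time bound.

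For the running time, the key quantity is the number of agents surviving in the kernel. From the analysis in the proof of \Thcref{theorem:itemsPlusEllkernel} (the embedded \Clcref{myclaim}), the reduced agent set satisfies $|A_{2}| \leq (m!)^{\ell}\cdot(m+1)$. Since $m+1 \leq m!$ for all $m \geq 3$ — and the cases $m \leq 2$ involve only a constant number of items and are handled trivially — this yields the clean inequality $|A_{2}| \leq (m!)^{\ell+1}$. Running the algorithm of \Thcref{theorem:goassign-is-fpt-num-of-agents} on $I_{2}$ then takes time $\OO^{*}(|A_{2}|!)$, and by monotonicity of the factorial, $|A_{2}|! \leq ((m!)^{\ell+1})!$. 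The polynomial overhead of the kernelization step, together with the polynomial factor of the $n!$-algorithm applied to a kernel of size $\OO((m!)^{\ell+1})$, is absorbed into the $\OO^{*}$ notation, so the overall running time is $\OO^{*}(((m!)^{\ell+1})!)$.

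The one point I would be careful about — and essentially the only nontrivial step — is that one must pass through the \emph{explicit} bound $|A_{2}| \leq (m!)^{\ell+1}$ rather than the merely asymptotic $|A_{2}| = \OO((m!)^{\ell+1})$. The factorial is extremely sensitive to constant factors in its argument (for instance $(2k)!$ is super-polynomially larger than $k!$), so a hidden constant must be cleared \emph{before} the factorial is taken; this is exactly what the inequality $m+1 \leq m!$ accomplishes, after which monotonicity of the factorial delivers the stated bound directly. \qed
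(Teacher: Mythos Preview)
Your proposal is correct and follows exactly the paper's approach: the paper proves this corollary simply by citing \Thcref{theorem:itemsPlusEllkernel} and \Thcref{theorem:goassign-is-fpt-num-of-agents}, with no further argument. Your write-up in fact supplies more detail than the paper does, including the careful observation that one needs an explicit (not merely asymptotic) bound on $|A_{2}|$ before taking the factorial.
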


\begin{theorem} \label{xp-for-items}
\goassign\ is solvable in time $(nm)^{\OO(m)}$.
\end{theorem}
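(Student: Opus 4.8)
The plan is to reduce the problem to a bounded enumeration over all possible assignments, verifying each candidate with \Lecref{lemma:verification}. The crucial observation is that, although the number of agents may be large, an assignment allocates at most $m$ items (one per item), so it is completely determined by specifying, for each of the $m$ items, whether it is left unallocated or which agent receives it. Since each item has at most $n+1$ such choices (one of the $n$ agents, or ``unallocated''), and distinct items must go to distinct agents, the number of distinct assignments is at most $(n+1)^{m}$. As $(n+1)^{m} \le (nm)^{\OO(m)}$, this is precisely the enumeration budget we can afford.

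The algorithm therefore iterates over all $(n+1)^{m}$ partial injective maps from $I$ to $A$; each such map $g$ encodes the assignment $p$ that gives agent $a$ the unique item mapped to it by $g$ if one exists and $b_{\emptyset}$ otherwise. For a fixed candidate $p$, I would, for every layer $i \in [\ell]$, first discard the layer if $p$ allocates to some agent an item that is unacceptable in that layer (so that $p$ cannot be a legal pareto optimal assignment there), and otherwise build the trading graph $G_{i}$ of $P_{i}$ with respect to $p$ and test it for acyclicity. By \Prcref{prop:po-iff-no-cycles-in-td}, $p$ is pareto optimal in exactly those (legal) layers whose trading graph is acyclic, and by \Lecref{lemma:verification} the candidate $p$ is $\alpha$-globally optimal if and only if at least $\alpha$ of these graphs are acyclic. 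The algorithm returns \yes\ as soon as some candidate passes this test, and \no\ otherwise.

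Correctness is immediate: every $\alpha$-globally optimal assignment is in particular an assignment, hence allocates at most $m$ items and appears among the enumerated candidates, so it will be found; conversely, any candidate the algorithm accepts is $\alpha$-globally optimal by \Lecref{lemma:verification}. For the running time, constructing a trading graph and detecting a cycle in it takes time polynomial in $n+m$, and this is repeated over the $\ell$ layers for each of the $(n+1)^{m}$ candidates, giving a total of $(n+1)^{m} \cdot \ell \cdot (n+m)^{\OO(1)} = (nm)^{\OO(m)}$.

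The only real subtlety — and the step I would be most careful with — is that legality is layer-dependent: a single candidate assignment may allocate an item that is acceptable in some layers and unacceptable in others, which also leaves the trading-graph construction ill-defined in the unacceptable layers. This is handled by testing legality per layer before the acyclicity test, so that a candidate is credited only for layers in which it is a genuine (legal) pareto optimal assignment; everything else reduces to routine counting and standard cycle detection.
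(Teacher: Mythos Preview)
Your proposal is correct and follows essentially the same brute-force strategy as the paper: enumerate all possible assignments (you parametrize them item-by-item with at most $(n+1)^{m}$ choices, while the paper picks a subset $I'\subseteq I$, a same-size subset $A'\subseteq A$, and a bijection between them, giving $\sum_{t}\binom{m}{t}\binom{n}{t}t!$), and then verify each candidate in polynomial time per layer via the trading-graph acyclicity test. Your explicit per-layer legality check is a cleaner treatment of the same issue the paper handles by skipping allocations that are unacceptable in too many layers.
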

\begin{proof}
We present a simple brute-force algorithm. The algorithm simply iterates over all subsets of items $I' \subseteq I$. For each subset, it iterates over all subsets $A' \subseteq A$ such that $|A'|=|I'|$. For each $a \notin A'$, the algorithm allocates $b_\emptyset$, and it tries all possible $|I'|!$ different ways to allocate the items in $I'$ to the agents in $A'$ (it skips allocations that allocate items that are not acceptable by their owners in more than $\ell-\alpha + 1$ layers). The algorithm constructs the corresponding trading graphs, and verifies in polynomial time whether the current assignment is $\alpha$-globally optimal.
Hence, the running time of the algorithm is $\sum_{t=0}^{m} \binom{m}{t} \cdot \binom{n}{t} \cdot t! \cdot (n+m)^{\OO(1)} \leq m \cdot 2^{m}\cdot n^{\frac{m}{2}} \cdot m! \cdot (n+m)^{\OO(1)} = (nm)^{\OO(m)}$.\qed
\end{proof}

Before we continue with our next results, let us discuss a simple property that will help in many of our proofs.

\begin{definition}
Let $(A,I,P)$ be an instance of the \assign\ problem and suppose that $P = \lbrace <_{a} \mid a \in A \rbrace$. We say that agents $a_{1}, a_{2} \in A$ {\em respect each other} if there exists a linear order on a subset of $I$, $\biglefttriangle \subseteq I \times I$, such that both $<_{a_{1}} \subseteq \biglefttriangle$ and $<_{a_{2}} \subseteq \biglefttriangle$.
\end{definition}

\begin{lemma}\label{lemma:respectedAgents}
Let $(A,I,P)$ be an instance of the \assign\ problem, such that there exist agents $a_{1},\ldots,a_{r} \in A$ where for each $i,j \in [r]$, $a_{i}$ and $a_{j}$ respect each other. Then, for every assignment $p : A \rightarrow I \cup \lbrace b_{\emptyset} \rbrace$, $p$ does not admit a trading cycle among the agents $a_{1},\ldots,a_{r}$.
\end{lemma}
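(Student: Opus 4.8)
The plan is to argue by contradiction. Suppose that, for some assignment $p$, there is a trading cycle involving only agents from $\{a_{1},\ldots,a_{r}\}$; write it as $(a_{i_{1}},b_{j_{1}},a_{i_{2}},b_{j_{2}},\ldots,a_{i_{k}},b_{j_{k}})$, where $a_{i_{1}},\ldots,a_{i_{k}}$ are distinct agents among $a_{1},\ldots,a_{r}$. By the definition of a trading cycle, $p(a_{i_{t}})=b_{j_{t}}$ and $b_{j_{t}} <_{a_{i_{t}}} b_{j_{t+1\Mod{k}}}$ for every $t \in [k]$; in particular both $b_{j_{t}}$ and $b_{j_{t+1\Mod{k}}}$ appear in the preference list of $a_{i_{t}}$, so every comparison I will manipulate is a genuine comparison made by an agent.

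The heart of the argument is to produce a single strict linear order $\biglefttriangle$ (on a subset of $I$ containing $b_{j_{1}},\ldots,b_{j_{k}}$) that simultaneously extends the preference lists $<_{a_{i_{1}}},\ldots,<_{a_{i_{k}}}$. I would build $\biglefttriangle$ incrementally, starting from $<_{a_{i_{1}}}$ and merging in $<_{a_{i_{2}}},\ldots,<_{a_{i_{k}}}$ one at a time; at each step the hypothesis that the newly processed agent respects every previously processed agent supplies a pairwise-consistent extension, which I would use to argue that no contradictory pair is introduced. I expect this merging step to be the main obstacle, because the respect relation is assumed only between pairs, whereas the contradiction needs one order compatible with all agents of the cycle at once; the delicate point is to preclude a cyclic clash $x \biglefttriangle y \biglefttriangle z \biglefttriangle x$ forced by three distinct agents. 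This is exactly where the hypothesis must be exploited in full, and it is the only nontrivial part of the argument.

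Once such an order $\biglefttriangle$ is available, the contradiction is immediate. For each $t \in [k]$, since $<_{a_{i_{t}}} \subseteq \biglefttriangle$, the relation $b_{j_{t}} <_{a_{i_{t}}} b_{j_{t+1\Mod{k}}}$ yields $b_{j_{t}} \biglefttriangle b_{j_{t+1\Mod{k}}}$. Chaining these relations around the whole cycle gives
\[
  b_{j_{1}} \biglefttriangle b_{j_{2}} \biglefttriangle \cdots \biglefttriangle b_{j_{k}} \biglefttriangle b_{j_{1}},
\]
and transitivity then forces $b_{j_{1}} \biglefttriangle b_{j_{1}}$, contradicting the irreflexivity of the strict linear order $\biglefttriangle$. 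Hence no trading cycle among $a_{1},\ldots,a_{r}$ can exist, which is precisely the claim.

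Finally, I would note that in the setting where this lemma is actually invoked, namely agents belonging to a common agent class and therefore sharing identical preference lists, the merging step degenerates entirely: one may simply take $\biglefttriangle$ to be that shared preference list, so the cyclic-clash difficulty never arises and the chaining argument applies directly. This indicates that the pairwise formulation is what one wants to state in general, but that the hard consistency step is vacuous in the intended applications.
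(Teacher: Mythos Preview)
Your overall strategy matches the paper's exactly: assume a trading cycle, obtain a single strict order $\biglefttriangle$ extending all participating agents' preference lists, and chain the cycle's inequalities inside $\biglefttriangle$ to reach a contradiction. The paper's proof simply asserts, without argument, that pairwise respect among $a_{1},\ldots,a_{r}$ yields such a common $\biglefttriangle$; you correctly isolate this as the only substantive step and flag the danger of a three-agent cyclic clash.

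That worry is well founded, and in fact the merging step cannot be completed in general, because the lemma as stated is false. Take three agents with preference lists $x>y$, $y>z$, and $z>x$ respectively. Any two of them respect each other (for instance $x>y>z$ extends the first two, $z>x>y$ the first and third, $y>z>x$ the last two), yet the assignment $p(a_{1})=y$, $p(a_{2})=z$, $p(a_{3})=x$ admits the trading cycle $(a_{1},y,a_{3},x,a_{2},z)$. So neither your incremental merge nor the paper's bare assertion can succeed under the pairwise hypothesis alone. Your closing remark is the correct salvage: in every actual invocation of the lemma in the paper the agents' lists are all sublists of one explicit global order (identical lists, or all sorted ascending by item index), so $\biglefttriangle$ can simply be taken to be that global order, and then the chaining argument is complete.
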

\begin{proof}
Towards a contradiction, suppose there exist an assignment $p$ which admits a trading cycle $(a_{1},p(a_{1}),\ldots,a_{r},p(a_{r}))$ (notice that $r \geq 2$). Since $a_{1},\ldots,a_{r}$ pairwise respect each other, there exists a linear order $\biglefttriangle \subseteq I \times I$, such that for each $i \in [r]$, $<_{a_{i}} \subseteq \biglefttriangle$. This implies that $p(a_{1}) \biglefttriangle p(a_{2})\biglefttriangle\ldots\biglefttriangle p(a_{r})$. Since $p(a_{r}) <_{a_{r}} p(a_{1})$, we have that $p(a_{r}) \biglefttriangle p(a_{1})$, a contradiction to $\biglefttriangle$ being a linear order.\qed
\end{proof}

We now prove that $\Omega^{*}(k!)$ is a (tight) lower bound on the running time of any algorithm for \goassign\ under the \ETH, even for larger parameters than $n$ such as $k= n + m + \alpha$ and $k = n + m + (\ell - \alpha)$. So, the algorithm in \Thcref{theorem:goassign-is-fpt-num-of-agents} is optimal (in terms of running time).

\begin{proposition}[Cygan et al.~\cite{DBLP:books/sp/CyganFKLMPPS15}] \label{proposition:eth-reduction}
Suppose that there is a polynomial-time parameterized reduction from problem $A$ to problem $B$ such that if the parameter of an
instance of $A$ is $k$, then the parameter of the constructed instance of $B$ is at most $g(k)$ for some nondecreasing function $g$. Then an $\OO^{*}(2^{o(f(k))})$-time algorithm for $B$ for some nondecreasing function $f$ implies an $\OO^{*}(2^{o(f(g(k)))})$-time algorithm for $A$.
\end{proposition}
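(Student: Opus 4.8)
The plan is to prove the statement constructively, by composing the given reduction with the hypothesized fast algorithm for $B$ and then transferring the resulting running-time bound back to $A$. So I would assume that some algorithm $\mathcal{B}$ solves $B$ in time $\OO^{*}(2^{o(f(k'))})$, where $k'$ denotes the parameter of the $B$-instance, and define an algorithm $\mathcal{A}$ for $A$ as follows: on input $(I_A,k)$, first run the parameterized reduction to obtain an instance $(I_B,k')$ of $B$ with $k' \leq g(k)$, then run $\mathcal{B}$ on $(I_B,k')$ and return its answer. Correctness is immediate from the definition of a parameterized reduction (\Clcref{definition:parameterized-reduction}): $(I_A,k)$ is a \yesinstance\ of $A$ if and only if $(I_B,k')$ is a \yesinstance\ of $B$, and $\mathcal{B}$ decides the latter correctly.

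For the running time, the reduction step runs in time $|I_A|^{\OO(1)}$ and in particular produces an instance of size $|I_B| = |I_A|^{\OO(1)}$. The call to $\mathcal{B}$ then costs $2^{s(k')} \cdot |I_B|^{\OO(1)} = 2^{s(k')} \cdot |I_A|^{\OO(1)}$, where $s(k') = o(f(k'))$. Since $k' \leq g(k)$ and $f$ is nondecreasing, we have $f(k') \leq f(g(k))$, so the task reduces to showing that $2^{s(k')} = 2^{o(f(g(k)))}$ when viewed as a function of the original parameter $k$.

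The main (and essentially the only) delicate point is that the $o$-notation does not compose automatically along the inequality $k' \leq g(k)$: the hypothesis $s(k') = o(f(k'))$ constrains $s$ only for large values of its argument, whereas a large $k$ need not force $k'$ to be large. I would resolve this by fixing an arbitrary $\epsilon > 0$ and splitting into two regimes. Because $s(y) = o(f(y))$, there is a threshold $N_\epsilon$ with $s(y) \leq \epsilon f(y)$ for all $y > N_\epsilon$, while for $y \leq N_\epsilon$ the value $s(y)$ is bounded by some constant $M_\epsilon$. Hence for any instance-dependent $k' \leq g(k)$, either $k' > N_\epsilon$ and $s(k') \leq \epsilon f(k') \leq \epsilon f(g(k))$, or $k' \leq N_\epsilon$ and $s(k') \leq M_\epsilon$; in both cases $s(k') \leq \epsilon f(g(k)) + M_\epsilon$.

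Putting this together, the running time of $\mathcal{A}$ is bounded by $2^{M_\epsilon} \cdot 2^{\epsilon f(g(k))} \cdot |I_A|^{\OO(1)}$ for every fixed $\epsilon > 0$. I would then invoke the standard equivalence that a bound of the form $C_\epsilon \cdot 2^{\epsilon N}$ holding for every $\epsilon > 0$ is precisely $2^{o(N)}$, applied with $N = f(g(k))$, to conclude that $\mathcal{A}$ runs in time $\OO^{*}(2^{o(f(g(k)))})$. This yields the claimed algorithm for $A$ and completes the argument; in applications one reads the statement contrapositively, so that an $\ETH$-based lower bound ruling out an $\OO^{*}(2^{o(f(g(k)))})$-time algorithm for $A$ transfers to rule out an $\OO^{*}(2^{o(f(k'))})$-time algorithm for $B$.
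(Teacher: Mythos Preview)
The paper does not actually prove this proposition: it is stated with attribution to Cygan et al.\ and used as a black box, with no proof given in the paper itself. So there is no ``paper's own proof'' to compare against.

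Your argument is correct and is the standard one. You correctly identify and handle the only subtlety, namely that the bound $s(k') = o(f(k'))$ is asymptotic in $k'$ while the reduction only guarantees $k' \leq g(k)$, so large $k$ need not force large $k'$; your two-regime split (threshold $N_\epsilon$ versus bounded range) and the conclusion that $s(k') \leq \epsilon f(g(k)) + M_\epsilon$ for every $\epsilon$ are exactly what is needed. One cosmetic point: when you say the reduction runs in time $|I_A|^{\OO(1)}$, this relies on the proposition's explicit hypothesis that the reduction is \emph{polynomial-time} (a general parameterized reduction would only give $h(k)\cdot |I_A|^{\OO(1)}$, which would still be fine here but would require folding $h(k)$ into the exponent bound). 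You are using that hypothesis correctly; it might be worth flagging it.
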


\begin{theorem}\label{theorem:optimal-agents-items-alpha}
Unless \ETH\ fails, there does not exist an algorithm for \goassign\ with running time $\OO^{*}(2^{o(k \log{k})})$ where $k=n+m+\alpha$.
\end{theorem}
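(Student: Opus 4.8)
The plan is to give a linear parameter reduction from the \pc\ problem, which asks whether a graph on the vertex set $[k] \times [k]$ has a clique containing exactly one vertex from each row and exactly one vertex from each column; such a clique is precisely a permutation $\pi$ of $[k]$ for which $\{(i,\pi(i)) : i \in [k]\}$ is a clique. It is known that, unless \ETH\ fails, \pc\ cannot be solved in time $\OO^{*}(2^{o(k \log k)})$. Given a reduction producing a \goassign\ instance with $n + m + \alpha = \OO(k)$, \Prcref{proposition:eth-reduction}, applied with the nondecreasing functions $f(x) = x \log x$ and $g(k) = \OO(k)$, yields the theorem at once: an $\OO^{*}(2^{o(k \log k)})$-time algorithm for \goassign\ in the parameter $n+m+\alpha$ would give an $\OO^{*}(2^{o(f(g(k)))}) = \OO^{*}(2^{o(k \log k)})$-time algorithm for \pc, contradicting \ETH.

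First I would design the instance so that a candidate \goassign\ solution encodes a candidate permutation of $[k]$. I would introduce $k$ selector agents $s_1, \ldots, s_k$ (one per row) and $k$ shared column items $c_1, \ldots, c_k$, together with only a constant (or $\OO(k)$) number of auxiliary agents and items, so that $n, m = \OO(k)$. The intended reading is that assigning $c_j$ to $s_i$ selects the vertex $(i,j)$. To force the selector--column part of any relevant assignment to be a genuine permutation, I would include a ``base'' layer in which all selectors share one identical complete preference list over $c_1, \ldots, c_k$; by \Prcref{prop:po-iff-sd} and \Cocref{corollary:number-of-po-assign}, the pareto optimal assignments of this single-layer subinstance are exactly the $k!$ perfect matchings, i.e.\ they are in bijection with the permutations $\pi$.

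Next I would build $\OO(k)$ verification layers whose simultaneous pareto optimality is equivalent to $\{(i,\pi(i))\}$ being a clique, and set $\alpha$ equal to the total number of base and verification layers, which is $\OO(k)$, so that $\alpha$-global optimality forces all of them to be pareto optimal. In each verification layer I would tune the preferences of the selector agents (and of a few auxiliary agents/items) so that, whenever the encoded permutation selects two non-adjacent vertices $(i,\pi(i))$ and $(j,\pi(j))$, the trading graph of that layer contains a cycle, which by \Prcref{prop:po-iff-no-tc-and-sl} breaks pareto optimality there; whereas for a clique-encoding permutation no verification layer contains any cycle. For the latter direction I would rely on \Lecref{lemma:respectedAgents}: by arranging the layer preferences so that, on clique assignments, any agents that could jointly form a cycle pairwise respect one another, no trading cycle can arise, and a careful choice of which items are left unassigned rules out self loops.

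The hard part will be the gadget design in the verification layers. Since the preference list of each selector within a layer is fixed in advance, while the item it actually holds depends on $\pi$, I must ensure that a cycle is triggered \emph{exactly} for the forbidden (non-edge) configurations and \emph{never} spuriously for valid ones, while packing the $\binom{k}{2}$ pairwise adjacency conditions into only $\OO(k)$ layers and keeping $n, m = \OO(k)$. Once the construction and both directions of correctness are established---the forward direction, that a clique yields an assignment pareto optimal in all chosen layers, and the converse, that an $\alpha$-globally optimal assignment restricts (via the base layer) to a permutation whose selected vertices are pairwise adjacent---the desired lower bound follows from \Prcref{proposition:eth-reduction} exactly as outlined above.
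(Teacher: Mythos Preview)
Your high-level framework is right---reduce from \pc\ and invoke \Prcref{proposition:eth-reduction}---but the specific plan has a real gap. You commit to $\alpha = \ell$ and then demand $\ell = \OO(k)$, which forces you to ``pack the $\binom{k}{2}$ pairwise adjacency conditions into only $\OO(k)$ layers'' using only $\OO(k)$ agents and items. You correctly flag this as ``the hard part'' and then leave it unresolved. It is not clear this packing is even possible under those constraints: each layer's preference lists are fixed independently of $\pi$, yet a single layer with $\OO(k)$ agents must detect any one of $\Theta(k)$ non-edge pairs without false positives on the other $\Theta(k^2)$ pairs. Nothing in your sketch---the base layer, \Lecref{lemma:respectedAgents}, or the auxiliary agents---addresses this combinatorial bottleneck, so as written the proposal does not yield a proof.

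The paper sidesteps this entirely by decoupling $\alpha$ from $\ell$. Since the parameter is $n+m+\alpha$, the number of layers $\ell$ may be as large as you like. The construction uses $k^2$ layers, one per vertex $(i,j)$, and sets $\alpha = k$. In layer $(i,j)$, agent $a_i$ accepts only $b_j$, while every other $a_r$ accepts exactly those $b_q$ with $\{(i,j),(r,q)\}\in E(G)$ and $q\neq j$. If $p$ is pareto optimal in layer $(i,j)$ then necessarily $p(a_i)=b_j$; hence $p$ can be pareto optimal in at most one layer per row and per column, so hitting $\alpha=k$ layers forces a permutation $\pi$ with $p(a_i)=b_{\pi(i)}$. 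Legality of $p$ in layer $(i,\pi(i))$ for every other $a_r$ then encodes the edge $\{(i,\pi(i)),(r,\pi(r))\}$. No packing is needed: each of the $k^2$ layers checks a single vertex, and the choice of \emph{which} $k$ layers witness optimality does the work. The key idea you are missing is precisely this: let $\ell$ blow up and keep $\alpha$ small.
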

\begin{proof}
We provide a linear parameter reduction from \pc. In the \pc\ problem, we are given a graph $G$ where the vertices are elements of a $k \times k$ table, namely, $V(G)=[k] \times [k]$. Then, a {\em $k \times k$-permutation clique} is a clique of size $k$ in $G$ that contains exactly one vertex from each row and exactly one vertex from each column. In other words, there exists a permutation $\pi$ on $[k]$ such that the vertices of the clique are $(1,\pi(1)),\ldots,(k,\pi(k))$. The task is to decide whether there exists a $k \times k$-permutation clique in $G$. Lokshtanov et al.~\cite{doi:10.1137/16M1104834} proved that there is no $\OO^{*}(2^{o(k \log{k})})$-time algorithm for \pc, unless \ETH\ fails. 

Let $(G,k)$ be an instance of \pc. We create an agent $a_{i}$ for each row $i \in [k]$, and an item $b_{j}$ for each column $j \in [k]$. We construct an instance of \goassign\ consisting of $k^{2}$ layers, each corresponds to a row-column pair $(i,j)$, containing the preference profile $P_{(i,j)}$ defined as follows.
\begin{framed}
\begin{itemize}
\item $a_{i}\ $: $\ b_j$
\item $a_{r}\ $:$\ \lbrace b_{q} \mid \lbrace (i,j),(r,q) \rbrace \in E(G),q \neq j \rbrace$ (sorted in ascending order by $q$) $\ \ \forall r \in [k]\setminus \lbrace i \rbrace$.
\end{itemize}
\end{framed}

We finally set $\alpha = k$. We now prove that there exists a $k \times k$-permutation clique in $G$ if and only if there exists a $k$-globally optimal assignment for the constructed instance.

\begin{figure}[t]
\centering
\scalebox{0.7}{
\input{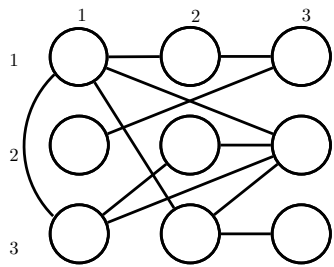}
}
\caption{For the presented graph, the constructed instance of \goassign\ consists of 9 layers, one for each row-column pair; $\lbrace (1,1),(2,3),(3,2) \rbrace$ is a $3 \times 3$-permutation clique. Thus, the assignment $p$ which satisfies $p(a_{1}) = b_{1} , p(a_{2}) = b_{3}$, and $p(a_{3}) = b_{2}$  is $3$-globally optimal for the constructed instance, and is pareto optimal in the profiles $P_{(1,1)},P_{(2,3)}$, and $P_{(3,2)}$.}
\label{fig:pc1}
\end{figure}

$(\Rightarrow)$ Suppose there exists a permutation $\pi$ for $[k]$ such that $(1,\pi(1)),\ldots,(k,\pi(k))$ form a clique in $G$. We define an assignment $p$ by $p(a_{i})=b_{\pi(i)}$ for each $i \in [k]$, and we claim that $p$ is pareto optimal in each $P_{(i,\pi(i))}$ (see Figure \ref{fig:pc1}).
Observe that for each $i \in [k]$, $b_{\pi(i)}$ is acceptable by $a_{i}$ in $P_{(i,\pi(i))}$ and in all profiles $P_{(j,\pi(j))}$ such that $j \in [k] \setminus \lbrace i \rbrace$ since there is an edge between $(i,\pi(i))$ and each $(j,\pi(j))$. Moreover, each $P_{(j,\pi(j))}$ contains no self loops due to the fact that all the items are allocated. Since we sorted each preference list in an ascending order by the item indices, all the agents respect each other in each preference profile and by \Lecref{lemma:respectedAgents}, $p$ does not admit trading cycles in any layer.

$(\Leftarrow)$ Suppose there exists a $k$-globally optimal assignment $p$ for the constructed instance. Note that if $p$ is pareto optimal in some profile $P_{(i,j)}$, it must satisfy $p(a_{i})=b_{j}$, as otherwise, $a_{i}$ would admit a self loop ($b_{j}$ is not acceptable by any agent but $a_{i}$ in $P_{(i,j)}$). Hence, we have that for each $i \in [k]$, $p$ is pareto optimal in at most one profile among $P_{(i,1)},\ldots,P_{(i,k)}$ and in at most one profile among $P_{(1,i)},\ldots,P_{(k,i)}$. Since we set $\alpha = k$, we have that for each $i \in [k]$, $p$ is pareto optimal in exactly one profile among $P_{(i,1)},\ldots,P_{(i,k)}$ and in exactly one profile among $P_{(1,i)},\ldots,P_{(k,i)}$. This implies that there exists a permutation $\pi$ on $[k]$ such that $p$ is pareto optimal in $P_{(i,\pi(i))}$ for each $i \in [k]$. We claim that $\lbrace(i,\pi(i)) \mid i \in [k]\rbrace$ is the vertex set of a clique in $G$. Towards a contradiction, suppose that there exist two different rows $i_{1}$ and $i_{2}$ such that $\lbrace (i_{1},\pi(i_{1})), (i_{2},\pi(i_{2})) \rbrace \notin E(G)$. By the construction of the preference lists, observe that $b_{\pi(i_{2})}$ is not acceptable by $a_{i_{2}}$ in $P_{(i_{1},\pi(i_{1}))}$. Therefore, $p$ is not a legal assignment for $P_{(i_{1},\pi(i_{1}))}$, a contradiction to its optimality.

It holds that $n + m+ \alpha= \OO(k)$. Thus, by \Prcref{proposition:eth-reduction}, we conclude that there is no $\OO^{*}(2^{o(k \log{k})})$-time algorithm for \goassign, unless \ETH\ fails.\qed
\end{proof}

We now prove a that the same result holds also for the parameter $n + m + (\ell - \alpha)$.

\begin{theorem}\label{theorem:optimal-agents-items-ell-alpha}
Unless \ETH\ fails, there does not exists an algorithm for \goassign\ with running time $\OO^{*}(2^{o(k \log{k})})$, where $k=n+m+(\ell -\alpha)$.
\end{theorem}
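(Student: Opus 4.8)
The plan is to give a linear parameter reduction from \pc\ that is ``dual'' to the one in \Thcref{theorem:optimal-agents-items-alpha}: instead of forcing a permutation clique to correspond to an assignment that is pareto optimal in only $k$ layers, I would engineer the instance so that a permutation clique corresponds to an assignment that is pareto optimal in \emph{all but} $\OO(k)$ layers. As before, I would introduce an agent $a_i$ for each row $i\in[k]$ and an item $b_j$ for each column $j\in[k]$ (adding $\OO(k)$ dummy items if convenient), so that a candidate solution is naturally a permutation assignment $p(a_i)=b_{\pi(i)}$ and $n+m=\OO(k)$. I would then set $\alpha=\ell-ck$ for a suitable constant $c$, so that the target parameter satisfies $n+m+(\ell-\alpha)=\OO(k)$, and finally invoke \Prcref{proposition:eth-reduction} together with the $\OO^{*}(2^{o(k\log k)})$ lower bound of Lokshtanov et al.\ for \pc.

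The layers would come in two families. The first family consists of many identical \emph{agreement} layers, in each of which all agents share a single complete linear order over the items. By \Lecref{lemma:respectedAgents} such a layer admits no trading cycle, so it is pareto optimal for every bijection; moreover, leaving an item unassigned creates a self loop, so by \Prcref{prop:po-iff-no-cycles-in-td} these layers are pareto optimal exactly for the perfect assignments, i.e.\ the bijections. Their role is twofold: they force any sufficiently-globally-optimal assignment to be a bijection $\sigma$, and they inflate $\alpha$ to within $\OO(k)$ of $\ell$. The second family consists of \emph{conflict} layers, modelled on the profiles $P_{(i,j)}$ of the previous proof but now built from the common base order with a small number of controlled inversions, and designed so that such a layer becomes non--pareto optimal precisely when $\sigma$ selects a pair of \emph{non-adjacent} cells.

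In the forward direction, given a permutation clique $\pi$ I would take $p(a_i)=b_{\pi(i)}$: it satisfies every agreement layer, and since the selected cells are pairwise adjacent it violates only the $\OO(k)$ unavoidable conflict layers, so $p$ is $(\ell-ck)$-globally optimal. In the reverse direction, the agreement layers first pin any $\alpha$-globally optimal assignment down to a bijection $\sigma$; the conflict layers then force $\sigma$ to be a clique, because a single non-edge among the selected cells would trigger failures in strictly more than $\ell-\alpha$ layers, contradicting $\alpha$-global optimality. Handling degenerate assignments (e.g.\ assigning $b_{\emptyset}$ to many agents) is immediate, since such assignments already fail all but a few agreement layers.

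The step I expect to be the main obstacle is the precise design and analysis of the conflict layers. The natural two-cycle gadget tends to ``auto-complete'': a trading cycle on two items closes with whichever agents happen to hold them, so a single layer tends to detect one agent's choice rather than a \emph{pair} of choices, and careless preferences introduce spurious trading cycles in unrelated configurations. The crux is therefore to place the two relevant items adjacently in a common base order and use only adjacent transpositions, so that \Lecref{lemma:respectedAgents} neutralises every unintended cycle, and then to argue the counting exactly: that a genuine clique leaves all but $\OO(k)$ conflict layers pareto optimal, while every non-clique bijection fails in more than $\ell-\alpha$ of them. Getting this localisation of failures right, rather than the routine verification that $n+m+(\ell-\alpha)=\OO(k)$, is where the real work lies.
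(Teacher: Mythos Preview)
Your high-level plan---a linear-parameter reduction from \pc\ with $n+m=\OO(k)$ and $\ell-\alpha=\OO(k)$---is the right one, and you have correctly located the crux: a conflict layer must detect a \emph{pair} of selected cells, not a single cell. But the mechanism you sketch does not achieve this. With a single copy of agents and items, placing $b_{j_1}$ and $b_{j_2}$ adjacently in a common base order and giving only $a_{i_1}$ the transposition $b_{j_2}>b_{j_1}$ makes the layer admit a trading cycle precisely when $p(a_{i_1})=b_{j_1}$: whoever holds $b_{j_2}$ (any agent, since all others follow the base order and hence prefer $b_{j_1}$ to $b_{j_2}$) closes the 2-cycle with $a_{i_1}$. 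So your gadget tests one cell, not two---the auto-completion you feared survives the adjacency trick. This in turn forces you into the vague failure-counting you describe, and there is no clean threshold: even a clique permutation will fail many such one-cell layers (one for every non-neighbour of each selected vertex), while a non-clique permutation need not fail more.

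The paper avoids both problems at once by \emph{duplicating} the instance and setting $\alpha=\ell$. It takes agents $a_1,\ldots,a_k,c_1,\ldots,c_k$ and items $b_1,\ldots,b_k,d_1,\ldots,d_k$; one layer forces each $a_i$ into $\{b_j\}$ and each $c_i$ into $\{d_j\}$, and $k$ further layers force $p(a_i)=b_j\Leftrightarrow p(c_i)=d_j$, so any $\ell$-globally optimal assignment encodes a single permutation $\pi$ in both halves. For each non-edge $\{(i_1,j_1),(i_2,j_2)\}$ a dedicated layer is built so that only $a_{i_1}$ among the $a$'s places $d_{j_2}$ above $b_{j_1}$, and only $c_{i_2}$ among the $c$'s places $b_{j_1}$ above $d_{j_2}$; because the two halves use disjoint item pools, the unique possible trading cycle is $(a_{i_1},b_{j_1},c_{i_2},d_{j_2})$, which occurs iff $\pi(i_1)=j_1$ \emph{and} $\pi(i_2)=j_2$. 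A permutation clique therefore satisfies every layer, so one takes $\alpha=\ell$ (hence $\ell-\alpha=0$) and no failure-counting is needed at all; the parameter is $n+m+(\ell-\alpha)=4k$. The duplication is exactly the missing idea that breaks the auto-completion you identified.
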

\begin{proof}
We provide a different linear parameter reduction from \pc\ for the parameter $n + m + (\ell -\alpha)$.
Let $(G=(V,E),k)$ be an instance of \pc\ (recall that $V = [k] \times [k]$).
We create $2k$ agents $a_{1},\ldots,a_{k}$, $c_{1},\ldots,c_{k}$, and $2k$ items $b_{1},\ldots,b_{k}$, $d_{1},\ldots,d_{k}$. Then, we construct an instance of \goassign\ consisting of $\ell = 1+k+{\binom{k^{2}}{2}}-|E|$ layers as follows. Intuitively, the first layer requires the agents $a_{1},\ldots,a_{k}$ to accept only the items $b_{1},\ldots,b_{k}$, and the agents $c_{1},\ldots,c_{k}$ to accept only the items $d_{1},\ldots,d_{k}$. Formally, it is defined as follows:

\begin{framed}[.5\textwidth]
\begin{itemize}
\item $a_{i}\ $: $\ b_{1}>\ldots>b_{k}\ \ \forall i \in [k]$
\item $c_{i}\ $: $\ d_{1}>\ldots>d_{k}\ \ \forall i \in [k]$
\end{itemize}
\end{framed}

The next $k$ layers require $a_{i}$ and $c_{i}$ to ``agree'' with each other. Namely, $p(a_{i}) = b_{j}$ if and only if $p(c_{i})=d_{j}$. To this end, for each $j \in [k]$, we construct the following preference profile:

\begin{framed}
\begin{itemize}
\item $a_{i}\ $: $\ \lbrace b_{q} \mid q \in [k] \rbrace \setminus \lbrace b_{j} \rbrace$ (sorted in ascending order by $q$) $ >  \lbrace d_{q} \mid q \in [k] \rbrace > b_{j}\ \ \forall i \in [k]$ 
\item $c_{i}\ $: $\ d_{j}  > b_{j} > \lbrace d_{q} \mid q \in [k] \rbrace \setminus \lbrace d_{j} \rbrace$ (sorted in ascending order by $q$) $\forall i \in [k]$
\end{itemize}
\end{framed}

We construct additional ${\binom{k^{2}}{2}}-|E|$ layers (notice that there are exactly ${\binom{k^{2}}{2}}-|E|$ pairs of vertices $\lbrace (i_{1},j_{1}),(i_{2},j_{2}) \rbrace$ that are not adjacent in $G$). Each layer is defined with respect to a vertex pair $(i_{1},j_{1}) \neq (i_{2},j_{2})$ such that $\lbrace (i_{1},j_{1}),(i_{2},j_{2}) \rbrace \notin E$ as follows:

\begin{framed}
\begin{itemize}
\item $a_{i_{1}}\ $: $\ \lbrace b_{q} \mid q \in [k] \rbrace \setminus \lbrace b_{j_{1}} \rbrace$ (sorted in ascending order by $q$) $ >  d_{j_{2}} > b_{j_{1}}$
\item $c_{i_{2}}\ $: $\ \lbrace d_{q} \mid q \in [k] \rbrace \setminus \lbrace d_{j_{2}} \rbrace$ (sorted in ascending order by $q$) $ >  b_{j_{1}} > d_{j_{2}}$
\item $a_{i}\ $: $\ \lbrace b_{q} \mid q \in [k] \rbrace \setminus \lbrace b_{j_{1}} \rbrace$ (sorted in ascending order by $q$) $ > b_{j_{1}}\ \ \forall i \in [k]\setminus \lbrace i_{1} \rbrace$
\item $c_{i}\ $: $\ \lbrace d_{q} \mid q \in [k] \rbrace \setminus \lbrace d_{j_{2}} \rbrace$ (sorted in ascending order by $q$) $ > d_{j_{2}}\ \ \forall i \in [k] \setminus \lbrace i_{2} \rbrace$
\end{itemize}
\end{framed}

Finally, we set $\alpha = \ell$. The reduction can be done in a polynomial time using simple polynomial-time operations on graphs. Before we prove the correctness of the construction, let us prove the following:

\begin{myclaim} \label{claim:claim1}
Let $p$ be an $\ell$-globally optimal assignment for the constructed instance. Then, for each $i \in [k]$, $p(a_{i}) \in \lbrace b_{1} \ldots b_{k} \rbrace$, and $p(c_{i}) \in \lbrace d_{1},\ldots,d_{k} \rbrace$.
\end{myclaim}
\begin{proof}
Observe that in the first layer, $b_{1},\ldots,b_{k}$ are all only acceptable by $a_{1},\ldots,a_{k}$ and $d_{1},\ldots,d_{k}$ are all only acceptable by $c_{1},\ldots,c_{k}$. Hence, $p$ must allocate all the items. Towards a contradiction, suppose that there exists $b_{j}$ that is not assigned to any agent (the case when there exists $d_{j}$ that is not assigned to any agent is symmetric). This implies that there exists an agent $a_{i}$ that satisfies $p(a_{i})=b_{\emptyset}$ and therefore admits a self loop with $b_{j}$, a contradiction to the optimality of $p$ in the first layer.\qed 
\end{proof}

\begin{myclaim} \label{claim:claim2}
Let $p$ be an $\ell$-globally optimal assignment for the constructed instance. Then, for every $i,j \in [k]$, $p(a_{i})=b_{j}$ if and only if $p(c_{i})=d_{j}$.
\end{myclaim}
\begin{proof}
Towards a contradiction, suppose there exist $i,j \in [k]$ such that $p(a_{i})=b_{j}$ and $p(c_{i}) \neq d_{j}$ (the second case where $p(c_{i})=d_{j}$ and $p(a_{i}) \neq b_{j}$ is symmetric). By \Clcref{claim:claim1}, $p(c_{i}) = d_{q}$ such that $q \neq j$. Note that $p$ admits the trading cycle $(a_{i},b_{j},c_{i},d_{q})$ in layer $j$, a contradiction to the $\ell$-global optimality of $p$.\qed
\end{proof}

\begin{myclaim} \label{claim:claim3}
Let $p$ be an $\ell$-globally optimal assignment for the constructed instance. Let $\pi: [k] \rightarrow [k]$ be defined as follows: $\pi(i)= j$ such that $p(a_{i})=b_{j}$ for each $i \in [k]$. Let $V' = \lbrace (i,\pi(i)) \mid i \in [k] \rbrace$. Then $V'$ is the vertex set of a $k \times k$-permutation clique in $G$.
\end{myclaim}
\begin{proof}
Observe that $\pi$ is a well-defined permutation on $[k]$ since $p$ allocates all the items in $\lbrace b_{1},\ldots,b_{k} \rbrace$. Towards a contradiction, suppose there exist different $i_{1},i_{2} \in [k]$ such that $\lbrace (i_{1},\pi(i_{1})),(i_{2},\pi(i_{2})) \rbrace \notin E$. First, by Claims \ref{claim:claim1} and \ref{claim:claim2}, $p(c_{i_{1}}) = d_{\pi(i_{1})}$ and $p(c_{i_{2}}) = d_{\pi(i_{2})}$. Second, by the construction of the instance, there exists a layer among the ${\binom{k^{2}}{2}} - |E|$ last layers which corresponds to the non-adjacent pair $\lbrace (i_{1},\pi(i_{1})),(i_{2},\pi(i_{2}) \rbrace$. Note that $p$ admits the trading cycle $(a_{i_{1}},b_{\pi(i_{1})},c_{i_{2}},d_{\pi(i_{2})})$ in this layer, a contradiction.\qed
\end{proof}

We now prove that there exists a $k \times k$-permutation clique in $G$ if and only if there exists an $\ell$-globally optimal assignment for the constructed instance.

($\Rightarrow$): Let $\pi$ be a permutation on $[k]$ such that $V' = \lbrace (1,\pi(1)),\ldots,(k,\pi(k)) \rbrace$ is the vertex set of a $k \times k$-permutation clique in $G$. Let us define an assignment $p$ by $p(a_{i}) = b_{\pi(i)}$, and $p(c_{i}) = d_{\pi(i)}$ for each $i \in [k]$. Observe that both $\lbrace a_{1},\ldots,a_{k} \rbrace$ and $\lbrace c_{1},\ldots,c_{k} \rbrace$ contain agents that pairwise respect each other. Since each $d_{j}$ is not acceptable by any $a_{i}$, and each $b_{j}$ is not acceptable by any $c_{i}$, we have by \Lecref{lemma:respectedAgents} that $p$ does not admit trading cycles in the first layer. Moreover, since all the items are allocated, it does not admit self loops in the first layer, implying that $p$ is pareto optimal in the this layer. Furthermore, $p$ is also pareto optimal in the next $k$ layers since $a_{i}$ and $c_{i}$ ``agree'' with each other for each $i \in [k]$. Formally, for every $j \in [k]$, both $\lbrace a_{1},\ldots,a_{k} \rbrace$ and $\lbrace c_{1},\ldots,c_{k} \rbrace$ contain agents that pairwise respect each other in layer $1+j$. By the construction of layer $1+j$, and by \Lecref{lemma:respectedAgents}, the only possible trading cycle in this layer occurs when $p(a_{i})=b_{j}$, and $p(c_{i}) = d_{q}$ such that $q \neq j$. We claim that $p$ is also pareto optimal in the last ${\binom{k^{2}}{2}} - |E|$ layers. Suppose, for the sake of contradiction, that there exists a pair $\lbrace (q_{1},r_{1}),(q_{2},r_{2}) \rbrace \notin E$, such that $p$ is not pareto optimal in its corresponding layer. By \Lecref{lemma:respectedAgents}, the only possible trading cycle in this layer is $(a_{q_{1}},b_{r_{1}},c_{q_{2}},d_{r_{2}})$. Thus, we have that $p(a_{q_{1}})=b_{r_{1}}$ and $p(c_{q_{2}})=d_{r_{2}}$. By the construction of $p$, both $(q_{1},r_{1})$ and $(q_{2},r_{2})$ are vertices in $V'$, a contradiction. 

($\Leftarrow$): Let $p$ be an $\ell$-globally optimal assignment for the constructed instance. By \Clcref{claim:claim3}, there exists a $k \times k$-permutation clique in $G$.\qed 
\end{proof}

\begin{theorem}\label{w-one-hardness-items-alpha}
\goassign\ is \WOH\ for the parameter $k=m+\alpha$.
\end{theorem}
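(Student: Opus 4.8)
The plan is to give a parameterized reduction from \MCIS, which is \WOH\ when parameterized by the number $k$ of color classes. Given an instance $(G,V_{1},\ldots,V_{k})$ of \MCIS\ (with $V(G)=V_{1}\cup\cdots\cup V_{k}$, and we seek an independent set containing exactly one vertex of each $V_{i}$), I would build an instance of \goassign\ using only $m=k$ items and setting $\alpha=k$, so that $m+\alpha=2k$ is linear in the \MCIS\ parameter; the numbers of agents and of layers will be allowed to grow polynomially in $|V(G)|$. Concretely, introduce one item $b_{i}$ per class, one agent $a_{v}$ per vertex $v\in V(G)$, and one layer $L_{v}$ per vertex $v$. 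The intended correspondence is that a choice of representatives $v_{i}\in V_{i}$ corresponds to the assignment $p$ with $p(a_{v_{i}})=b_{i}$ for all $i$ and $p(a_{w})=b_{\emptyset}$ otherwise, and that such a $p$ is pareto optimal precisely in the $k$ layers $L_{v_{1}},\ldots,L_{v_{k}}$.

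The layer $L_{v}$, for $v\in V_{i}$, is engineered so that every preference list in it has length at most one, which (via \Lecref{lemma:respectedAgents}, as any two such lists trivially respect each other) rules out trading cycles; by \Prcref{prop:po-iff-no-tc-and-sl} pareto optimality in $L_{v}$ then amounts to the absence of self-loops. I would set: $a_{v}$ ranks only $b_{i}$; every agent $a_{w}$ with $w\in V_{i}\setminus\{v\}$ ranks nothing; and every agent $a_{w}$ with $w\in V_{r}$, $r\neq i$, ranks the single item $b_{r}$ if $\{v,w\}\notin E(G)$ and ranks nothing otherwise. This design serves two purposes. First, $b_{i}$ is acceptable to the unique agent $a_{v}$ in $L_{v}$, so --- exactly as in the self-loop argument of \Thcref{theorem:optimal-agents-items-alpha} --- pareto optimality in $L_{v}$ forces $p(a_{v})=b_{i}$. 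Second, whether $a_{w}$ may legally hold its class item $b_{r}$ in $L_{v}$ records precisely the non-adjacency of $w$ and $v$, so that \emph{legality} within a single layer encodes a non-edge.

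For correctness I would argue both directions. For the forward direction, given a multicolored independent set $\{v_{1},\ldots,v_{k}\}$ with $v_{i}\in V_{i}$, take the assignment $p$ above. In each layer $L_{v_{i}}$ all $k$ items are allocated (to $a_{v_{1}},\ldots,a_{v_{k}}$), so no item is free and there is no self-loop; the assignment $p(a_{v_{r}})=b_{r}$ is legal in $L_{v_{i}}$ because $\{v_{i},v_{r}\}\notin E(G)$; and all lists have length at most one, so there is no trading cycle. Hence $p$ is pareto optimal in $L_{v_{1}},\ldots,L_{v_{k}}$ and is $k$-globally optimal. For the reverse direction, suppose $p$ is $k$-globally optimal, pareto optimal in $k$ necessarily distinct layers $L_{u_{1}},\ldots,L_{u_{k}}$. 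The forcing above gives $p(a_{u_{j}})=b_{c(u_{j})}$, where $c(u_{j})$ denotes the class of $u_{j}$; since each item is used at most once, the classes $c(u_{1}),\ldots,c(u_{k})$ are pairwise distinct, hence equal to $1,\ldots,k$, so after relabeling $u_{i}\in V_{i}$ and $p(a_{u_{i}})=b_{i}$. Finally, fixing any layer $L_{u_{i}}$, legality of $p$ there forces $a_{u_{r}}$ to accept its held item $b_{r}$ for every $r\neq i$, which by construction means $\{u_{i},u_{r}\}\notin E(G)$; ranging over all $i$ shows $\{u_{1},\ldots,u_{k}\}$ is a multicolored independent set.

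The conceptual heart --- and the step I expect to require the most care --- is enforcing the global, $\binom{k}{2}$-pair independence constraint while keeping both $\alpha$ and $m$ linear in $k$. The resolution is to offload the choice of representative into the layer index (one layer per vertex, so $\ell$ is large but $m+\alpha$ stays small) and to let each of the $k$ selected layers certify, through legality alone, that its distinguished vertex is non-adjacent to all others; the $k$ layers then jointly witness all pairwise non-edges. The delicate points are the single-acceptable-agent self-loop argument that pins each $p(a_{v})$ to the correct class item and the deduction that exactly one vertex per class is selected, both paralleling \Thcref{theorem:optimal-agents-items-alpha}. As the construction runs in polynomial time and yields an instance with $m+\alpha=2k$, it is a parameterized reduction in the sense of Definition~\ref{definition:parameterized-reduction}, establishing \WOH ness for $m+\alpha$.
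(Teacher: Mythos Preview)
Your proposal is correct and follows essentially the same approach as the paper: reduce from \MCIS\ with one agent per vertex, one item per color, one layer per vertex, $\alpha=k$, and in layer $L_{v}$ give $a_{v}$ the singleton list $\{b_{c(v)}\}$, give same-color agents the empty list, and give a differently-colored agent $a_{w}$ the singleton $\{b_{c(w)}\}$ iff $\{v,w\}\notin E$. Both the forcing argument ($p(a_{v})=b_{c(v)}$ via a self-loop) and the legality-encodes-non-edges argument are identical to the paper's; the only cosmetic difference is that the paper observes directly that length-$\le 1$ lists preclude trading cycles rather than invoking \Lecref{lemma:respectedAgents}.
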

\begin{proof}

We provide a parameterized (and also polynomial time) reduction from the \WOH\ problem \MCIS\ to \goassign. The input of \MCIS\ consists of an undirected graph $G=(V,E)$, and a coloring $c:\ V \rightarrow [k]$ that colors the vertices in $V$ with $k$ colors. The task is to decide whether $G$ admits a {\em multicolored} independent set of size $k$, which is an independent set (i.e.~a vertex subset with pair-wise non-adjacent vertices) $V' \subseteq V$ that satisfies $\lbrace c(v') \mid v' \in V' \rbrace = [k]$ and $|V'| = k$. 

Given an instance $(G=(V,E),c)$ of \MCIS, denote $V= \lbrace v_{1},\ldots,v_{n} \rbrace$. We construct an instance of \goassign\ consisting of $n$ layers, with agent set $A = \lbrace a_{1},\ldots,a_{n} \rbrace$, and item set $I= \lbrace {b_{1},\ldots,b_{k}} \rbrace$. Intuitively, each agent $a_{i}$ corresponds to the vertex $v_{i}$, and each item $b_{i}$ corresponds to the color $i \in [k]$. We construct the instance such that the items are allocated to agents whose corresponding vertices form a multicolored independent set. For each $i \in [n]$, we construct the preference profile in layer $i$, denoted by $P_{i}$, as follows:

\begin{framed}
\begin{itemize}
  \item $a_{i}\ $: $b_{c(v_{i})}$
  \item $a_{j}\ :\ \emptyset\ \ \forall j \in [n] \setminus \lbrace i \rbrace$ and $c(v_{j})=c(v_{i})$
  \item $a_{j}\ :\ \emptyset\ \ \forall j \in [n] \setminus \lbrace i \rbrace$ such that $\lbrace v_{i},v_{j} \rbrace \in E$ and $c(v_{j}) \neq c(v_{i})$
  \item $a_{j}\ :\  b_{c(v_{j})}\ \ \forall j \in [n] \setminus \lbrace i \rbrace$ such that $\lbrace v_{i} , v_{j} \rbrace \notin E$ and $c(v_{j}) \neq c(v_{i})$
\end{itemize}
\end{framed}
Finally, we set $\alpha = k$. The construction clearly can be done in a polynomial time. Before we prove the correctness of the reduction, let us prove the following claim.
\begin{myclaim}\label{claim:reduction-multicolored-claim}
Let $p: A \rightarrow I \cup \lbrace b_{\emptyset} \rbrace$ be an assignment. Then for any $i \in [n]$, if $p$ is pareto optimal in layer $i$, then it satisfies that $p(a_{i})=b_{c(v_{i})}$, and for each $j \in [n] \setminus \lbrace i \rbrace$ such that $\lbrace v_{i},v_{j} \rbrace \in E$,  $p(a_{j})=b_{\emptyset}$. 
\end{myclaim}
\begin{proof}
Let $i \in [n]$. Assume that $p$ is pareto optimal in layer $i$. Note that $a_{i}$ is the only agent that accepts $b_{c(v_{i})}$ in $P_{i}$ and this is the only item it accepts. Hence, $p$ must allocate $b_{i}$ to $a_{i}$, as otherwise, $a_{i}$ would be a part of a self loop. Moreover, for each $j \in [n]$ such that $\lbrace v_{i},v_{j} \rbrace \in E$, the preference list of $a_{j}$ in $P_{i}$ is empty, implying that $p(a_{j}) = b_{\emptyset}$.\qed 
\end{proof}

\begin{figure}[t]
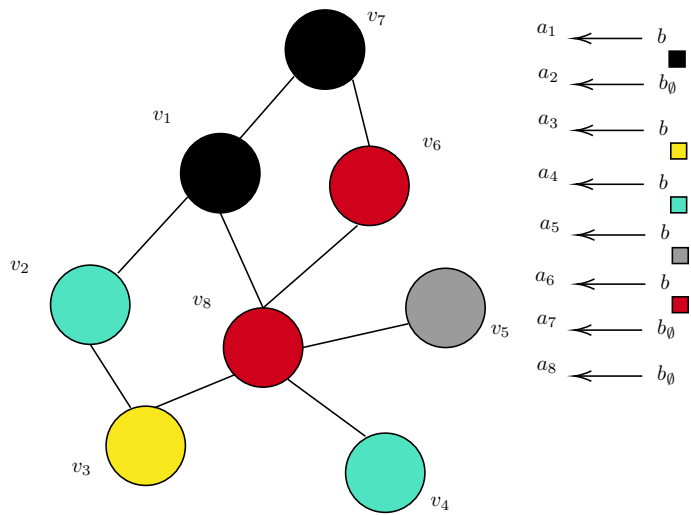

\centering
\scalebox{0.8}{
\tikzfig{Figures/W1ItemsPlusAlpha}
}
\caption{In the presented graph colored using 5 colors, $\lbrace v_{1},v_{3},v_{4},v_{5},v_{6} \rbrace$ is a multicolored independent set of size 5. The contsructed instance contains $8$ layers, one for each vertex; and the presented assignment is 5-globally optimal since it is pareto optimal in layers $1,3,4,5,$ and 6.}
\label{fig:W1ItemsPlusAlpha}
\end{figure}

We now prove that $G$ admits a multicolored independent set of size $k$ if and only if the constructed instance admits a $k$-globally optimal assignment.

($\Rightarrow$) Assume that $G$ admits a multicolored independent set $V' = \lbrace v_{i_{1}},\ldots,v_{i_{k}} \rbrace$. We define an assignment $p$ by $p(a_{i_{j}}) = b_{c(v_{j})}$ for each $j \in [k]$ (see Figure \ref{fig:W1ItemsPlusAlpha}), and $p(a_{i}) = b_{\emptyset}$ for each $i \notin \lbrace i_{1},\ldots,i_{k} \rbrace$. Note that the assignment does not assign the same item to two or more agents since the colors of the vertices in $V'$ are distinct. We claim that $p$ is $k$-globally optimal for the constructed instance. First, the agents $a_{i_{1}},\ldots,a_{i_{k}}$ contain non-empty preference lists in the preference profiles $P_{i_{1}},\ldots,P_{i_{k}}$ since their corresponding vertices are pair-wise non-adjacent and colored with distinct colors. Hence, each item is acceptable by its owner in each $P_{i_{j}}$. Second, $P_{i_{1}},\ldots,P_{i_{k}}$ cannot include trading cycles since each preference list is of length at most one.

($\Leftarrow$) Suppose there exists a $k$-globally optimal assignment $p$ for the constructed instance. Then, there exist $k$ layers $i_{1},\ldots,i_{k}$ such that $p$ is pareto optimal in each $P_{i_{j}}$. \Prcref{claim:reduction-multicolored-claim} implies that for each $j \in [k]$, $p(a_{i_{j}})=b_{c(v_{i_{j}})}$, and for each $q \in [n]$ such that $\lbrace v_{q},v_{j} \rbrace \in E$, $p(a_{q}) = b_{\emptyset}$. We claim that $V' = \lbrace v_{i_{1}},\ldots,v_{i_{k}} \rbrace$ is a multicolored independent set. First, we show that it forms an independent set. Towards a contradiction, suppose there exist $j_{1},j_{2} \in [k]$ such that $\lbrace v_{i_{j_{1}}} , v_{i_{j_{2}}} \rbrace \in E$. Then, the preference list of $a_{i_{j_{2}}}$ in $P_{i_{j_{1}}}$ is empty, and therefore $p$ is not legal in $P_{i_{j_{1}}}$, a contradiction to its pareto optimality in $P_{i_{j_{1}}}$. Second, we show that it is multicolored. Towards a contradiction, suppose there exist $j_{1},j_{2} \in [k]$ such that $c(v_{i_{j_{1}}}) = c(v_{i_{j_{2}}})$. By \Clcref{claim:reduction-multicolored-claim}, since $p$ is pareto optimal in layers $i_{j_{1}}$ and $i_{j_{2}}$, it satisfies $p(a_{i_{j_{1}}}) = b_{c({v_{i_{j_{1}}}})}$ and $p(a_{i_{j_{2}}}) = b_{c(v_{i_{j_{1}}})}$. We have that $p$ assigns $b_{c(v_{i_{j_{1}}})}$ to two agents, a contradiction.\qed
\end{proof}

We now provide a similar hardness result for the parameter $m + (\ell - \alpha)$.

\begin{theorem}\label{theorem:w-one-hardness-items-l-alpha}
\goassign\ is \WOH\ for the parameter $k=m+(\ell - \alpha)$.
\end{theorem}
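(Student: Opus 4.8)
The plan is to give a parameterized (and polynomial-time) reduction from \MCIS, mirroring the reduction of \Thcref{w-one-hardness-items-alpha} but redesigned so that the gap $\ell-\alpha$ --- rather than $\alpha$ --- stays bounded. Note that the number of agents $n$ must remain unbounded here: the summary table records that \goassign\ is \FPT\ for $n+m+(\ell-\alpha)$, so any hardness reduction for $m+(\ell-\alpha)$ is forced to produce instances with arbitrarily many agents (otherwise it would contradict fixed-parameter tractability). Accordingly, given an instance $(G=(V,E),c\colon V\to[k])$ of \MCIS\ (we may assume every color class is nonempty, as otherwise we output a trivial \noinstance), I would introduce one item $b_r$ per color $r\in[k]$, so that $m=k$, and one agent $a_v$ per vertex $v\in V$. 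The intended meaning is that assigning $b_r$ to $a_v$ encodes ``$v$ is the chosen vertex of color $r$'', and the construction will force the chosen vertices to form a multicolored independent set.

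The layers come in two kinds, and I would set $\alpha=\ell$ so that $\ell-\alpha=0$. The first layer $L_{0}$ gives every agent $a_v$ the singleton list $b_{c(v)}$. Since all lists are singletons, all agents pairwise respect each other, so by \Lecref{lemma:respectedAgents} $L_0$ admits no trading cycle; by \Prcref{prop:po-iff-no-tc-and-sl}, being pareto optimal in $L_0$ is then equivalent to admitting no self loop, which forces $b_r$ to be assigned to some color-$r$ agent for every $r$ --- i.e.\ exactly one vertex per color is chosen. For each edge $e=\{u,w\}\in E$ with $c(u)\neq c(w)$ I would add a layer $L_e$ in which $a_u$ has list $b_{c(w)}>b_{c(u)}$, the agent $a_w$ has list $b_{c(u)}>b_{c(w)}$, and every other agent $a_x$ keeps the singleton list $b_{c(x)}$. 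Here only $a_u$ and $a_w$ fail to respect each other, so the sole possible trading cycle is $(a_u,b_{c(u)},a_w,b_{c(w)})$, which is present precisely when both $u$ and $w$ are chosen.

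For correctness, in the forward direction a multicolored independent set $\{s_1,\dots,s_k\}$ yields the assignment $p(s_r)=b_r$ with all other agents mapped to $b_\emptyset$: all $k$ items are assigned, so no layer has a self loop; $L_0$ has no trading cycle; and no edge layer $L_e$ has its cycle, since the independent set contains no edge. Hence $p$ is pareto optimal in all $\ell$ layers. Conversely, an $\ell$-globally optimal $p$ is pareto optimal in $L_0$, which selects one vertex per color, and pareto optimal in every $L_e$, which forbids both endpoints of any distinctly-colored edge from being selected; together these say the selected vertices form a multicolored independent set of size $k$. Since $m+(\ell-\alpha)=k$, this is a valid parameterized reduction. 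The step requiring the most care --- the main obstacle --- is verifying that a \emph{single} assignment $p$ can be pareto optimal in \emph{all} layers simultaneously: one must check that $p$ stays legal in every layer (guaranteed because each chosen agent's own-color item is acceptable to it in every layer, $L_0$ and edge layers alike) and that the swap gadget in each $L_e$ introduces no self loops and no trading cycle other than the intended pair (guaranteed because $L_0$ forces all items to be assigned, removing every available item, while all agents outside $\{u,w\}$ keep singleton lists and therefore cannot participate in any cycle).
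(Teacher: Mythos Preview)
Your proof is correct. Both you and the paper reduce from \MCIS\ with one agent per vertex, one item per color, $\alpha=\ell$ (so $\ell-\alpha=0$), and an identical first layer in which every agent's list is the singleton $\{b_{c(v)}\}$, forcing exactly one vertex per color to be selected. The difference lies in how the independence constraint is encoded. The paper adds one layer per \emph{vertex} $v_i$: in layer $1+i$ the agent $a_i$ ranks all $k$ items with $b_{c(v_i)}$ last, and each differently-colored neighbor $a_j$ has the list $b_{c(v_i)}>b_{c(v_j)}$, yielding a trading cycle in that layer precisely when $v_i$ and some neighbor are both selected. You instead add one layer per differently-colored \emph{edge} $\{u,w\}$, swapping only the two endpoints' preferences. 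Your per-edge gadget is cleaner to verify (each edge layer has a single possible trading cycle, of length two, and only two non-singleton lists), whereas the paper's per-vertex design keeps the number of layers at $|V|+1$ rather than up to $|E|+1$; since only $m+(\ell-\alpha)=k$ matters for the parameter, both are valid parameterized reductions.
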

\begin{proof}
We provide a different parameterized (and polynomial time) reduction from \MCIS\ to \goassign.

Given an instance $(G=(V,E),c)$ of \MCIS, assume that $V= \lbrace v_{1},\ldots,v_{n} \rbrace$. We construct an instance of \goassign\ with agent set $A = \lbrace a_{1},\ldots,a_{n} \rbrace$ and item set $I = \lbrace b_{1},\ldots,b_{k} \rbrace$, consisting of $\ell = n+1$ layers. Informally speaking, each agent $a_{i}$ corresponds to a vertex $v_{i}$, and each item $b_{i}$ corresponds to a color $i$. We construct an instance such that the agents that allocate items from $I$ in an $\ell$-globally optimal assignment correspond to vertices which form a multicolored independent set in $G$. Moreover, we require each agent to allocate either the item that corresponds to its color with respect to $c$, or $b_{\emptyset}$. The first layer is defined as follows:
 
\begin{framed}[.4\textwidth]
\begin{itemize}
  \item $a_{i}\ $: $\ b_{c(i)}\ \forall i \in [k]$
\end{itemize}
\end{framed}

We construct $n$ additional layers. For each $i \in [n]$, layer $1+i$ is defined as follows:

\begin{framed}
\begin{itemize}
  \item $a_{i}\ $:$\ \lbrace b_{j} \mid j \in [k] \rbrace \setminus \lbrace b_{c(v_{i})} \rbrace$ (ordered arbitrarily)$\ >b_{c(v_{i})}$
   \item $a_{j}\ $:$\ b_{c(v_{j})}\ \ \forall j \in [n] \setminus \lbrace i \rbrace$ such that (1) $\lbrace v_{i},v_{j} \rbrace \in E$ and $c(v_{j}) = c(v_{i})$ or (2) $\lbrace v_{i},v_{j} \rbrace \notin E$
   \item $a_{j}\ $:$\ b_{c(v_{i})} > b_{c(v_{j})}\ \ \forall j \in [n] \setminus \lbrace i \rbrace$ such that $\lbrace v_{i},v_{j} \rbrace \in E$ and $c(v_{j}) \neq c(v_{i})$
\end{itemize}
\end{framed}
We finally set $\alpha = \ell$. We claim that $G$ admits a multicolored independent set of size $k$ if and only if there exists an $\ell$-globally optimal assignment for the constructed instance.

($\Rightarrow$): Suppose $G$ admits a multicolored independent set of size $k$, $V' = \lbrace v_{i_{1}},\ldots,v_{i_{k}} \rbrace$. We define an assignment $p$ by $p(a_{i_{j}}) = b_{c(v_{i_{j}})}$ for each $j \in [k]$, and $p(a_{i})=b_{\emptyset}$ for each $i \notin \lbrace i_{1},\ldots,i_{k} \rbrace$. We claim that $p$ is pareto optimal in all the layers. First, observe that for each $i \in [k]$, $p(a_{i})$ is acceptable by $a_{i}$ in each layer, and each layer cannot admit self loops since all the items are allocated. Second, $p$ is pareto optimal in the first layer: since each agent accepts only a single item, no subset of agents can admit a trading cycle. Third, we claim that $p$ is pareto-optimal in layer $1+i$ for each $i \in [n]$. Towards a contradiction, suppose that there exists $i \in [n]$ such that $p$ is not pareto optimal in layer $1+i$. By \Prcref{prop:po-iff-no-tc-and-sl} there exists a trading cycle in layer $1+i$. Observe that a trading cycle in layer $1+i$ cannot contain agents with a single item on their preference list. Moreover, it cannot contain only agents from $\lbrace a_{j} \mid \lbrace v_{i},v_{j} \rbrace \in E \text{ and } c(v_{j}) \neq c(v_{i}) \rbrace$ since each such agent ranks $b_{c(v_{i})}$ first in its preference list. Therefore, we have that the only possible trading cycle in layer $1+i$ consists of the agent $a_{i}$ and an agent $a_{r}$ such that $p(a_{i}) = b_{c(v_{i})}$, $c(v_{i}) \neq c(v_{r})$, $\lbrace v_{i},v_{r} \rbrace \in E$, and $p(a_{r}) = b_{c(v_{r})}$. Then, by the construction of $p$, we have that $v_{i},v_{r} \in V'$, a contradiction.

($\Leftarrow$): Suppose there exists an $\alpha$-globally optimal assignment $p$ for the constructed instance. We first show that $p$ must allocate all the items. Towards a contradiction, assume that there exists $j \in [k]$ such that $b_{j}$ is not assigned to any agent. Since $n \geq k$, there exists an agent $a_{t}$ that is not assigned to an item. Notice that $a_{t}$ accepts all the items in layer $1+t$, thus, $a_{t}$ and $b_{j}$ admit a self loop, a contradiction. Let $a_{i_{1}},\ldots,a_{i_{k}}$ denote the agents that are allocated an item from $I$. We claim that $p(a_{i_{j}}) = b_{c(v_{i_{j}})}$ for each $j \in [k]$. For the sake of contradiction, assume that there exists $j \in [k]$ such that $p(a_{i_{j}}) \neq b_{c(v_{i_{j}})}$. We consider two cases: (1) $p(a_{i_{j}})=b_{\emptyset}$. In this case, $p$ admits a self loop among $a_{i_{j}}$ and $b_{c(v_{i_{j}})}$ in the first layer, a contradiction. (2) $p(a_{i_{j}})=b_{r} \neq b_{c(v_{i_{j}})}$. In this case, $p$ is not legal in the first layer, a contradiction. This implies that the vertices in the set $V' = \lbrace v_{i_{1}},\ldots,v_{i_{k}} \rbrace$ are colored with all the colors. We claim that $V'$ is a also an independent set. Towards a contradiction, assume there exist $j,r \in [k]$ such that $\lbrace v_{i_{j}},v_{i_{r}} \rbrace \in E$. Then we have that $a_{i_{j}}$ and $a_{i_{r}}$ form the trading cycle $(a_{i_{j}},b_{c(v_{i_{j}})},a_{i_{r}},b_{c(v_{i_{r}})})$ in layer $1+i_{r}$, a contradiction to \Prcref{prop:po-iff-no-tc-and-sl}.\qed
\end{proof}
\section{\NP-Hardness}
\label{sec:NPHardness}
In this section, we prove that for any $2 \leq \alpha \leq \ell$, \goassign\ is \NPH. After that, we provide three constructions of preference profiles given an instance of \ThreeSat. Then, we will rely on these constructions to prove that \goassign\ is \paraH\ with respect to the parameter $\ell + d$.

\begin{theorem} \label{theorem:np-hardness-for-larger-ell} 
For any $2 \leq \alpha \leq \ell$, \goassign\ is \NPH.
\end{theorem}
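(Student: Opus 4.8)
The plan is to give a polynomial-time many-one reduction from \serDictFeas, which is \NPH\ (its hardness can be extracted from the reduction of Aziz et al.~\cite{ijcai2017-12}). Recall that \serDictFeas\ asks, for a single-layer \assign\ instance together with a prescribed feasibility constraint (such as forcing a designated agent to receive a designated item, or a designated item to remain unallocated), whether some pareto optimal assignment --- equivalently, by \Prcref{prop:po-iff-sd}, some serial dictatorship outcome --- satisfies the constraint. The conceptual reason hardness can appear only once $\alpha \ge 2$ is that, by \Cocref{corollary:number-of-po-assign}, a pareto optimal assignment always exists in every single layer, so $1$-global optimality is trivial; the difficulty must therefore be manufactured by forcing a single global assignment to be simultaneously pareto optimal in two distinct profiles, and it is exactly this ``intersection'' that will encode the feasibility constraint.

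Concretely, fix the target pair $2 \le \alpha \le \ell$. Given a \serDictFeas\ instance $(A,I,P)$ with its constraint, I would build an $\ell$-layer instance on (a mild augmentation of) the same agents and items. Two layers do the real work: in the first I place a lightly modified copy of $P$, so that pareto optimality there --- characterized via the trading graph by \Prcref{prop:po-iff-no-cycles-in-td} --- corresponds to being a serial dictatorship outcome of the original instance; in the second I engineer a profile whose pareto optimal assignments are precisely those meeting the constraint (for instance, by letting the designated agent accept only the designated item and using short lists, so that \Prcref{prop:po-iff-no-tc-and-sl} leaves a self-loop as the only possible obstruction). The intended invariant is that an assignment is pareto optimal in both of these working layers if and only if it is a constraint-respecting serial dictatorship outcome of $P$, i.e.\ if and only if the \serDictFeas\ instance is a \yes-instance.

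To reach a general threshold I would pad with two kinds of layers. First, $\alpha-2$ \emph{free} layers in which all agents share one common complete linear order over the items: by \Lecref{lemma:respectedAgents} no assignment admits a trading cycle in such a layer, and if the intended assignment allocates all items then it admits no self-loop either, so it is automatically pareto optimal there. Second, $\ell-\alpha$ \emph{neutral} layers designed so that the intended (all-items-allocated) assignment is \emph{not} pareto optimal in them --- e.g.\ by inserting a dummy top item that is never genuinely available and hence forces a self-loop --- while no assignment can exploit them to cheat its way past the threshold. By \Lecref{lemma:verification} the produced instance is a \yes-instance\ of \goassign\ exactly when some assignment is pareto optimal in at least $\alpha$ layers; the design should guarantee that in the \yes\ direction the witness is pareto optimal in the two working layers together with all $\alpha-2$ free layers (total $\ge \alpha$), whereas in the \no\ direction every assignment is pareto optimal in at most one of the two working layers and thus in at most $\alpha-1$ layers overall.

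The main obstacle is the padding and its counting: the neutral layers must be arranged so that they neither grant ``free'' pareto optimality to some unintended assignment (which would create a false \yes-instance) nor get satisfied en masse by a degenerate assignment such as the all-unassigned one, and this has to hold uniformly for every admissible pair $(\alpha,\ell)$. A secondary but genuine difficulty is the mutual consistency of the two working layers: the single global assignment must be simultaneously legal and pareto optimal in both, which constrains exactly which items may be declared acceptable in each layer. I would settle both points by analyzing each layer through its trading graph (\Prcref{prop:po-iff-no-cycles-in-td}) and invoking \Lecref{lemma:respectedAgents} to rule out trading cycles wherever common orders are used, and then verifying the two directions of the equivalence separately.
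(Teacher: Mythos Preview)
Your high-level plan --- reduce from \serDictFeas\ via Aziz et al., encode the hard problem in two ``working'' layers, and pad up to the target $(\alpha,\ell)$ --- is exactly the paper's approach. Two points, however, separate your sketch from a complete proof, and the paper handles both more cleanly.

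First, the paper uses Aziz et al.'s construction as a black box: they already provide two profiles $P_1,P_2$ such that the \serDictFeas\ instance is a \yes-instance iff some assignment is pareto optimal in both. There is no need to re-engineer ``one layer for $P$, one layer for the constraint'' yourself; quoting the existing reduction removes the secondary difficulty you flag about mutual legality of the two working layers.

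Second, and this is the genuine gap, your neutral-layer device does not do what you need. You require that in the \no\ direction no assignment reaches $\alpha$ layers, but every preference profile has \emph{some} pareto optimal assignment (\Cocref{corollary:number-of-po-assign}), so ``a dummy top item that forces a self-loop'' cannot rule out all assignments in a neutral layer; some assignment will simply allocate that dummy item and then, together with one working layer and the $\alpha-2$ free layers, hit the threshold. The paper's trick is to introduce $\ell-\alpha$ \emph{fresh} items $c_1,\ldots,c_{\ell-\alpha}$ and, in neutral layer $i$, let the designated agent $a$ accept only $c_i$ while every other agent accepts nothing. Then the \emph{unique} pareto optimal assignment in that layer gives $c_i$ to $a$ and $b_\emptyset$ to everyone else. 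These unique optima are pairwise incompatible (different $c_i$'s) and incompatible with $P_1,P_2$ (where $c_i$ is not acceptable), so any $\alpha$-globally optimal assignment is forced to be pareto optimal in all of the first $\alpha$ layers. Those first $\alpha$ layers are simply $P_1$, $P_2$, and $\alpha-2$ further \emph{copies of $P_1$}; this also makes your separate ``common complete order'' free layers unnecessary.
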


\begin{proof}
We rely on a reduction made by Aziz et al.~\cite{ijcai2017-12} from the \serDictFeas\ problem, which was proved to be \NPH\ by Saban and Sethuraman \cite{RePEc:inm:ormoor:v:40:y:2015:i:4:p:1005-1014}. In the \serDictFeas\ problem, the input is a tuple $(A,I,P,a,b)$ where $A$ is a set of $n$ agents, $I$ is a set of $n$ items, $P$ is the preference profile in which each agent has a complete linear order on the items, $a \in A$, and $b \in I$. The task is to decide whether there exists a permutation for which serial dictatorship (defined in Section \ref{sec:preliminaries}) allocates item $b$ to agent $a$.  Given such $(A,I,P,a,b)$, Aziz et al.~\cite{ijcai2017-12} constructed two preference profiles, $P_{1}$ and $P_{2}$, such that $(A,I,P,a,b)$ is a \yesinstance\ if and only if there exists an assignment that is pareto optimal in both $P_{1}$ and $P_{2}$.

Let $(A,I,P,a,b)$ be an instance of \serDictFeas. We construct the aforementioned preference profiles $P_{1}$ and $P_{2}$. We also add $\ell-\alpha$ additional new items $c_{1},\ldots,c_{\ell-\alpha}$, and define $I'=I \cup \lbrace c_{1},\ldots,c_{\ell-\alpha} \rbrace$. We construct $\ell - \alpha$ preference profiles $P_{1}',\ldots,P_{\ell - \alpha}'$, where for each $i \in [\ell - \alpha]$, $P_{i}'$ is defined as follows:
\begin{framed}[.38\textwidth]
\begin{itemize}
  \item $a\ $: $\ c_{i}$
  \item $a'\ :\ \emptyset\ \ \forall a' \in A \setminus \lbrace a \rbrace$
\end{itemize}
\end{framed}
In other words, the only item that agent $a$ accepts in $P_{i}'$ is $c_{i}$, and for each $a' \in A \setminus \lbrace a \rbrace$, $a'$ accepts no items. 
We construct an instance of \goassign\ with the agent set $A$, and the item set $I'$, consisting of $\ell$ layers. The first two layers contain the preference profiles $P_{1}$ and $P_{2}$, the next $\alpha - 2$ layers contain copies of $P_{1}$, and the next $\ell-\alpha$ layers contain $P_{1}',\ldots,P_{\ell - \alpha}'$. Let us first prove the following.
\begin{myclaim} \label{claim:np-hardness-claim}
Let $p$ be an $\alpha$-globally optimal assignment for the constructed instance. Then $p$ is pareto optimal in $P_{1}$ and $P_{2}$.
\end{myclaim}
\begin{proof}
Note that the only pareto optimal assignment for $P_{i}'$ is the assignment that allocates $c_{i}$ to $a$, and $b_{\emptyset}$ to each $a' \in A \setminus \lbrace a \rbrace$. Hence, there does not exist an assignment that is pareto optimal in both $P_{i}'$ and $P_{j}'$ when $i \neq j$. Moreover, there does not exist an assignment that is pareto optimal in both $P_{i}'$ and $P_{1}$ or in both $P_{i}'$ and $P_{2}$ since $c_{i}$ is not acceptable by $a$ in $P_{1}$ as well as $P_{2}$. The only possible option is that $p$ is pareto optimal in the first $\alpha$ layers, then we have that $p$ is pareto optimal in both $P_{1}$ and $P_{2}$.
\end{proof}

We claim that there exists an $\alpha$-globally optimal assignment for the constructed instance if and only if $(A,I,P,a,b)$ is a \yesinstance.

($\Rightarrow$): Suppose there exists an $\alpha$-globally optimal assignment $p$ for the constructed instance. By \Clcref{claim:np-hardness-claim}, $p$ is pareto optimal in both $P_{1}$ and $P_{2}$. By Aziz et al.~\cite{ijcai2017-12}, this implies that $(A,I,P,a,b)$ is a \yesinstance.

($\Leftarrow$): Assume that $(A,I,P,a,b)$ is a \yesinstance. By Aziz et al.~\cite{ijcai2017-12}, there exists an assignment $p$ that is pareto optimal in both $P_{1}$ and $P_{2}$. By the construction of the instance, $p$ is pareto optimal in the first $\alpha$ layers, hence $p$ is $\alpha$-globally optimal for the constructed instance.\qed
\end{proof}

\begin{corollary} \label{corollary:np-hardness-for-ell-2} 
\goassign\ is \paraH\ for the parameter $\ell + \alpha$.
\end{corollary}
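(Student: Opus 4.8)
The plan is to derive this corollary directly from \Thcref{theorem:np-hardness-for-larger-ell}. Recall from the preliminaries that a parameterized problem is \paraH\ as soon as it remains \NPH\ for some \emph{fixed} value of its parameter. Hence it suffices to exhibit a single constant value of $\ell + \alpha$ for which \goassign\ is already \NPH, and no new reduction is needed.

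Concretely, I would instantiate \Thcref{theorem:np-hardness-for-larger-ell} with the choice $\alpha = \ell = 2$. This satisfies the hypothesis $2 \leq \alpha \leq \ell$, so the theorem guarantees that \goassign\ is \NPH\ for these fixed values. In this instance $\ell + \alpha = 4$ is a constant that does not depend on the input, so we have exhibited \NP-hardness at a fixed value of the parameter $\ell + \alpha$.

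By the definition of \paraH ness recalled above, \NP-hardness at the fixed value $\ell + \alpha = 4$ immediately yields that \goassign\ is \paraH\ with respect to the parameter $\ell + \alpha$ (and, in particular, admits no \XP\ algorithm for this parameter unless $\textsf{P}=\NP$).

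There is essentially no obstacle here: the entire content lies in \Thcref{theorem:np-hardness-for-larger-ell}, whose reduction produces, for each prescribed pair $2 \le \alpha \le \ell$, an equivalent instance of \goassign\ with exactly that many layers and that value of $\alpha$. The only point worth noting is that the theorem is stated for \emph{every} such pair, so specializing to the smallest legal constants $\alpha = \ell = 2$ (rather than letting them grow with the input) is permitted and suffices to pin $\ell + \alpha$ to a constant.
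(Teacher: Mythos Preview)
Your proposal is correct and matches the paper's approach: the corollary is stated immediately after \Thcref{theorem:np-hardness-for-larger-ell} with no separate proof, so it is meant to be read exactly as you do---fix any admissible constants (e.g., $\alpha=\ell=2$) and invoke the definition of \paraH ness.
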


In contrast to \Thcref{theorem:np-hardness-for-larger-ell}, we mention that it is impossible to obtain the same hardness result when $\alpha = 1$. 

\begin{observation} \label{observation:alpha-one-is-poly} 
\goassign\ is solvable in a polynomial time when $\alpha=1$.
\end{observation}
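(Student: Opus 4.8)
The plan is to show that when $\alpha = 1$, deciding whether a $1$-globally optimal assignment exists reduces to a question that is already settled by the classical theory. Recall that an assignment is $1$-globally optimal precisely when it is pareto optimal in \emph{at least one} layer. Thus the instance is a \yesinstance\ if and only if there exists some layer $i \in [\ell]$ that admits a pareto optimal assignment at all.

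First I would invoke \Cocref{corollary:number-of-po-assign}, which guarantees that for any single-layer \assign\ instance a pareto optimal assignment \emph{always} exists and can be found in polynomial time (via serial dictatorship, by \Prcref{prop:po-iff-sd}). Since each layer $P_i$ is itself a legitimate single-layer preference profile over $(A,I)$, each layer on its own always admits a pareto optimal assignment. Consequently, for $\alpha = 1$ the answer is trivially \yes\ whenever $\ell \geq 1$: one may simply pick any layer, say $P_1$, run serial dictatorship on it with respect to an arbitrary permutation, and obtain an assignment $p$ that is pareto optimal in that layer, hence $1$-globally optimal by \Lecref{lemma:verification} (equivalently, by \Definitionref\ of $\alpha$-global optimality with a single witnessing layer).

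The algorithm is therefore immediate: run serial dictatorship on $P_1$ with an arbitrary permutation $\pi$ on $A$, which takes polynomial time, and return \yes\ together with the resulting assignment. Correctness follows directly from \Prcref{prop:po-iff-sd} (the output of serial dictatorship is pareto optimal in $P_1$) and \Definitionref{def:alpha-globally-optimal} (pareto optimality in one layer suffices when $\alpha = 1$).

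There is essentially no obstacle here; the statement is a clean consequence of the fact that pareto optimal assignments always exist in the single-layer setting, and the content of the observation is really the contrast it draws with \Thcref{theorem:np-hardness-for-larger-ell}, namely that the hardness genuinely requires $\alpha \geq 2$ (where one must \emph{simultaneously} satisfy pareto optimality across multiple layers, which is where the combinatorial difficulty enters).
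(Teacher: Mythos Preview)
Your argument is correct and matches the paper's own proof: both simply invoke \Cocref{corollary:number-of-po-assign} to produce a pareto optimal assignment for the first layer, which is automatically $1$-globally optimal. The paper's proof is a single sentence to this effect; you have merely expanded the justification.
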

\begin{proof}
We can simply find a pareto optimal assignment for the first layer using \Cocref{corollary:number-of-po-assign}.\qed
\end{proof}

We define three constructions of preference profiles given an instance of \ThreeSat, and we consider their connections to the satisfiability of the formula. We will rely on these connections to design a polynomial reduction from \ThreeSat\ to \goassign\ that shows that \goassign\ is \paraH\ with respect to the parameter $\ell + \alpha +d$ (where $d$ is the maximal length of a preference list). We will also rely on these results in the design of our cross-compositions, which prove that \goassign\ does not admit a polynomial kernel with respect to the parameters $n + m+\alpha$, $n + m+(\ell - \alpha)$, and $m+\ell$ unless \NP$\subseteq $\coNPpoly. 

Let $n,m \in \mathbb{N}$ be positive integers. We use the notation $A(m,n)$ to refer to the agent set $A(m,n)= \lbrace a_{i,j} \mid i\in [m], j \in [n] \rbrace \cup \lbrace \overline{a_{i,j}} \mid i\in [m], j \in [n] \rbrace$, and we use the notation $I(m,n)$ to refer to the the item set $I(m,n)= \lbrace b_{i,j} \mid i\in [m], j \in [n] \rbrace \cup \lbrace \overline{b_{i,j}} \mid i\in [m], j \in [n] \rbrace$. We will define two preference profiles over $A(m,n)$ and $I(m,n)$: $P_{1}(m,n)$ and $P_{2}(m,n)$. Intuitively, given a \ThreeSat\ instance with $n$ variables and $m$ clauses, we will construct the sets $A(m,n)$ and $I(m,n)$, which contain two agents and two items for each clause-variable pair. The way that these agents and items are assigned to each other in an assignment that is pareto optimal in both $P_{1}(m,n)$ and $P_{2}(m,n)$ will encode a boolean assignment for the variable set of the instance. We define the preference profile $P_{1}(m,n)$ as follows.

\begin{framed}[.5\textwidth]
\begin{itemize}
\item $a_{i,j}$ : $b_{i,j}>\overline{b_{i,j}}\ \ \forall i \in [m], j \in [n]$
\item $\overline{a_{i,j}}$ : $b_{i,j}>\overline{b_{i,j}}\ \ \forall i \in [m], j \in [n]$
\end{itemize}
\end{framed}

\begin{figure}[t]
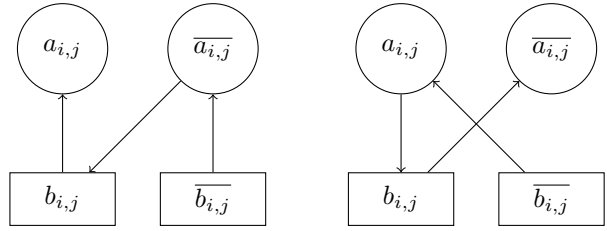

\center
\tikzfig{Figures/2}
\caption{If $p$ is pareto optimal in $P_{1}(m,n)$, then for each $i \in [m],j \in [n]$, the trading graph of $P_{1}(m,n)$ with respect to $p$ contains one of these two sub-graphs.}
\label{fig:tgForTheFirstClaim3Sat}
\end{figure}

\begin{myclaim}\label{claim:threesatclaim1}
An assignment $p : A(m,n) \rightarrow I(m,n) \cup \lbrace b_{\emptyset} \rbrace$ is pareto optimal in $P_{1}(m,n)$ if and only if $\lbrace p(a_{i,j}),p(\overline{a_{i,j}}) \rbrace = \lbrace b_{i,j},\overline{b_{i,j}} \rbrace$ for each $i \in [m]$ and $j \in [n]$.
\end{myclaim}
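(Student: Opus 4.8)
The plan is to exploit the characterization of pareto optimality via trading cycles and self loops (\Prcref{prop:po-iff-no-tc-and-sl}) together with the independence of the clause--variable gadgets. First I would record the structural observation that in $P_{1}(m,n)$ the two items $b_{i,j}$ and $\overline{b_{i,j}}$ are acceptable only to the two agents $a_{i,j}$ and $\overline{a_{i,j}}$. Hence, in any legal assignment, each of these two items is either allocated to one of these two agents or left unallocated entirely, and the gadgets do not interact across distinct pairs $(i,j)$.

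Next, I would dispose of trading cycles once and for all. Every agent of $P_{1}(m,n)$ ranks $b_{i,j}$ above $\overline{b_{i,j}}$, and since distinct gadgets use disjoint items, any collection of these agents pairwise respects each other: concatenating the per-gadget orders yields a single linear order on a subset of $I(m,n)$ extending all of their preference lists. Hence, by \Lecref{lemma:respectedAgents}, no assignment of $P_{1}(m,n)$ admits a trading cycle, and by \Prcref{prop:po-iff-no-tc-and-sl} pareto optimality in $P_{1}(m,n)$ reduces to the absence of self loops.

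It then remains to show that $p$ has no self loop if and only if the stated set condition holds. For the ($\Leftarrow$) direction, if $\{p(a_{i,j}),p(\overline{a_{i,j}})\}=\{b_{i,j},\overline{b_{i,j}}\}$ for all $i,j$, then every item of $I(m,n)$ is allocated; a self loop requires an unallocated item, so none exists and $p$ is pareto optimal. For the ($\Rightarrow$) direction I would argue the contrapositive: if the set condition fails for some $(i,j)$, then at most one of $b_{i,j},\overline{b_{i,j}}$ is held by a gadget agent, so at least one of them is unallocated and at least one gadget agent is assigned $b_{\emptyset}$; that agent strictly prefers the unallocated gadget item to $b_{\emptyset}$, producing a self loop, so $p$ is not pareto optimal.

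The only delicate point is exhibiting the self loop in the ($\Rightarrow$) direction: one cannot simply point at an arbitrary agent who lacks the unallocated item, since an agent holding the more-preferred item $b_{i,j}$ does \emph{not} prefer an unallocated $\overline{b_{i,j}}$. The clean fix is to locate the self loop at the agent assigned $b_{\emptyset}$, which must exist whenever fewer than two gadget items are allocated (and, in the degenerate case where both items are unallocated, at $a_{i,j}$ together with $b_{i,j}$). A short case split on how many of $b_{i,j},\overline{b_{i,j}}$ are allocated covers all possibilities.
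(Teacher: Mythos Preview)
Your proposal is correct and follows essentially the same approach as the paper: both use the observation that $b_{i,j},\overline{b_{i,j}}$ are acceptable only to $a_{i,j},\overline{a_{i,j}}$, invoke \Lecref{lemma:respectedAgents} to rule out trading cycles, and reduce pareto optimality to the absence of self loops via \Prcref{prop:po-iff-no-tc-and-sl}. Your treatment of the $(\Rightarrow)$ direction is in fact slightly more careful than the paper's, which asserts without elaboration that ``at least one of the agents $a_{i,j},\overline{a_{i,j}}$ admits a self loop''; your explicit observation that the self loop must be located at the agent assigned $b_{\emptyset}$ (rather than at an agent holding $b_{i,j}$) makes this step watertight.
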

\begin{proof}
($\Rightarrow$): Assume that $p$ is pareto optimal in $P_{1}(m,n)$. Observe that $b_{i,j}$ and $\overline{b_{i,j}}$ are only acceptable by $a_{i,j}$ and $\overline{a_{i,j}}$. If $|\lbrace p(a_{i,j}),p(\overline{a_{i,j}}) \rbrace \cap \lbrace b_{i,j},\overline{b_{i,j}} \rbrace |< 2$, then at least one of the items $b_{i,j},\overline{b_{i,j}}$ does not have an owner; this implies that at least one of the agents $a_{i,j}$, $\overline{a_{i,j}}$ admits a self loop, a contradiction to \Prcref{prop:po-iff-no-tc-and-sl} (see Figure \ref{fig:tgForTheFirstClaim3Sat}).

($\Leftarrow$): Assume that $\lbrace p(a_{i,j}),p(\overline{a_{i,j}}) \rbrace = \lbrace b_{i,j},\overline{b_{i,j}} \rbrace$ for each $i \in [m]$ and $j \in [n]$. Observe that the agents in $A(m,n)$ pairwise respect each other in $P_{1}(m,n)$; then, by \Lecref{lemma:respectedAgents}, $p$ does not admit trading cycles in $P_{1}(m,n)$. Since $p$ allocates all the items, it also does not admit self loops. Thus, by \Prcref{prop:po-iff-no-tc-and-sl}, $p$ is pareto optimal in $P_{1}(m,n)$.\qed
\end{proof}

We now define the second preference profile $P_{2}(m,n)$ over $A(m,n)$ and $I(m,n)$. Intuitively, being pareto optimal in both $P_{1}(m,n)$ and $P_{2}(m,n)$ requires the agents to ``agree'' with each other. That is, $a_{i_{1},j}$ gets $b_{i_{1},j}$ if and only if $a_{i_{2},j}$ gets $b_{i_{2},j}$ for each $i_{1} \neq i_{2}$. For $j \in [n]$, we denote the sets $P_{j}^{\mathrm{true}}$, and $P_{j}^{\mathrm{false}}$ by $P_{j}^{\mathrm{true}}=\lbrace (a_{i,j},b_{i,j}) , (\overline{a_{i,j}},\overline{b_{i,j}}) \mid i \in [m] \rbrace$, and $P_{j}^{\mathrm{false}}=\lbrace (a_{i,j},\overline{b_{i,j}}) , (\overline{a_{i,j}},b_{i,j}) \mid i \in [m] \rbrace$. Informally speaking, $P_{j}^{\mathrm{true}}$ and $P_{j}^{\mathrm{false}}$ will correspond to setting the variable $x_{j}$ to true or false, respectively. 

\begin{framed}[.9\textwidth]
\begin{itemize}
\item $a_{m,j}$ : $b_{m,j}>b_{m-1,j}>\overline{b_{m,j}}\ \ \forall j \in [n]$
\item $\overline{a_{m,j}}\ $: $\ b_{m,j}>b_{m-1,j}>\overline{b_{m,j}}\ \ \forall j \in [n]$
\item $a_{i,j}$ : $b_{i-1,j}>\overline{b_{i,j}}>\overline{b_{i+1,j}}>b_{i,j}\ \ \forall i \in \lbrace 2,\ldots,m-1 \rbrace$,$ j \in [n]$
\item $\overline{a_{i,j}}$ : $b_{i-1,j}>\overline{b_{i,j}}>\overline{b_{i+1,j}}>b_{i,j}\ \ \forall i \in \lbrace 2,\ldots,m-1 \rbrace$,$ j \in [n]$
\item $a_{1,j}$ : $\overline{b_{1,j}}>\overline{b_{2,j}}>b_{1,j}\ \ \forall j \in [n]$
\item $\overline{a_{1,j}}$ : $\overline{b_{1,j}}>\overline{b_{2,j}}>b_{1,j}\ \ \forall j \in [n]$
\end{itemize}
\end{framed}

\begin{myclaim}\label{claim:threesatclaim2}
An assignment $p : A(m,n) \rightarrow I(m,n) \cup \lbrace b_{\emptyset} \rbrace$ is pareto optimal in both $P_{1}(m,n)$ and $P_{2}(m,n)$ if and only if for each $j \in [n]$, either $P_{j}^{\mathrm{true}} \subseteq p$ or $P_{j}^{\mathrm{false}} \subseteq p$.
\end{myclaim}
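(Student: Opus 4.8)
The plan is to prove both directions by first reducing to a per-column analysis and then exploiting the structure of $P_2(m,n)$. The key preliminary observation is that every preference list in $P_2(m,n)$ belonging to column $j$ (that of an agent $a_{i,j}$ or $\overline{a_{i,j}}$) ranks only items of column $j$; hence, for any assignment the trading graph of $P_2(m,n)$ decomposes into $n$ vertex-disjoint parts, one per column, and by \Prcref{prop:po-iff-no-cycles-in-td} it suffices to analyze each column $j$ separately. Throughout I fix a column $j$ and suppress it from the notation. Moreover, since we also require $p$ to be pareto optimal in $P_1(m,n)$, \Clcref{claim:threesatclaim1} lets me assume from the outset that $\{p(a_i),p(\overline{a_i})\}=\{b_i,\overline{b_i}\}$ for every $i$; thus the restriction of $p$ to column $j$ is encoded by a vector of ``states'' $(s_1,\dots,s_m)\in\{\mathrm{T},\mathrm{F}\}^m$, where $s_i=\mathrm{T}$ means $p(a_i)=b_i,\ p(\overline{a_i})=\overline{b_i}$ (a $P_j^{\mathrm{true}}$-step) and $s_i=\mathrm{F}$ means $p(a_i)=\overline{b_i},\ p(\overline{a_i})=b_i$ (a $P_j^{\mathrm{false}}$-step). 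With this encoding the claim reduces to: $p$ is pareto optimal in $P_2(m,n)$ if and only if, within each column, the state vector is constant.

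For the ($\Rightarrow$) direction I would argue by contradiction. If some column is neither all-$\mathrm{T}$ nor all-$\mathrm{F}$, there is an index $i$ at which the state ``switches'', and I would read off a short trading cycle directly from the two consecutive lists. For a $\mathrm{T}/\mathrm{F}$ switch (say $s_i=\mathrm{T}$ and $s_{i+1}=\mathrm{F}$) the agent $a_i$ holds $b_i$ while ranking $\overline{b_{i+1}}$ above it, and $a_{i+1}$ holds $\overline{b_{i+1}}$ while ranking $b_i$ above that; this yields the trading cycle $(a_i,b_i,a_{i+1},\overline{b_{i+1}})$, contradicting pareto optimality in $P_2(m,n)$ via \Prcref{prop:po-iff-no-tc-and-sl}. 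The symmetric switch, and switches adjacent to the boundary indices $i=1$ and $i=m$, would be handled by the analogous two-element cycles read off from the slightly different lists of $a_1,\overline{a_1}$ and $a_m,\overline{a_m}$.

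For the ($\Leftarrow$) direction, assume every column is constant. Pareto optimality in $P_1(m,n)$ is immediate from \Clcref{claim:threesatclaim1}, since a constant state vector still satisfies $\{p(a_i),p(\overline{a_i})\}=\{b_i,\overline{b_i}\}$. For $P_2(m,n)$ I would invoke \Prcref{prop:po-iff-no-tc-and-sl}: there are no self loops, because a constant column allocates both $b_i$ and $\overline{b_i}$ for every $i$, so no acceptable (column-$j$) item of any column-$j$ agent is left unassigned; and there are no trading cycles, which I would certify by exhibiting, for a constant column, an explicit serial-dictatorship order realizing $p$ (\Prcref{prop:po-iff-sd}), obtained by processing the chain inward from its ``satisfied'' endpoint, or equivalently by applying \Lecref{lemma:respectedAgents} to the sub-families of agents whose lists admit a common linear extension. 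Combining the two directions across all columns then gives the stated equivalence.

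The step I expect to be the main obstacle is the endpoint bookkeeping in both directions. The lists of $a_1,\overline{a_1}$ and $a_m,\overline{a_m}$ differ in length and shape from the generic middle lists, so both the exact trading cycle exhibited at a switch and the exact serial-dictatorship order certifying optimality of a constant column must be checked separately near the ends, and in particular one must verify that no spurious cycle is created at the ends of a constant column. Carefully matching the direction of each pairwise comparison among $b_{i-1}$, $\overline{b_i}$, $\overline{b_{i+1}}$, $b_i$ against the held items is where the bulk of the casework lies.
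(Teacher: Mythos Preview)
Your forward direction is essentially the paper's: both locate a switch between consecutive row indices and exhibit a length-two trading cycle. The paper handles the F/T switch using the pair $\overline{a_{i,j}},\overline{a_{i+1,j}}$ rather than $a_{i,j},a_{i+1,j}$; your ``symmetric'' remark covers this, and the endpoint worry you flag is harmless because the relevant comparisons persist verbatim in the boundary lists.

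The backward direction, however, has a genuine gap: neither serial dictatorship nor \Lecref{lemma:respectedAgents} can succeed, because a constant column \emph{does} admit a trading cycle in $P_2(m,n)$. Take $P_j^{\mathrm{true}}\subseteq p$, so $p(a_{1,j})=b_{1,j}$ and $p(\overline{a_{2,j}})=\overline{b_{2,j}}$. The list of $a_{1,j}$ in $P_2(m,n)$ is $\overline{b_{1,j}}>\overline{b_{2,j}}>b_{1,j}$, and the list of $\overline{a_{2,j}}$ begins $b_{1,j}>\overline{b_{2,j}}$ (this holds whether $2$ is a middle index or $m=2$). Hence $(a_{1,j},\,b_{1,j},\,\overline{a_{2,j}},\,\overline{b_{2,j}})$ is a trading cycle, and by \Prcref{prop:po-iff-no-tc-and-sl} this $p$ is not pareto optimal in $P_2(m,n)$. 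Under $P_j^{\mathrm{false}}$ the mirror cycle $(\overline{a_{1,j}},\,b_{1,j},\,a_{2,j},\,\overline{b_{2,j}})$ arises. In particular, the column-$j$ agents do \emph{not} pairwise respect one another (e.g.\ $a_{1,j}$ ranks $\overline{b_{2,j}}>b_{1,j}$ while $\overline{a_{2,j}}$ ranks $b_{1,j}>\overline{b_{2,j}}$), so \Lecref{lemma:respectedAgents} is inapplicable, and no serial-dictatorship order can realise such a $p$. The paper's own $(\Leftarrow)$ argument has the same blind spot: it asserts that every trading cycle in $P_2(m,n)$ lies entirely among the $a$-agents or entirely among the $\overline{a}$-agents, an enumeration refuted by the mixed cycle above. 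As the profile $P_2(m,n)$ is written, the $(\Leftarrow)$ implication of the claim appears to be false, so your proposed certificates cannot exist.
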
 
\begin{proof}
($\Rightarrow$): Assume that $p$ is pareto optimal in both $P_{1}(m,n)$ and $P_{2}(m,n)$. Towards a contradiction, suppose that there exists $j \in [n]$ satisfying that both $P_{j}^{\mathrm{true}} \nsubseteq p$ and $P_{j}^{\mathrm{false}} \nsubseteq p$. By \Clcref{claim:threesatclaim1}, there exist $i_{1},i_{2} \in [m]$ such that $i_{1}<i_{2}$, satisfying that either (1) $p(a_{i_{1},j})=b_{i_{1},j}$ and $p(a_{i_{2},j})=\overline{b_{i_{2},j}}$, or (2) $p(a_{i_{1},j})=\overline{b_{i_{1},j}}$ and $p(a_{i_{2},j})=b_{i_{2},j}$. For the first case, note that there must exist $i \in [m]$ such that $i_{1}\leq i < i_{2}$, $p(a_{i,j})=b_{i,j}$, and $p(a_{i+1,j})=\overline{b_{i+1,j}}$. Then we have that $p$ admits the trading cycle $(a_{i,j},b_{i,j},a_{i+1,j},\overline{b_{i+1,j}})$ in $P_{2}(m,n)$, a contradiction. For the second case, we have that there exists $i \in [m]$ such that $i_{1}\leq i < i_{2}$, $p(a_{i,j})=\overline{b_{i,j}}$, and $p(a_{i+1,j})=b_{i+1,j}$. By \Clcref{claim:threesatclaim1}, $p(\overline{a_{i,j}})={b_{i,j}}$ and $p(\overline{a_{i+1,j}})=\overline{b_{i+1,j}}$, then $p$ admits the trading cycle $(\overline{a_{i,j}},b_{i,j},\overline{a_{i+1,j}},\overline{b_{i+1,j}})$ in $P_{2}(m,n)$, a contradiction.

\begin{figure}[t]
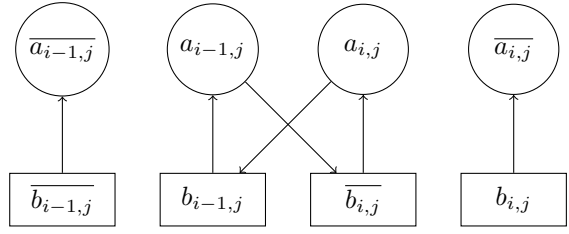

\center
\tikzfig{Figures/1}
\caption{Example of the trading graph for the first case.}
\label{fig:tgForTheFirstCase}
\end{figure}

($\Leftarrow$): Assume that for each $j \in [n]$, either $P_{j}^{\mathrm{true}} \subseteq p$ or $P_{j}^{\mathrm{false}} \subseteq p$. By \Clcref{claim:threesatclaim1}, $p$ is pareto optimal in $P_{1}(m,n)$. Then by the construction of $P_{2}(m,n)$, observe that every possible trading cycle in $P_{2}(m,n)$ has one of the forms: (1) $(a_{i,j},\overline{b_{i,j}},a_{i-1,j},b_{i-1,j})$ (Shown in Figure \ref{fig:tgForTheFirstCase}) or (2) $(\overline{a_{i,j}},\overline{b_{i,j}},\overline{a_{i-1,j}},b_{i-1,j})$ where $i \in \lbrace 2,\ldots,m \rbrace$. This implies that there exist $j \in [n]$, and $i \in \lbrace 2,\ldots,m \rbrace$ such that either (1) $p(a_{i,j})=b_{i,j}$ and $p(a_{i-1,j})=\overline{b_{i,j}}$ or (2) $p(a_{i,j})=\overline{b_{i,j}}$ and $p(a_{i-1,j})=b_{i,j}$. Thus, both $P_{j}^{\mathrm{true}} \subsetneq p$ and $P_{j}^{\mathrm{false}} \subsetneq p$, a contradiction.\qed
\end{proof}

Let $D = (\XX,\CC)$ be an instance of \ThreeSat\ where $\XX = \lbrace x_{1},\ldots,x_{n} \rbrace$ is the set of variables, and $\CC = \lbrace C_{1},\ldots,C_{m} \rbrace$ is the set of clauses, each of size 3.
In order to construct the third preference profile $P_{3}(D)$, order the literals in each clause arbitrarily, such that $C_{i}=\ell_{i}^{1} \vee \ell_{i}^{2} \vee \ell_{i}^{3}$ for each $i \in [m]$. The third preference profile is responsible for the satisfiability of the formula. Let us define the following:

\begin{definition}
Let $C_{i}=\ell_{i}^{1} \vee \ell_{i}^{2} \vee \ell_{i}^{3} \in \CC$. We define $\ind_{D}(i,j)$ as the index of the variable which appears in the $j$-th literal in $C_{i}$ for each $j \in [3]$.
\end{definition}

For example, if $C_{i}=x_{3} \vee \overline{x_{1}} \vee x_{5}$, then $\ind_{D}(i,1)=3$, $\ind_{D}(i,2)=1$, and $\ind_{D}(i,3)=5$. Briefly put, each assignment $p$ that is pareto optimal in the profiles $P_{1}(m,n)$,$P_{2}(m,n)$, and $P_{3}(D)$ contains either $P_{j}^{\mathrm{true}} = \lbrace (a_{i,j},b_{i,j}),(\overline{a_{i,j}},\overline{b_{i,j}}) | i \in [m] \rbrace$ or $P_{j}^{\mathrm{false}} = \lbrace (a_{i,j},\overline{b_{i,j}}),(\overline{a_{i,j}},b_{i,j})| i \in [m] \rbrace$ for every $j \in [n]$, and for each clause $C_{i}=\ell_{i}^{1} \vee \ell_{i}^{2} \vee \ell_{i}^{3} \in \CC$, there exists at least one satisfied literal $\ell_{i}^{j}$. Being pareto optimal in $P_{3}(D)$ enforces $a_{i,\ind_{i}(1)},a_{i,\ind_{i}(2)}$ and $a_{i,\ind_{i}(3)}$ to admit a trading cycle if none of their corresponding literals is satisfied. We define the following:

\begin{definition}
For each $i \in [m]$, $j \in [3]$, we define
$
  \bb_{D}(i, j) =
  \begin{cases}
      \overline{b_{i,\ind_{D}(i,j)}} & \ell_{i}^{j} \text{ is negative}\\
    b_{i,\ind_{D}(i,j)} & \ell_{i}^{j} \text{ is positive}
  \end{cases} 
$.
\end{definition}
Intuitively, when $a_{i,\ind_{D}(i,j)}$ gets $\bb_{D}(i, j)$ and $\overline{a_{i,\ind_{D}(i,j)}}$ gets $\overline{\bb_{D}(i, j)}$, it means that $\ell_{i}^{j}$ is ``satisfied''.
Preference profile $P_{3}(D)$ is defined as follows:
\begin{framed}
\begin{itemize}
\item $a_{i,\ind_{D}(i,3)}\ $: $\ \bb_{D}(i,3)>\overline{\bb_{D}(i,2)}>\overline{\bb_{D}(i,3)}\ \ \forall i \in [m]$
\item $a_{i,\ind_{D}(i,2)}\ $: $\ \bb_{D}(i,2)>\overline{\bb_{D}(i,1)}>\overline{\bb_{D}(i,2)}\ \ \forall i \in [m]$
\item $a_{i,\ind_{D}(i,1)}\ $: $\ \bb_{D}(i,1)>\overline{\bb_{D}(i,3)}>\overline{\bb_{D}(i,1)}\ \ \forall i \in [m]$
\item $\overline{a_{i,\ind_{D}(i,r)}}\ $: $\ \bb_{D}(i,r)>\overline{\bb_{D}(i,r)}\ \ \forall i \in [m] , r \in [3]$
\item $a_{i,j}\ $: $\ b_{i,j}>\overline{b_{i,j}}\ \ \forall i \in [m], j \in [n]$ such that $x_{j}$ does not appear in $C_{i}$
\item $\overline{a_{i,j}}\ $: $\ b_{i,j}>\overline{b_{i,j}}\ \ \forall i \in [m], j \in [n]$ such that $x_{j}$ does not appear in $C_{i}$

\end{itemize}
\end{framed} 

\begin{myclaim}\label{claim:threesatclaim3}
An assignment $p : A(m,n) \rightarrow I(m,n) \cup \lbrace b_{\emptyset} \rbrace$ is pareto optimal in $P_{1}(m,n), P_{2}(m,n)$, and $P_{3}(D)$ if and only if:
\begin{itemize}
\item For each $j \in [n]$, either $P_{j}^{\mathrm{true}} \subseteq p$ or $P_{j}^{\mathrm{false}} \subseteq p$.
\item For each clause $C_{i}=\ell_{i}^{1} \vee \ell_{i}^{2} \vee \ell_{i}^{3}  \in \CC$, there exists at least one $j \in [3]$ such that $p(a_{i,\ind_{D}(i,j)}) = \bb_{D}(i,j)$.  
\end{itemize}
\end{myclaim}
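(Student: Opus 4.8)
The plan is to build directly on \Clcref{claim:threesatclaim2}, which already equates pareto optimality in $P_{1}(m,n)$ and $P_{2}(m,n)$ with the first bullet, and to layer the analysis of $P_{3}(D)$ on top of it. Throughout I would work with \Prcref{prop:po-iff-no-tc-and-sl} (pareto optimality is the absence of trading cycles and self loops) together with \Lecref{lemma:respectedAgents}. For the forward direction ($\Rightarrow$), assume $p$ is pareto optimal in $P_{1}(m,n)$, $P_{2}(m,n)$ and $P_{3}(D)$. Optimality in $P_{1}(m,n)$ and $P_{2}(m,n)$ hands me the first bullet via \Clcref{claim:threesatclaim2}. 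For the second bullet I would argue by contradiction: if some clause $C_{i}$ has $p(a_{i,\ind_{D}(i,j)}) \neq \bb_{D}(i,j)$ for all $j \in [3]$, then the first bullet forces $p(a_{i,\ind_{D}(i,j)}) = \overline{\bb_{D}(i,j)}$ for every $j$, and reading off the three clause preference lists shows that $(a_{i,\ind_{D}(i,3)},\overline{\bb_{D}(i,3)},a_{i,\ind_{D}(i,2)},\overline{\bb_{D}(i,2)},a_{i,\ind_{D}(i,1)},\overline{\bb_{D}(i,1)})$ is a trading cycle in $P_{3}(D)$, contradicting optimality there.

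For the reverse direction ($\Leftarrow$), the first bullet plus \Clcref{claim:threesatclaim2} again yields pareto optimality in $P_{1}(m,n)$ and $P_{2}(m,n)$, so the entire task is to establish optimality in $P_{3}(D)$. Since $p$ is pareto optimal in $P_{1}(m,n)$, \Clcref{claim:threesatclaim1} guarantees that all items are allocated, ruling out self loops in $P_{3}(D)$; a quick check that each agent's assigned item appears on its $P_{3}(D)$ list confirms legality. To kill trading cycles, I would first note that every agent $a_{i,\cdot}$ or $\overline{a_{i,\cdot}}$ accepts only items whose first index is $i$, so the trading graph of $P_{3}(D)$ splits into $m$ independent clause blocks and any trading cycle lives inside a single block. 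Within a block, the non-clause pairs $a_{i,j},\overline{a_{i,j}}$ (for $x_{j} \notin C_{i}$) respect each other, so by \Lecref{lemma:respectedAgents} they cannot lie on a trading cycle, leaving only the clause gadget on $a_{i,\ind_{D}(i,r)},\overline{a_{i,\ind_{D}(i,r)}}$, $r \in [3]$, to analyze.

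The heart of the argument is the gadget analysis, which I expect to be the main obstacle. The key observation is that each $\bb_{D}(i,r)$ is the top item of both $a_{i,\ind_{D}(i,r)}$ and $\overline{a_{i,\ind_{D}(i,r)}}$; hence whoever owns $\bb_{D}(i,r)$ holds its most preferred item and has no outgoing edge, so no $\bb_{D}(i,r)$ can lie on a cycle, and every trading cycle uses only items of the form $\overline{\bb_{D}(i,r)}$. Next, a holder $\overline{a_{i,\ind_{D}(i,r)}}$ of $\overline{\bb_{D}(i,r)}$ prefers only $\bb_{D}(i,r)$, which is not a $\overline{\bb}$-item, so it has no edge to another $\overline{\bb}$-item and cannot participate. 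Thus only the agents $a_{i,\ind_{D}(i,r)}$ can appear, and each points to exactly one $\overline{\bb}$-item, namely $a_{i,\ind_{D}(i,3)} \to \overline{\bb_{D}(i,2)}$, $a_{i,\ind_{D}(i,2)} \to \overline{\bb_{D}(i,1)}$, and $a_{i,\ind_{D}(i,1)} \to \overline{\bb_{D}(i,3)}$; this is a single $3$-cycle that can be realized only when all three $a_{i,\ind_{D}(i,r)}$ hold $\overline{\bb_{D}(i,r)}$, i.e.\ when all three literals are unsatisfied. The second bullet forbids precisely this, so no trading cycle survives and $p$ is pareto optimal in $P_{3}(D)$, completing the equivalence.

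The only delicate point to get right is the collapse of the seemingly many candidate cycles in the satisfied configurations down to the unique all-unsatisfied $3$-cycle; the ``top-item owner is stuck'' reduction that discards all $\bb_{D}(i,r)$ items and confines attention to the $\overline{\bb_{D}(i,r)}$ items is what makes this tractable, and I would state it as a small internal observation before running the case analysis.
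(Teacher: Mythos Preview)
Your proposal is correct and follows essentially the same route as the paper: both directions hinge on \Clcref{claim:threesatclaim2} for the first bullet and on identifying the single length-three trading cycle $(a_{i,\ind_{D}(i,3)},\overline{\bb_{D}(i,3)},a_{i,\ind_{D}(i,2)},\overline{\bb_{D}(i,2)},a_{i,\ind_{D}(i,1)},\overline{\bb_{D}(i,1)})$ as the only obstruction in $P_{3}(D)$. The paper simply asserts (``observe that every trading cycle in $P_{3}(D)$ is of the form \ldots'') what you carefully derive via the block decomposition and the ``top-item owner has no outgoing edge'' reduction, so your write-up is a more detailed version of the same argument rather than a different one.
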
 

\begin{figure}[t]
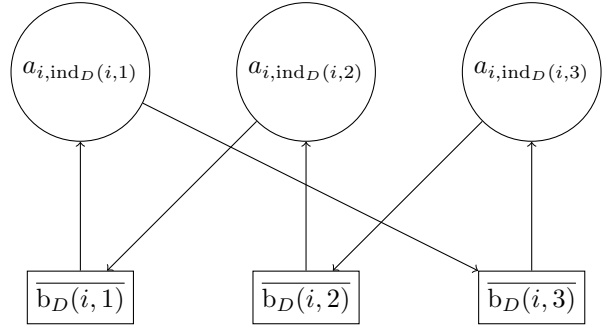

\center
\tikzfig{Figures/4}
\caption{If $p$ is pareto optimal in both $P_{1}(m,n)$ and $P_{2}(m,n)$, then every potential trading cycle in $P_{3}(D)$ has this form; and it occurs when $P_{\ind_{D}(i,1)}^{\mathrm{false}},P_{\ind_{D}(i,2)}^{\mathrm{false}},P_{\ind_{D}(i,3)}^{\mathrm{false}} \subseteq p$.}
\label{fig:tgForTheThirdClaim3Sat}
\end{figure}

\begin{proof}
($\Rightarrow$): Assume that $p$ is pareto optimal in $P_{1}(m,n), P_{2}(m,n)$, and $P_{3}(D)$. \Clcref{claim:threesatclaim2} implies that $p$ satisfies the first condition. We show that $p$ also satisfies the second condition: Observe that every tuple of the form $(a_{i,\ind_{D}(i,3)},\overline{\bb_{D}(i,3)} , a_{i,\ind_{D}(i,2)},\overline{\bb_{D}(i,2)} , a_{i,\ind_{D}(i,1)}$ $,\overline{\bb_{D}(i,1)})$ is a trading cycle in $P_{3}(D)$ (see Figure \ref{fig:tgForTheThirdClaim3Sat}). Then by \Prcref{prop:po-iff-no-tc-and-sl}, for each $i \in [m]$, there must exists $j \in [3]$ such that $p(a_{i,\ind_{D}(i,j)}) = \bb_{D}(i,j)$.

($\Leftarrow$): Assume that $p$ satisfies both conditions. By \Clcref{claim:threesatclaim2}, $p$ is pareto optimal in both $P_{1}(m,n)$ and $P_{2}(m,n)$. We claim that $p$ is also pareto optimal in $P_{3}(D)$: First, $p$ does not admit self loops in $P_{3}(D)$ since all the items are allocated. Second, observe that every trading cycle in $P_{3}(D)$ is of the form $(a_{i,\ind_{D}(i,3)},\overline{\bb_{D}(i,3)} , a_{i,\ind_{D}(i,2)},\overline{\bb_{D}(i,2)} , a_{i,\ind_{D}(i,1)},\overline{\bb_{D}(i,1)})$. By the second condition (in each clause there exists at least one ``satisfied'' literal), $p$ does not allow cycles of such form. Thus, we conclude that $p$ pareto optimal in $P_{3}(D)$.\qed
\end{proof}

\begin{lemma}\label{lemma:ThreeSatYesInstanceIffPO}
An instance $D = (\XX,\CC)$ of \ThreeSat\ such that $|\XX|=n$ and $|\CC|=m$ is a \yes-instance\ if and only if there exists an assignment $p : A(m,n) \rightarrow I(m,n) \cup \lbrace b_{\emptyset} \rbrace$ that is pareto optimal in $P_{1}(m,n)$, $P_{2}(m,n)$, and $P_{3}(D)$. 
\end{lemma}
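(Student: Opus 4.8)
The plan is to derive the lemma essentially as a corollary of \Clcref{claim:threesatclaim3}, which already characterizes the assignments that are simultaneously pareto optimal in $P_{1}(m,n)$, $P_{2}(m,n)$, and $P_{3}(D)$ by two combinatorial conditions. So the only real work is to translate those two conditions into the language of truth assignments for $D$, which means the lemma reduces to showing that an assignment meeting both conditions of \Clcref{claim:threesatclaim3} exists if and only if $D$ is satisfiable.

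First I would set up a correspondence between truth assignments $\phi : \XX \rightarrow \lbrace \text{true},\text{false} \rbrace$ and assignments $p$ satisfying the first condition of \Clcref{claim:threesatclaim3}. Given $\phi$, define $p$ by placing $P_{j}^{\mathrm{true}} \subseteq p$ when $\phi(x_{j})$ is true and $P_{j}^{\mathrm{false}} \subseteq p$ when $\phi(x_{j})$ is false, for each $j \in [n]$; since $b_{i,j}$ and $\overline{b_{i,j}}$ are acceptable only to $a_{i,j}$ and $\overline{a_{i,j}}$, this $p$ is a well-defined assignment that allocates every item. Conversely, given any $p$ meeting the first condition, the two blocks $P_{j}^{\mathrm{true}}$ and $P_{j}^{\mathrm{false}}$ are mutually exclusive (e.g.\ $a_{1,j}$ cannot receive two items), so for each $j$ exactly one of $P_{j}^{\mathrm{true}} \subseteq p$ or $P_{j}^{\mathrm{false}} \subseteq p$ holds, and reading off $\phi(x_{j})$ accordingly is well-defined. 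This makes the first condition of \Clcref{claim:threesatclaim3} equivalent to the statement that $p$ encodes a truth assignment $\phi$.

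The key step is to show that, under this encoding, the second condition of \Clcref{claim:threesatclaim3} is equivalent to $\phi$ satisfying every clause of $D$. I would argue this literal by literal: for a fixed clause $C_{i}$ and position $j \in [3]$, write $t = \ind_{D}(i,j)$ and check the two cases of the definition of $\bb_{D}(i,j)$. If $\ell_{i}^{j}$ is positive (so $\ell_{i}^{j}=x_{t}$), then $\bb_{D}(i,j)=b_{i,t}$, and $p(a_{i,t})=b_{i,t}$ forces $P_{t}^{\mathrm{true}} \subseteq p$, i.e.\ $\phi(x_{t})$ is true and $\ell_{i}^{j}$ is satisfied. If $\ell_{i}^{j}$ is negative (so $\ell_{i}^{j}=\overline{x_{t}}$), then $\bb_{D}(i,j)=\overline{b_{i,t}}$, and $p(a_{i,t})=\overline{b_{i,t}}$ forces $P_{t}^{\mathrm{false}} \subseteq p$, i.e.\ $\phi(x_{t})$ is false and again $\ell_{i}^{j}$ is satisfied; both implications reverse. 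Hence the statement that there exists $j \in [3]$ with $p(a_{i,\ind_{D}(i,j)})=\bb_{D}(i,j)$ holds exactly when some literal of $C_{i}$ is satisfied by $\phi$.

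Combining these observations, the two conditions of \Clcref{claim:threesatclaim3} hold for some $p$ if and only if some truth assignment $\phi$ satisfies all clauses of $D$. For the forward direction I would take a satisfying $\phi$, build $p$ as above, verify the second condition using a satisfied literal in each clause, and invoke \Clcref{claim:threesatclaim3} to conclude $p$ is pareto optimal in all three profiles; for the converse I would read $\phi$ off a pareto optimal $p$ via \Clcref{claim:threesatclaim3} and use the second condition to conclude each clause is satisfied, so $D$ is a \yesinstance. The main obstacle is purely bookkeeping: keeping the positive/negative literal cases, the $\mathrm{true}/\mathrm{false}$ block membership, and the index $\ind_{D}(i,j)$ aligned so that the literal-satisfaction equivalence comes out correctly in both cases.
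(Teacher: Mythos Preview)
Your proposal is correct and follows essentially the same approach as the paper: both arguments invoke \Clcref{claim:threesatclaim3} to reduce the question to the two combinatorial conditions, build the dictionary between truth assignments $\phi$ and block choices $P_j^{\mathrm{true}}/P_j^{\mathrm{false}}$, and verify via the positive/negative literal case split that the second condition is precisely clause satisfaction. The only cosmetic difference is that you first set up the full correspondence and then run both directions through it, whereas the paper treats the two directions separately; the content is the same.
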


\begin{proof}
($\Rightarrow$): Assume that $D$ is a \yesinstance. Then, there exists a boolean assignment $\varphi : \XX \rightarrow \lbrace T,F \rbrace$ that satisfies every clause in $\CC$. We construct an assignment $p: A(m,n) \rightarrow I(m,n) \cup \lbrace b_{\emptyset} \rbrace$ as follows. For each $x_{j} \in \XX$, if $\varphi(x_{j})=T$, we add $P_{j}^{\mathrm{true}}$ to $p$, and otherwise, we add $P_{j}^{\mathrm{false}}$ to $p$. We claim that $p$ is pareto optimal in all three preference profiles. First, by \Clcref{claim:threesatclaim2}, $p$ is pareto optimal in $P_{1}(m,n)$ and in $P_{2}(m,n)$. Second, it is pareto optimal in the third layer: Since $\varphi$ satisfies every clause, we have that for each $C_{i} = \ell_{i}^{1} \vee \ell_{i}^{2} \vee \ell_{i}^{3} \in \CC$, there exists $j \in [3]$ such that (1) if $\ell_{i}^{j}$ is positive, then $\varphi(x_{\ind_{D}(i,j)})=T$, and (2) if $\ell_{i}^{j}$ is negative, then $\varphi(x_{\ind_{D}(i,j)})=F$. By the definition of $\bb_{D}(i,j)$ and the construction of $p$, we have that $p(a_{i,j}) = \bb_{D}(i,j)$, and by \Clcref{claim:threesatclaim3}, $p$ is pareto optimal in $P_{3}(D)$.

($\Leftarrow$): Assume that there exists an assignment $p$ that is pareto optimal in $P_{1}(m,n),P_{2}(m,n)$ and $P_{3}(D)$. By \Clcref{claim:threesatclaim3}, for each $j \in [n]$, either $P_{j}^{\mathrm{true}} \subseteq p$ or $P_{j}^{\mathrm{false}} \subseteq p$. We define a boolean assignment for $(\XX,\CC)$ by $\varphi(x_{j}) = T$ if $P_{j}^{\mathrm{true}} \subseteq p$, and $\varphi(x_{j}) = F$ if $P_{j}^{\mathrm{false}} \subseteq p$. By \Clcref{claim:threesatclaim3}, for each clause $C_{i}= \ell_{i}^{1} \vee \ell_{i}^{2} \vee \ell_{i}^{3} \in \CC$, there exists $j \in [3]$ such that $p(a_{i,\ind_{D}(i,j)}) = \bb_{D}(i,j)$. By the construction of $\varphi$, $\varphi(\ell_{i}^{j}) = T$. Hence, $\varphi$ satisfies each clause in $C$, and therefore $D$ is a \yes-instance.\qed
\end{proof}

We rely on these results to design a polynomial reduction from \ThreeSat\ to \goassign.

\begin{theorem}\label{theorem:threesat-reduction}
\ThreeSat\ is polynomial-time reducible to \goassign\, where $\alpha = \ell = 3$ and $d = 3$ or where $\alpha = \ell = 4$ and $d = m$.
\end{theorem}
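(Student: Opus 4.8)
The plan is to read off both reductions from \Lecref{lemma:ThreeSatYesInstanceIffPO}, which already equates satisfiability of an instance $D=(\XX,\CC)$ (with $n$ variables and $m$ clauses) to the existence of a single assignment that is simultaneously pareto optimal in the three profiles $P_1(m,n)$, $P_2(m,n)$, $P_3(D)$. Since an $\alpha$-globally optimal assignment with $\alpha=\ell$ is, by Definition~\ref{def:alpha-globally-optimal}, exactly an assignment that is pareto optimal in \emph{every} layer, the first reduction is immediate: given $D$, output the \goassign\ instance with agent set $A(m,n)$, item set $I(m,n)$, the three layers $P_1(m,n),P_2(m,n),P_3(D)$, and $\alpha=\ell=3$. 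All three profiles are built in polynomial time; the binding case for the list length is the chain profile $P_2(m,n)$, whose lists have constant length, giving the stated constant bound on $d$; and by \Lecref{lemma:ThreeSatYesInstanceIffPO} the instance is a \yes-instance\ iff $D$ is satisfiable. This settles the regime $\alpha=\ell=3$, $d=\OO(1)$.

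For the second regime I would keep the same agents, items, and logical skeleton, but force the preference lists to be complete so that, in the resulting instance, the number of items minus the longest list is a constant. Concretely, fix an arbitrary global linear order $\sigma$ on $I(m,n)$ and, for each layer, extend every agent's list to a complete list by appending, at its very bottom and in $\sigma$-order, all items it did not already rank; call the resulting profiles $\widehat{P_1},\widehat{P_2},\widehat{P_3}$. Because completion only enlarges acceptable sets, it can never erase a self-loop or a trading cycle, so the only danger is that it \emph{creates} globally optimal assignments that do not encode a truth assignment. To rule this out I would add a fourth layer equal to the original short profile $P_1(m,n)$ and set $\alpha=\ell=4$. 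Then $d$ equals the number of items (the completed lists are complete), so the number of items minus $d$ is $0$, and $\ell+(m-d)$ is constant.

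Correctness of the second reduction splits as usual. For the forward direction, take the assignment $p$ built from a satisfying truth assignment as in \Lecref{lemma:ThreeSatYesInstanceIffPO}: it allocates \emph{every} item of $I(m,n)$ and assigns each agent an item lying in that agent's \emph{original} list, hence ranked above all appended items. Thus $p$ has no self-loop (nothing is unallocated) and no new trading cycle (no agent ever covets a bottom-appended item), so it remains pareto optimal in each $\widehat{P_i}$; it is pareto optimal in the fourth layer $P_1(m,n)$ by the easy direction of \Clcref{claim:threesatclaim1}; hence $p$ is $4$-globally optimal. For the reverse direction, let $p$ be $4$-globally optimal. Pareto optimality in the fourth layer is precisely the hypothesis of \Clcref{claim:threesatclaim1}, so $\{p(a_{i,j}),p(\overline{a_{i,j}})\}=\{b_{i,j},\overline{b_{i,j}}\}$ for all $i,j$; in particular every agent holds a real item sitting inside its original list in $\widehat{P_2}$ and $\widehat{P_3}$, above every appended item. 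Consequently any trading cycle of $p$ in $\widehat{P_2}$ or $\widehat{P_3}$ uses only original, non-appended preferences, so $p$ is in fact pareto optimal in the unpadded $P_2(m,n)$ and $P_3(D)$ as well, and Claims~\ref{claim:threesatclaim2} and~\ref{claim:threesatclaim3} then deliver a satisfying assignment.

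The genuinely delicate point — the step I would write out most carefully — is this interaction between completion and the structural claims. I must argue that appending items only at the bottom (i) cannot destroy any of the self-loops or trading cycles that the proofs of Claims~\ref{claim:threesatclaim1}--\ref{claim:threesatclaim3} exploit (via \Prcref{prop:po-iff-no-tc-and-sl}), and (ii) together with the short enforcing layer $P_1(m,n)$ cannot introduce a spurious globally optimal assignment, since once \Clcref{claim:threesatclaim1} pins every agent to a top-of-list real item the entire padded region becomes inert. The fourth layer is exactly what restores the property ``$b_{i,j}$ and $\overline{b_{i,j}}$ are acceptable only to $a_{i,j}$ and $\overline{a_{i,j}}$'' that completion would otherwise destroy, and is the reason the second regime carries $\ell=4$ rather than $\ell=3$.
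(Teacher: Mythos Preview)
Your first reduction (three layers $P_1,P_2,P_3$, $\alpha=\ell=3$) is exactly the paper's argument.

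For the second regime you take a genuinely different route. The paper keeps the three short profiles $P_1(m,n),P_2(m,n),P_3(D)$ untouched and simply adjoins a \emph{fourth} layer in which every agent carries the \emph{same} complete linear order on $I(m,n)$. In that layer all agents pairwise respect each other, so by \Lecref{lemma:respectedAgents} there are no trading cycles, and since any assignment that is pareto optimal in $P_1(m,n)$ allocates every item there are no self-loops either; hence any $3$-globally optimal assignment for the original three layers is automatically pareto optimal in the fourth, and conversely the fourth layer imposes nothing new. This gives $\alpha=\ell=4$ and $d=\nitems$ in one line. Your construction instead pads $P_1,P_2,P_3$ to complete profiles $\widehat{P_1},\widehat{P_2},\widehat{P_3}$ and re-adds the short $P_1(m,n)$ as an enforcing fourth layer; the argument you sketch (the fourth layer pins every agent to an item in its original list, so the padded tails are inert and trading cycles in $\widehat{P_i}$ coincide with trading cycles in $P_i$) is correct, but it requires exactly the ``delicate interaction'' analysis you flag, whereas the paper's trivial fourth layer sidesteps that entirely. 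Both approaches yield the same parameter bounds; the paper's is shorter, yours has the minor conceptual advantage that three of the four layers are complete rather than just one.
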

\begin{proof}
Given an instance $D = (\XX,\CC)$ of \ThreeSat, such that $|\XX| = n$, and $|\CC| = m$, we construct an instance of \goassign\ with the agent set $A(m,n)$, and the item set $I(m,n)$, consisting of three layers. The first layer contains $P_{1}(m,n)$, the second contains $P_{2}(m,n)$, and the third contains $P_{3}(D)$, and we finally set $\alpha = \ell = 3$. The reduction clearly can be done in polynomial time. The correctness of the reduction is derived by \Lecref{lemma:ThreeSatYesInstanceIffPO}, which implies that there exists a 3-globally optimal assignment for the constructed instance if and only if $D$ is a \yes-instance. Notice that we can add an additional layer to the constructed instance where the preference lists of all the agents are complete and equal to each other (in this case, $d = m$). In this case, all the agents respect each other and therefore each assignment that allocates all the items is pareto optimal. This implies that there exists a 4-globally optimal assignment for the constructed instance if and only if $D$ is a \yes-instance. \qed  
\end{proof}

Since $\ell  + d$ (or $\ell + (m - d)$) in the statement above is upper bounded by fixed constants, we conclude the following.
\begin{corollary} \label{corollary:np-hardness-for-ell-alpha-d} 
\goassign\ is \paraH\ with respect to the parameters $\ell + d$ and $\ell + (m - d)$.
\end{corollary}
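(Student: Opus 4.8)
The plan is to derive the corollary directly from \Thcref{theorem:threesat-reduction} together with the characterization of \paraH ness recalled in the preliminaries: a parameterized problem is \paraH\ as soon as it remains \NPH\ when its parameter is frozen to some fixed constant. Since \ThreeSat\ is \NPH\ and \Thcref{theorem:threesat-reduction} supplies a polynomial-time reduction from \ThreeSat\ to \goassign, it suffices to read off the values of the two parameters of interest on the instances produced by that reduction and verify that each is bounded by an absolute constant.

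First I would handle the parameter $\ell + d$. The first variant of the reduction in \Thcref{theorem:threesat-reduction} outputs \goassign\ instances consisting of exactly three layers, carrying $P_{1}(m,n)$, $P_{2}(m,n)$ and $P_{3}(D)$, so $\ell = 3$; inspecting these three profiles shows that every preference list has length bounded by a fixed constant, so $d = \OO(1)$. Hence $\ell + d$ is a fixed constant over the entire image of the reduction. Because the reduction runs in polynomial time and is correct (a $3$-globally optimal assignment exists if and only if $D$ is satisfiable, via \Lecref{lemma:ThreeSatYesInstanceIffPO}), the restriction of \goassign\ to instances with this fixed value of $\ell + d$ is \NPH, and therefore \goassign\ parameterized by $\ell + d$ is \paraH.

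Next I would handle the parameter $\ell + (m - d)$, where $m$ denotes the number of items. Here I would invoke the second variant of the reduction, which appends a fourth layer in which all preference lists are complete and identical; thus $\ell = 4$ and the maximum list length equals the number of items, i.e.\ $d = m$, so that $m - d = 0$. The appended layer does not affect the answer: in it all agents respect one another, so by \Lecref{lemma:respectedAgents} any assignment allocating all items is pareto optimal there, and global optimality is governed exactly as before by the first three layers. Consequently $\ell + (m - d) = 4$ is again a fixed constant, and the same argument yields that \goassign\ parameterized by $m - d + \ell$ is \paraH.

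There is essentially no combinatorial obstacle remaining at this stage, since all of the substance is carried by the three profile constructions and \Lecref{lemma:ThreeSatYesInstanceIffPO}. The only step demanding genuine care is the bookkeeping of parameter values on the output instances: one must recognize the overloading of the symbol $m$ (in the phrasing ``$d = m$'' it refers to the number of \emph{items} of the constructed \goassign\ instance, not to the number of clauses of $D$), and one must note that it is precisely the complete-layer trick of the second variant that lets us fix $m - d$ — rather than merely $d$ — to a constant, which is what the second parameterization requires.
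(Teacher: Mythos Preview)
Your proposal is correct and follows exactly the approach of the paper: the corollary is an immediate consequence of \Thcref{theorem:threesat-reduction}, and the paper's own proof is just the one-line remark that $\ell+d$ (respectively $\ell+(m-d)$) is bounded by a fixed constant on the reduced instances. Your write-up simply spells out this observation in more detail, including the point about the fourth complete layer forcing $d=\nitems$ so that $m-d=0$.
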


\section{Non-Existence of Polynomial Kernels}
\label{sec:part1Kernelization}

Before we present our next results, let us provide the following observation on which we will rely in the next three cross-compositions.

\begin{observation} \label{observation:threeSatObservation}
Let $D = (\XX,\CC)$ be an instance of \ThreeSat\ of size $n$. Then there exists an equivalent \ThreeSat\ instance $U = (\XX_{U},\CC_{U})$ where $\XX_{U} = \lbrace x_{1},\ldots,x_{n} \rbrace$ and $|\CC_{U}| = n$.
\end{observation}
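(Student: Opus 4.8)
The plan is to produce $U$ from $D$ by a routine padding argument that equalizes the number of variables and the number of clauses while preserving satisfiability. Write $n_{\XX} = |\XX|$ and $n_{\CC} = |\CC|$, and set $n = \max\{n_{\XX}, n_{\CC}\}$; the goal is an equivalent instance with exactly $n$ variables and exactly $n$ clauses.

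First I would pad the variables. If $n_{\XX} < n$, I introduce $n - n_{\XX}$ fresh \emph{dummy} variables occurring in no clause and relabel the enlarged variable set as $\lbrace x_1,\ldots,x_n \rbrace$. Since these added variables appear in no clause, a truth assignment of $D$ extends to the enlarged set (assigning the dummies arbitrarily) without affecting which clauses are satisfied, and conversely any assignment of the enlarged set restricts to one of $D$; hence satisfiability is unchanged. Next I would pad the clauses. If $n_{\CC} < n$, I append $n - n_{\CC}$ \emph{dummy} clauses, each a tautology over the existing variables, for instance the three-literal clause $(x_1 \vee \overline{x_1} \vee x_1)$, which is well defined because $n \geq 1$ whenever a clause must be added (indeed, if a clause is added then $n = n_{\XX} > n_{\CC} \geq 0$). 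Every assignment satisfies such a clause, so the set of satisfying assignments is unaffected. Taking $\XX_U = \lbrace x_1,\ldots,x_n \rbrace$ and letting $\CC_U$ consist of $\CC$ together with the appended dummy clauses gives $|\XX_U| = |\CC_U| = n$, and combining the two preservation statements shows that $U$ is a \yesinstance\ if and only if $D$ is.

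Finally, the construction runs in polynomial time, and since $n = \max\{n_{\XX}, n_{\CC}\}$ is at most the size of $D$, the instance $U$ has size polynomial in that of $D$. There is no genuine difficulty here: the only points deserving care are that each appended clause is a legitimate three-literal clause satisfied by \emph{every} assignment, and that the fresh variables occur in no clause at all. The former guarantees that appending clauses neither removes nor strengthens any constraint, so a satisfying assignment of $D$ still satisfies $U$ and, conversely, any satisfying assignment of $U$ (whose clauses contain all of $\CC$) restricts to one of $D$; the latter guarantees that such an assignment always extends freely. The degenerate case $n = 0$, where $D$ has no variables and hence no clauses, is vacuous.
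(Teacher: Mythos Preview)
Your padding argument is the same idea as the paper's, and is correct in spirit. There is one small but genuine slip: in the statement, $n$ is \emph{given} as the size of the instance $D$; you are not free to redefine it as $\max\{n_{\XX}, n_{\CC}\}$. The paper's proof opens with ``Since $D$ has size $n$, it has at most $n$ variables and at most $n$ clauses,'' and then pads both up to the given $n$. With your definition, the resulting $U$ would have $\max\{n_{\XX}, n_{\CC}\}$ variables and clauses, which need not equal the stated $n$ (and, more importantly for the intended application to cross-compositions, two instances of the same encoding size could end up with different normalized sizes). The fix is immediate: just use the given $n$ and observe that $n_{\XX} \le n$ and $n_{\CC} \le n$ under any reasonable encoding.

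The only other difference is cosmetic: you pad the clause set with tautological clauses like $(x_1 \vee \overline{x_1} \vee x_1)$, whereas the paper simply appends copies of an existing clause (treating $\CC_U$ as a multiset). Both preserve satisfiability; your choice is slightly more robust in the degenerate case $\CC = \emptyset$, while the paper's avoids introducing repeated-literal clauses. Neither point is material.
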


\begin{proof}
Since $D$ has size $n$, it has at most $n$ variables and at most $n$ clauses. Then there exists a one to one mapping from $\XX$ to $\XX_{U}$. We pick such arbitrary mapping and replace each variable in $D$ with its corresponding variable in $X_{U}$. We then append the clause set with copies of some existing clause until it has size $n$, and denote the resulting clause multi-set by $\CC_{U}$. The new instance $U = (\XX_{U},\CC_{U})$ is clearly equivalent to $D$.\qed
\end{proof}

\begin{theorem} \label{theorem:nopoly-kernel-agents-items-alpha} 
There does not exist a polynomial kernel for \goassign\ with respect to the parameters $k_{1} = n + m +\alpha$ and $k_{2} = n + m +(\ell - \alpha)$, unless \NP$\subseteq $\coNPpoly.
\end{theorem}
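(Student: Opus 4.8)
The plan is to invoke the cross-composition machinery of \Prcref{prop:noKern} and cross-compose the \NP-hard problem \ThreeSat\ into \goassign, once for each parameterization. For the polynomial equivalence relation I group the input formulas by their bit-size $s$ (instances of differing size, or malformed ones, are handled in the standard trivial way), so I may assume I am given $t$ formulas $D_{1},\ldots,D_{t}$ all of size $s$; applying \Obcref{observation:threeSatObservation} to each, I further assume every $D_{w}$ has exactly $s$ variables and $s$ clauses. The decisive point of cross-composition is that the produced parameter may grow polynomially in $s+\log t$ but must be independent of $t$. Since both parameters contain $n+m$, the single most important design decision is that all $t$ formulas must be encoded over \emph{one shared} agent/item set, never over $t$ disjoint copies.

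Concretely, I reuse the gadget behind \Lecref{lemma:ThreeSatYesInstanceIffPO}: the shared agents $A(s,s)$ and items $I(s,s)$ (so $n=m=\OO(s^{2})$), one shared copy of $P_{1}(s,s)$ and of $P_{2}(s,s)$, and one layer $P_{3}(D_{w})$ per formula. The backbone fact I lean on is that an assignment pareto optimal in both $P_{1}(s,s)$ and $P_{2}(s,s)$ encodes a single globally consistent truth assignment $\varphi$, and that such a $p$ is additionally pareto optimal in $P_{3}(D_{w})$ exactly when $\varphi$ satisfies $D_{w}$ (\Clcref{claim:threesatclaim3}). Thus the target behaviour is that a globally optimal assignment be forced to be pareto optimal in $P_{1}$, in $P_{2}$, and in the profile $P_{3}(D_{w})$ of at least one satisfiable $D_{w}$; the obstacle is that an isolated layer is always pareto optimizable, so the solver's freedom to pick \emph{which} $\alpha$ layers to satisfy must be pinned down.

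To control this I augment the instance with an index-selection gadget on $\OO(\log t)$ auxiliary agents and items storing a binary index $w^{*}\in[t]$ (its internal consistency enforced by $\OO(\log t)$ extra layers), and I let the two parameterizations differ only in the polarity of the per-formula layers. For $k_{1}=n+m+\alpha$ I make the augmented layer $L_{w}$ pareto optimal iff the selector equals $w$ and $\varphi$ satisfies $D_{w}$; then at most one $L_{w}$ can ever be pareto optimal, so taking $\alpha$ equal to the $\OO(\log t)$ mandatory layers ($P_{1},P_{2}$, the selector-consistency layers, and one instance layer) forces precisely the intended selection, and the unselected instance layers are simply skipped (a large $\ell-\alpha$ is harmless here). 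For $k_{2}=n+m+(\ell-\alpha)$ I flip the polarity: $L_{w}$ is pareto optimal iff the selector differs from $w$ or $\varphi$ satisfies $D_{w}$, so that all layers can be made simultaneously pareto optimal exactly when $D_{w^{*}}$ is satisfiable, whence I set $\alpha=\ell$ and get $\ell-\alpha=\OO(1)$. In both cases the parameter is $\OO(s^{2}+\log t)=\mathrm{poly}(s+\log t)$, and the equivalence with $\bigvee_{w}\bigl(D_{w}\ \text{satisfiable}\bigr)$ follows from \Lecref{lemma:ThreeSatYesInstanceIffPO} together with the at-most-one-selected-layer property; \Prcref{prop:noKern} then gives the stated kernel lower bounds.

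The step I expect to be the crux is the design of these conditional layers: realizing the test ``selector $=w$'' (respectively its negation) as a pareto-optimality condition—arranging trading cycles and self-loops that fire precisely on the conjunction of the $\log t$ selector bits—while simultaneously (i) keeping the auxiliary agent/item set of size $\OO(\log t)$, (ii) not disturbing the consistency forced by $P_{1}(s,s)$ and $P_{2}(s,s)$, and (iii) preserving the clause-satisfaction semantics of $P_{3}(D_{w})$. Everything else, namely the two easy implications and the parameter/layer-count bookkeeping, should be routine once this gadget is in place.
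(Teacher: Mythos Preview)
Your proposal is correct and follows essentially the same approach as the paper's proof: a cross-composition from \ThreeSat\ over a \emph{shared} agent/item set built from $A(s,s)$ and $I(s,s)$, two fixed layers carrying $P_{1}(s,s)$ and $P_{2}(s,s)$, one layer per input formula carrying $P_{3}(D_{w})$, and a binary selector gadget on $\OO(\log t)$ auxiliary agents/items; for $k_{1}$ the per-formula layer is made pareto optimal only when the selector equals $w$ (so $\alpha$ stays small), while for $k_{2}$ the polarity is flipped so that only the selected layer can fail (allowing $\alpha=\ell$). One minor simplification the paper makes relative to your sketch: there are no separate ``selector-consistency layers''---the selector agents' preferences are simply appended to the $P_{1}$ and $P_{2}$ layers, which already forces each bit to be set, so $\alpha=3$ suffices for $k_{1}$ rather than $\OO(\log t)$; this does not affect the parameter bound but removes the need to argue that the solver cannot skip consistency layers.
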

\begin{proof}
We provide two cross-compositions from \ThreeSat\ to \goassign. Given instances of \ThreeSat\ $D_{0}=(\XX_{0},\CC_{0}),\ldots,D_{t-1}=(\XX_{t-1},\CC_{t-1})$ of the same size $n \in \mathbb{N}$ for some $t \in \mathbb{N}$, we first modify each instance $D_{i}$ to have $\XX_{i} = \lbrace x_{1}, \ldots , x_{n} \rbrace$ and $|\CC_{i}|=n$ using \Obcref{observation:threeSatObservation}. The two algorithms construct instances of \goassign\ with $2n ^{2} + \lceil 2 \log{t} \rceil$ agents, $2n ^{2} + \lceil 2 \log{t} \rceil$ items, and $2+t$ layers, that share the same agent set and item set (Shown in Figure \ref{fig:generalKernel}). We first provide the first two layers, which are identical in both constructions, and then we provide the remaining $t$ layers for each construction separately. We create the agent sets $A(n,n) = \lbrace a_{i,j} \mid i,j \in [n] \rbrace \cup \lbrace \overline{a_{i,j}} \mid i,j \in [n] \rbrace$ and $A_{t} = \lbrace c_{i} \mid i \in [\lceil \log{t} \rceil] \rbrace \cup \lbrace \overline{c_{i}} \mid i \in [\lceil \log{t} \rceil] \rbrace$, and the item sets $I(n,n) = \lbrace b_{i,j} \mid i,j \in [n] \rbrace \cup \lbrace \overline{b_{i,j}} \mid i,j \in [n] \rbrace$ and $I_{t} = \lbrace d_{i} \mid i \in [\lceil \log{t} \rceil] \rbrace \cup \lbrace \overline{d_{i}} \mid i \in [\lceil \log{t} \rceil] \rbrace$. Both constructions are defined over the agent set $A = A(m,n) \cup A_{t}$ and the item set $I = I(m,n) \cup I_{t}$.
Intuitively, we use $A(m,n)$ and $I(m,n)$ in order to construct the preference profiles $P_{1}(m,n), P_{2}(m,n)$ and $P_{3}(D_{i})$ for each $i \in \lbrace 0 , \ldots, t-1  \rbrace$. The agents from $A_{t}$ ``choose'' a \yes-instance\ (if one exists) from $D_{0}, \ldots , D_{t-1}$; their goal is to enforce an assignment to be pareto optimal in (at most) one profile among $P_{3}(D_{0}),\ldots,P_{3}(D_{t-1})$. 

The first layer is a composition of $P_{1}(n,n)$ with additional $2 \lceil \log{t} \rceil$ preference lists that belong to the agents from $A_{t}$ and is defined as follows: 

\begin{framed}[.5\textwidth]
\begin{itemize}
\item $a_{i,j}$ : $b_{i,j}>\overline{b_{i,j}}\ \ \forall i,j \in [n]$
\item $\overline{a_{i,j}}$ : $b_{i,j}>\overline{b_{i,j}}\ \ \forall i,j \in [m]$
\item $c_{i}$ : $d_{i}>\overline{d_{i}}\ \ \forall i \in [\lceil \log{t} \rceil]$
\item $\overline{c_{i}}$ : $d_{i}>\overline{d_{i}}\ \ \forall i \in [\lceil \log{t} \rceil]$
\end{itemize}
\end{framed}

The second layer consists of the preference profile $P_{2}(n,n)$ together with the same preferences of the agents from $A_{t}$ as in the first layer: 

\begin{framed}[.9\textwidth]
\begin{itemize}
\item $a_{n,j}$ : $b_{n,j}>b_{n-1,j}>\overline{b_{n,j}}\ \ \forall j \in [n]$
\item $\overline{a_{n,j}}\ $: $\ b_{n,j}>b_{n-1,j}>\overline{b_{n,j}}\ \ \forall j \in [n]$
\item $a_{i,j}$ : $b_{i-1,j}>\overline{b_{i,j}}>\overline{b_{i+1,j}}>b_{i,j}\ \ \forall i \in \lbrace 2,\ldots,n-1 \rbrace$,$ j \in [n]$
\item $\overline{a_{i,j}}$ : $b_{i-1,j}>\overline{b_{i,j}}>\overline{b_{i+1,j}}>b_{i,j}\ \ \forall i \in \lbrace 2,\ldots,n-1 \rbrace$,$ j \in [n]$
\item $a_{1,j}$ : $\overline{b_{1,j}}>\overline{b_{2,j}}>b_{1,j}\ \ \forall j \in [n]$
\item $\overline{a_{1,j}}$ : $\overline{b_{1,j}}>\overline{b_{2,j}}>b_{1,j}\ \ \forall j \in [n]$
\item $c_{i}$ : $d_{i}>\overline{d_{i}}\ \ \forall i \in [\lceil \log{t} \rceil]$
\item $\overline{c_{i}}$ : $d_{i}>\overline{d_{i}}\ \ \forall i \in [\lceil \log{t} \rceil]$
\end{itemize}
\end{framed}

\begin{figure}[t]
\centering
\scalebox{0.7}{
\input{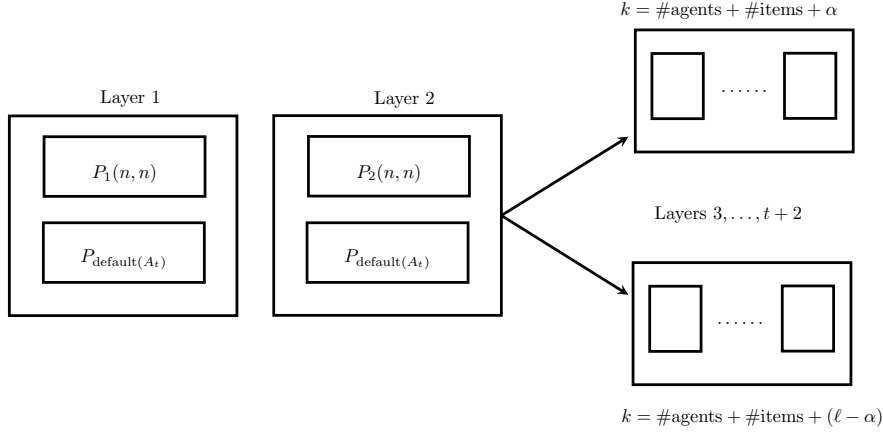}
}
\caption{The first two layers are identical in both cross-compositions; $P_{\mathrm{default}}(A_{t})$ is the preference profile in which both $c_{i}$ and $\overline{c_{i}}$ have the preference list $d_{i}>\overline{d_{i}}$ for each $i \in [\lceil \log{t} \rceil]$. The first layer is a composition of $P_{1}(n,n)$ and $P_{\mathrm{default}}(A_{t})$, and the second layer is a composition of $P_{2}(n,n)$ and $P_{\mathrm{default}}(A_{t})$. The rest $t$ layers will be defined separately for each parameter.}
\label{fig:generalKernel}
\end{figure}

\begin{definition}
For each $j \in [\lceil \log{t} \rceil]$ and $b \in \lbrace 0,1 \rbrace$, we define
$
  \dd(j,b) =
  \begin{cases}
    d_{j} & b=0\\
    \overline{d_{j}} & b=1
  \end{cases} 
$.
\end{definition}
We also denote $M_{j}^{0} = \lbrace (c_{j},d_{j}),(\overline{c_{j}},\overline{d_{j}})\rbrace$ and $M_{j}^{1} = \lbrace (c_{j},\overline{d_{j}}),(\overline{c_{j}},d_{j})\rbrace$. Informally speaking, the sets $M_{j}^{0}$ and $M_{j}^{1}$ will correspond to the value of the $j$-th bit in the binary representation of $i \in \lbrace 0 , \ldots, t-1  \rbrace$ such that $D_{i}$ is a \yes-instance. The sets $M_{j}^{b}$ that are contained in a pareto optimal assignment will ``encode'' the index of some \yes-instance\ (if one exists). For each $i \in \mathbb{N}, j \in [\lceil \log{i} \rceil]$, we define $i[j]$  as the $j$-th bit in the binary representation of $i$, i.e.~$i = i[1]i[2] \ldots i[\lceil \log{t} \rceil]$. We denote the set $M_{i}$ by $M_{i} = \bigcup_{j=1}^{\lceil \log{t} \rceil}{M_{j}^{i[j]}}$.
We first claim the following.

\begin{myclaim}\label{claim:cc1c1}
An assignment $p : A \rightarrow I \cup \lbrace b_{\emptyset} \rbrace$ is pareto optimal in the first two layers if and only if it is pareto optimal in both $P_{1}(n,n)$ and $P_{2}(n,n)$, and there exists exactly one $i \in \lbrace 0 , \ldots, t-1  \rbrace$ such that $M_{i} \subseteq p$. 
\end{myclaim}
\begin{proof}

($\Rightarrow$): Assume that $p$ is pareto optimal in both layers. Since $P_{1}(n,n)$ is contained in the first layer, $p$ is pareto optimal in $P_{1}(n,n)$. Similarly, $p$ is pareto optimal in $P_{2}(n,n)$ since $P_{2}(n,n)$ is contained in the second layer. Observe that for each $i \in [\lceil \log{t} \rceil]$, $d_{i}$ and $\overline{d_{i}}$ are both acceptable by $c_{i}$ and $\overline{c_{i}}$; and are the only items acceptable by them. Then, we have that $\lbrace p(c_{i}), p(\overline{c_{i}}) \rbrace = \lbrace d_{i},\overline{d_{i}} \rbrace$ (else we get a self loop), implying that for each $j \in [\lceil \log{t} \rceil]$, either $M_{j}^{0} \subseteq p$ or $M_{j}^{1} \subseteq p$. Let $i[j] = 0$ if $M_{j}^{0} \subseteq p$ and $i[j] = 1$ if $M_{j}^{1} \subseteq p$. Let $i \in \mathbb{N}$ be a number whose binary representation is $i = i[1] \ldots i[\lceil \log{t} \rceil]$; then, we have that $M_{i} \subseteq p$. We prove that $i$ is unique. Towards a contradiction, suppose there exists $e \in \lbrace 0 , \ldots, t-1  \rbrace$ such that $e \neq i$ and $p$ satisfies that both $M_{i} \subseteq p$ and $M_{e} \subseteq p$. Since $e \neq i$, there exists a place in their binary representations where they differ, i.e.~there exists $r \in [\lceil \log{t} \rceil]$ such that $e[r] \neq i[r]$. Thus, both $M_{r}^{0} \subseteq p$ and $M_{r}^{1} \subseteq p$, and this gives us a contradiction.   

($\Leftarrow$): Assume that $p$ is pareto optimal in both $P_{1}(n,n)$ and $P_{2}(n,n)$, and there exists exactly one $i \in \lbrace 0 , \ldots, t-1  \rbrace$ such that $M_{i} \subseteq p$. First, we prove that $p$ is pareto optimal in the first layer: Observe that all the agents pairwise respect each other in the first layer; then, by \Lecref{lemma:respectedAgents}, $p$ does not admit trading cycles in this layer. Moreover, by \Clcref{claim:threesatclaim1}, $p$ allocates all the items, hence it does not admit self loops. Thus, \Prcref{prop:po-iff-no-tc-and-sl} implies that $p$ is pareto optimal in the first layer. Second, we prove that $p$ is pareto optimal in the second layer: Observe that the agents from $A_{t}$ respect each other in the second layer; then by \Lecref{lemma:respectedAgents}, $p$ does not admit trading cycles among them. Furthermore, notice that the set of items acceptable by the agents from $A(n,n)$ is disjoint from the set of items acceptable by the agents from $A_{t}$, thus $p$ cannot admit trading cycles that consist of both agents from $A_{t}$ and agents from $A(n,n)$. Lastly, since $p$ is pareto optimal in $P_{2}(n,n)$, \Prcref{prop:po-iff-no-tc-and-sl} implies that it does not admit trading cycles among agents from $A(n,n)$. Then, we have by \Prcref{prop:po-iff-no-tc-and-sl} that $p$ is pareto optimal in the second layer.\qed
\end{proof}

We define the remaining $t$ layers separately for each construction.

\smallskip
\noindent\textbf{The parameter $k = n + m +\alpha$.}
The cross-composition for this parameter constructs layer $3+i$ as a composition of $P_{3}(D_{i})$ with a unique combination of preference lists of length $1$ that belong to agents from $A_{t}$. It is defined as follows: 

\begin{framed}
\begin{itemize}
\item $a_{q,\ind_{D_{i}}(q,3)}\ $: $\ \bb_{D_{i}}(q,3)>\overline{\bb_{D_{i}}(q,2)}>\overline{\bb_{D_{i}}(q,3)}\ \ \forall q \in [m]$
\item $a_{q,\ind_{D_{i}}(q,2)}\ $: $\ \bb_{D_{i}}(q,2)>\overline{\bb_{D_{i}}(q,1)}>\overline{\bb_{D_{i}}(q,2)}\ \ \forall q \in [m]$
\item $a_{q,\ind_{D_{i}}(q,1)}\ $: $\ \bb_{D_{i}}(q,1)>\overline{\bb_{D_{i}}(q,3)}>\overline{\bb_{D_{i}}(q,1)}\ \ \forall q \in [m]$
\item $\overline{a_{q,\ind_{D_{i}}(q,r)}}\ $: $\ \bb_{D_{i}}(q,r)>\overline{\bb_{D_{i}}(q,r)}\ \ \forall q \in [m] , r \in [3]$
\item $a_{q,j}\ $: $\ b_{q,j}>\overline{b_{q,j}}\ \ \forall q \in [m], j \in [n]$ such that $x_{j}$ does not appear in the $q$-th clause in $\CC_{i}$
\item $\overline{a_{q,j}}\ $: $\ b_{q,j}>\overline{b_{q,j}}\ \ \forall q \in [m], j \in [n]$ such that $x_{j}$ does not appear in the $q$-th clause in $\CC_{i}$
\item $c_{j}\ $:$\ \dd(j,i[j])\ \ \forall j \in [\lceil \log{t} \rceil]$
\item $\overline{c_{j}}\ $:$\ \overline{\dd(j,i[j])}\ \forall j \in [\lceil \log{t} \rceil]$
\end{itemize}
\end{framed}

\begin{figure}[t]
\centering
\scalebox{0.7}{
\ifx\du\undefined
  \newlength{\du}
\fi
\setlength{\du}{15\unitlength}
\begin{tikzpicture}
\pgftransformxscale{1.000000}
\pgftransformyscale{-1.000000}
\definecolor{dialinecolor}{rgb}{0.000000, 0.000000, 0.000000}
\pgfsetstrokecolor{dialinecolor}
\definecolor{dialinecolor}{rgb}{1.000000, 1.000000, 1.000000}
\pgfsetfillcolor{dialinecolor}
\definecolor{dialinecolor}{rgb}{1.000000, 1.000000, 1.000000}
\pgfsetfillcolor{dialinecolor}
\fill (16.925000\du,8.017363\du)--(16.925000\du,15.217363\du)--(25.125000\du,15.217363\du)--(25.125000\du,8.017363\du)--cycle;
\pgfsetlinewidth{0.100000\du}
\pgfsetdash{}{0pt}
\pgfsetdash{}{0pt}
\pgfsetmiterjoin
\definecolor{dialinecolor}{rgb}{0.000000, 0.000000, 0.000000}
\pgfsetstrokecolor{dialinecolor}
\draw (16.925000\du,8.017363\du)--(16.925000\du,15.217363\du)--(25.125000\du,15.217363\du)--(25.125000\du,8.017363\du)--cycle;
\definecolor{dialinecolor}{rgb}{0.000000, 0.000000, 0.000000}
\pgfsetstrokecolor{dialinecolor}
\node at (21.025000\du,11.857363\du){};
\definecolor{dialinecolor}{rgb}{1.000000, 1.000000, 1.000000}
\pgfsetfillcolor{dialinecolor}
\fill (18.175000\du,8.767363\du)--(18.175000\du,10.917363\du)--(23.975000\du,10.917363\du)--(23.975000\du,8.767363\du)--cycle;
\pgfsetlinewidth{0.100000\du}
\pgfsetdash{}{0pt}
\pgfsetdash{}{0pt}
\pgfsetmiterjoin
\definecolor{dialinecolor}{rgb}{0.000000, 0.000000, 0.000000}
\pgfsetstrokecolor{dialinecolor}
\draw (18.175000\du,8.767363\du)--(18.175000\du,10.917363\du)--(23.975000\du,10.917363\du)--(23.975000\du,8.767363\du)--cycle;
\definecolor{dialinecolor}{rgb}{0.000000, 0.000000, 0.000000}
\pgfsetstrokecolor{dialinecolor}
\node at (21.075000\du,10.082363\du){$P_{3}(D_{1})$};
\definecolor{dialinecolor}{rgb}{0.000000, 0.000000, 0.000000}
\pgfsetstrokecolor{dialinecolor}
\node[anchor=west] at (21.075000\du,9.842363\du){};
\definecolor{dialinecolor}{rgb}{0.000000, 0.000000, 0.000000}
\pgfsetstrokecolor{dialinecolor}
\node[anchor=west] at (19.657429\du,7.459730\du){Layer 3};
\definecolor{dialinecolor}{rgb}{1.000000, 1.000000, 1.000000}
\pgfsetfillcolor{dialinecolor}
\fill (18.150000\du,11.870000\du)--(18.150000\du,14.020000\du)--(23.950000\du,14.020000\du)--(23.950000\du,11.870000\du)--cycle;
\pgfsetlinewidth{0.100000\du}
\pgfsetdash{}{0pt}
\pgfsetdash{}{0pt}
\pgfsetmiterjoin
\definecolor{dialinecolor}{rgb}{0.000000, 0.000000, 0.000000}
\pgfsetstrokecolor{dialinecolor}
\draw (18.150000\du,11.870000\du)--(18.150000\du,14.020000\du)--(23.950000\du,14.020000\du)--(23.950000\du,11.870000\du)--cycle;
\definecolor{dialinecolor}{rgb}{0.000000, 0.000000, 0.000000}
\pgfsetstrokecolor{dialinecolor}
\node at (21.050000\du,13.185000\du){$P(A_{t},1)$};
\definecolor{dialinecolor}{rgb}{0.000000, 0.000000, 0.000000}
\pgfsetstrokecolor{dialinecolor}
\node[anchor=west] at (25.425000\du,11.580000\du){$\ldots\ldots$};
\definecolor{dialinecolor}{rgb}{0.74, 0.74, 0.74}
\pgfsetfillcolor{dialinecolor}
\fill (27.975000\du,8.037633\du)--(27.975000\du,15.237633\du)--(36.175000\du,15.237633\du)--(36.175000\du,8.037633\du)--cycle;
\pgfsetlinewidth{0.100000\du}
\pgfsetdash{}{0pt}
\pgfsetdash{}{0pt}
\pgfsetmiterjoin
\definecolor{dialinecolor}{rgb}{0.000000, 0.000000, 0.000000}
\pgfsetstrokecolor{dialinecolor}
\draw (27.975000\du,8.037633\du)--(27.975000\du,15.237633\du)--(36.175000\du,15.237633\du)--(36.175000\du,8.037633\du)--cycle;
\definecolor{dialinecolor}{rgb}{0.000000, 0.000000, 0.000000}
\pgfsetstrokecolor{dialinecolor}
\node at (32.075000\du,11.877633\du){};
\definecolor{dialinecolor}{rgb}{0,0.8,0}
\pgfsetfillcolor{dialinecolor}
\fill (29.225000\du,8.817363\du)--(29.225000\du,10.967363\du)--(35.025000\du,10.967363\du)--(35.025000\du,8.817363\du)--cycle;
\pgfsetlinewidth{0.100000\du}
\pgfsetdash{}{0pt}
\pgfsetdash{}{0pt}
\pgfsetmiterjoin
\definecolor{dialinecolor}{rgb}{0.000000, 0.000000, 0.000000}
\pgfsetstrokecolor{dialinecolor}
\draw (29.225000\du,8.817363\du)--(29.225000\du,10.967363\du)--(35.025000\du,10.967363\du)--(35.025000\du,8.817363\du)--cycle;
\definecolor{dialinecolor}{rgb}{0.000000, 0.000000, 0.000000}
\pgfsetstrokecolor{dialinecolor}
\node at (32.125000\du,10.132363\du){$P_{3}(D_{i})$};
\definecolor{dialinecolor}{rgb}{0.000000, 0.000000, 0.000000}
\pgfsetstrokecolor{dialinecolor}
\node[anchor=west] at (32.125000\du,9.892363\du){};
\definecolor{dialinecolor}{rgb}{0.000000, 0.000000, 0.000000}
\pgfsetstrokecolor{dialinecolor}
\node[anchor=west] at (30.707429\du,7.480000\du){Layer 3+i};
\definecolor{dialinecolor}{rgb}{0,0.8,0}
\pgfsetfillcolor{dialinecolor}
\fill (29.200000\du,11.920000\du)--(29.200000\du,14.070000\du)--(35.000000\du,14.070000\du)--(35.000000\du,11.920000\du)--cycle;
\pgfsetlinewidth{0.100000\du}
\pgfsetdash{}{0pt}
\pgfsetdash{}{0pt}
\pgfsetmiterjoin
\definecolor{dialinecolor}{rgb}{0.000000, 0.000000, 0.000000}
\pgfsetstrokecolor{dialinecolor}
\draw (29.200000\du,11.920000\du)--(29.200000\du,14.070000\du)--(35.000000\du,14.070000\du)--(35.000000\du,11.920000\du)--cycle;
\definecolor{dialinecolor}{rgb}{0.000000, 0.000000, 0.000000}
\pgfsetstrokecolor{dialinecolor}
\node at (32.100000\du,13.235000\du){$P(A_{t},i)$};
\definecolor{dialinecolor}{rgb}{0.000000, 0.000000, 0.000000}
\pgfsetstrokecolor{dialinecolor}
\node[anchor=west] at (36.475000\du,11.630000\du){$\ldots\ldots$};
\definecolor{dialinecolor}{rgb}{1.000000, 1.000000, 1.000000}
\pgfsetfillcolor{dialinecolor}
\fill (38.985165\du,8.050919\du)--(38.985165\du,15.250919\du)--(47.185165\du,15.250919\du)--(47.185165\du,8.050919\du)--cycle;
\pgfsetlinewidth{0.100000\du}
\pgfsetdash{}{0pt}
\pgfsetdash{}{0pt}
\pgfsetmiterjoin
\definecolor{dialinecolor}{rgb}{0.000000, 0.000000, 0.000000}
\pgfsetstrokecolor{dialinecolor}
\draw (38.985165\du,8.050919\du)--(38.985165\du,15.250919\du)--(47.185165\du,15.250919\du)--(47.185165\du,8.050919\du)--cycle;
\definecolor{dialinecolor}{rgb}{0.000000, 0.000000, 0.000000}
\pgfsetstrokecolor{dialinecolor}
\node at (43.085165\du,11.890919\du){};
\definecolor{dialinecolor}{rgb}{1.000000, 1.000000, 1.000000}
\pgfsetfillcolor{dialinecolor}
\fill (40.235165\du,8.830649\du)--(40.235165\du,10.980649\du)--(46.035165\du,10.980649\du)--(46.035165\du,8.830649\du)--cycle;
\pgfsetlinewidth{0.100000\du}
\pgfsetdash{}{0pt}
\pgfsetdash{}{0pt}
\pgfsetmiterjoin
\definecolor{dialinecolor}{rgb}{0.000000, 0.000000, 0.000000}
\pgfsetstrokecolor{dialinecolor}
\draw (40.235165\du,8.830649\du)--(40.235165\du,10.980649\du)--(46.035165\du,10.980649\du)--(46.035165\du,8.830649\du)--cycle;
\definecolor{dialinecolor}{rgb}{0.000000, 0.000000, 0.000000}
\pgfsetstrokecolor{dialinecolor}
\node at (43.135165\du,10.145649\du){$P_{3}(D_{i})$};
\definecolor{dialinecolor}{rgb}{0.000000, 0.000000, 0.000000}
\pgfsetstrokecolor{dialinecolor}
\node[anchor=west] at (43.135165\du,9.905649\du){};
\definecolor{dialinecolor}{rgb}{0.000000, 0.000000, 0.000000}
\pgfsetstrokecolor{dialinecolor}
\node[anchor=west] at (41.717594\du,7.493286\du){Layer 3+(t-1)};
\definecolor{dialinecolor}{rgb}{1.000000, 1.000000, 1.000000}
\pgfsetfillcolor{dialinecolor}
\fill (40.210165\du,11.933286\du)--(40.210165\du,14.083286\du)--(46.010165\du,14.083286\du)--(46.010165\du,11.933286\du)--cycle;
\pgfsetlinewidth{0.100000\du}
\pgfsetdash{}{0pt}
\pgfsetdash{}{0pt}
\pgfsetmiterjoin
\definecolor{dialinecolor}{rgb}{0.000000, 0.000000, 0.000000}
\pgfsetstrokecolor{dialinecolor}
\draw (40.210165\du,11.933286\du)--(40.210165\du,14.083286\du)--(46.010165\du,14.083286\du)--(46.010165\du,11.933286\du)--cycle;
\definecolor{dialinecolor}{rgb}{0.000000, 0.000000, 0.000000}
\pgfsetstrokecolor{dialinecolor}
\node at (43.110165\du,13.248286\du){$P(A_{t},t)$};
\end{tikzpicture}
}
\caption{The last $t$ layers constructed for the parameter $n+m+\alpha$. An assignment can be pareto optimal in at most one layer among layers $3,\ldots,3+(t-1)$. If $p$ is a 3-globally optimal assignment, then the layer $3+i$ in which it is pareto optimal corresponds to a \yesinstance\ $D_{i}$; This $i \in \lbrace 0 , \ldots, t-1  \rbrace$ is encoded by the allocations of the agents from $A_{t}$; $P(A_{t},i)$ is the preference profile in which $c_{j}$ accepts only $\dd(j,i[j])$ and $\overline{c_{j}}$ accepts only $\overline{\dd(j,i[j])}$, for each $j \in [\lceil \log{t} \rceil]$.}
\label{fig:kernelAgentsPlusItemsPlusAlpha}
\end{figure}
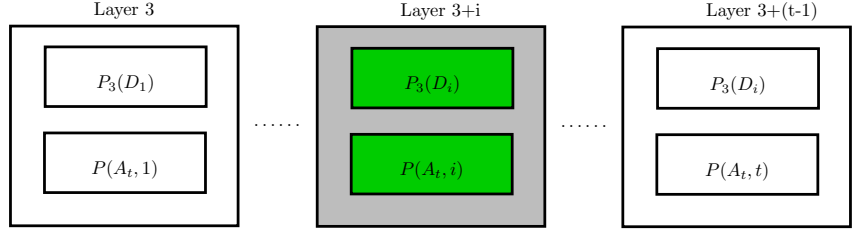

An illustration of the last $t$ layers in shown in Figure \ref{fig:kernelAgentsPlusItemsPlusAlpha}.Intuitively, the goal of the agents from $A_{t}$ is to enforce each assignment to be pareto optimal in at most one layer among $3,\ldots,3+(t-1)$. We finally set $\alpha=3$. Note that the construction can be done in time that is polynomial in $\Sigma_{i=0}^{t-1}|D_{i}|$. Before we prove the correctness of the construction, let us prove the following:

\begin{myclaim}\label{claim:cc1c2}
Let $p : A \rightarrow I \cup \lbrace b_{\emptyset} \rbrace$ be an assignment and $i \in \lbrace 0 , \ldots, t-1  \rbrace$. Then $p$ is pareto optimal in layer $3+i$ if and only if it is pareto optimal in $P_{3}(D_{i})$ and $M_{i} \subseteq p$. 
\end{myclaim}
\begin{proof}
($\Rightarrow$): Assume that $p$ is pareto optimal in layer $3+i$. First, since $P_{3}(D_{i})$ is contained in the preference profile of layer $3+i$, we infer that $p$ is pareto optimal in $P_{3}(D_{i})$. Second, observe that for each $i \in [\lceil \log{t} \rceil]$, $c_{i}$ only accepts $\dd(j,i[j])$ and $\overline{c_{j}}$ only accepts $\overline{\dd(j,i[j])}$. Furthermore, $\dd(j,i[j])$ and $\overline{\dd(j,i[j])}$ are only acceptable by these agents. This enforces $p$ to satisfy $p(c_{i})=\dd(j,i[j])$ and $p(\overline{c_{j}}) = \overline{\dd(j,i[j])}$ as otherwise, $p$ would admit self loops. Hence, $p$ satisfies $M_{i} \subseteq p$. 

($\Leftarrow$): Assume that $p$ is pareto optimal in $P_{3}(D_{i})$ and that $M_{i} \subseteq p$. Notice that $p$ allocates all the items, thus it cannot admit self loops in layer $3+i$. In addition, since each agent from $A_{t}$ only accepts a single item, $p$ cannot admit trading cycles among agents from $A_{t}$. Moreover, since $P_{3}(D_{i})$ is contained in the preference profile in layer $3+i$, and $p$ is pareto optimal in $P_{3}(D_{i})$, by \Prcref{prop:po-iff-no-tc-and-sl} we have that $p$ does not admit trading cycles among agents from $A(n,n)$. We have also that $p$ does not admit trading cycles that contain both agents from $A_{t}$ and agents from $A(n,n)$ since their sets of acceptable items are disjoint. Then we infer that $p$ is pareto optimal in layer $3+i$.\qed
\end{proof}

\begin{myclaim}\label{claim:cc1c3}
Let $p : A \rightarrow I \cup \lbrace b_{\emptyset} \rbrace$ be an assignment. Then there exists at most one $i \in \lbrace 0 , \ldots, t-1  \rbrace$ such that $p$ is pareto optimal in layer $3+i$. 
\end{myclaim}
\begin{proof}
Towards a contradiction, suppose there exist $i_{1},i_{2} \in \lbrace 0 , \ldots, t-1  \rbrace$ such that $i_{1} \neq i_{2}$ and $p$ is pareto optimal in both layers $3+i_{1}$ and $3+i_{2}$. By \Clcref{claim:cc1c2}, both $M_{i_{1}} \subseteq p$ and $M_{i_{2}} \subseteq p$, a contradiction to \Clcref{claim:cc1c1}.\qed
\end{proof}

We now prove that there exists $i \in \lbrace 0 , \ldots, t-1  \rbrace$ such that $D_{i}$ is a \yes-instance of \ThreeSat\ if and only if there exists a 3-globally optimal assignment for the constructed instance.

($\Rightarrow$): Assume there exists $i \in \lbrace 0 , \ldots, t-1  \rbrace$ such that $D_{i}$ is a \yes-instance. By \Lecref{lemma:ThreeSatYesInstanceIffPO}, there exists an assignment $p$ that is pareto optimal in the profiles $P_{1}(n,n), P_{2}(n,n)$ and $P_{3}(D_{i})$. We extend $p$ by $p \leftarrow p \cup M_{i}$. Since $p$ allocates all the items from $I$, it is pareto optimal in the first two layers by \Clcref{claim:cc1c1}. By \Clcref{claim:cc1c2}, it is pareto optimal in layer $3+i$. Thus, $p$ is 3-globally optimal for the constructed instance.

($\Leftarrow$): Suppose there exists a 3-globally optimal assignment for the constructed instance. By \Clcref{claim:cc1c3}, $p$ is pareto optimal in at most one among layers $3,\ldots,3+(t-1)$, say, layer $3+i$ where  $i \in \lbrace 0 , \ldots, t-1  \rbrace$. By \Clcref{claim:cc1c2}, $p$ is pareto optimal in $P_{3}(D_{i})$ and by \Clcref{claim:cc1c1}, it is pareto optimal in both $P_{1}(n,n)$ and $P_{2}(n,n)$. Thus we have by \Lecref{lemma:ThreeSatYesInstanceIffPO} that $D_{i}$ is a \yes-instance.

Since the constructed instance satisfies $n + m + \alpha = \OO(n^{2}+ \log{t})$, \Prcref{prop:noKern} implies that \goassign\ does not admit a polynomial kernel with respect to the parameter $ k = n + m + \alpha$, unless \NP$\subseteq $\coNPpoly.

\smallskip
\noindent\textbf{The parameter $k = n + m +(\ell - \alpha)$.}
Similarly to the previous cross-composition, the goal of the agents in $A_{t}$ here is to encode some $i \in \lbrace 0 , \ldots, t-1  \rbrace$ such that $D_{i}$ is a \yes-instance (if one exists). This is done by enforcing every $\ell$-pareto optimal assignment to be pareto optimal in some profile $P_{3}(D_{i})$ and preventing the existence of trading cycles in all layers $3+j$ where $j \neq i$. Let $i \in \lbrace 0 , \ldots, t-1  \rbrace$, we define layer $3+i$ as follows:

\begin{framed}
\begin{itemize}
\item $a_{q,\ind_{D_{i}}(q,3)}\ $: $\ \bb_{D_{i}}(q,3)>\overline{\bb_{D_{i}}(q,2)}>\overline{\bb_{D_{i}}(q,3)}\ \ \forall q \in [m]$
\item $a_{q,\ind_{D_{i}}(q,2)}\ $: $\ \bb_{D_{i}}(q,2)>\overline{\bb_{D_{i}}(q,1)}>\overline{\bb_{D_{i}}(q,2)}\ \ \forall q \in [m]$
\item $a_{q,\ind_{D_{i}}(q,1)}\ $: $\ \bb_{D_{i}}(q,1)>\dd(\lceil \log{t} \rceil , i[\lceil \log{t} \rceil])>\overline{\bb_{D_{i}}(q,1)}\ \ \forall q \in [m]$
\item $\overline{a_{q,\ind_{D_{i}}(q,r)}}\ $: $\ \bb_{D_{i}}(q,r)>\overline{\bb_{D_{i}}(q,r)}\ \ \forall q \in [m] , r \in [3]$
\item $a_{q,j}\ $: $\ b_{q,j}>\overline{b_{q,j}}\ \ \forall q \in [m], j \in [n]$ such that $x_{j}$ does not appear in the $q$-th clause in $\CC_{i}$
\item $\overline{a_{q,j}}\ $: $\ b_{q,j}>\overline{b_{q,j}}\ \ \forall q \in [m], j \in [n]$ such that $x_{j}$ does not appear in the $q$-th clause in $\CC_{i}$
\item $c_{j}\ $:$\ \overline{\dd(j,i[j])}>\dd(j-1,i[j-1])>\dd(j,i[j])\ \ \forall j \in \lbrace 2,\ldots,\lceil \log{t} \rceil \rbrace$
\item $c_{1}\ $:$\ \overline{\dd(1,i[1])}> \lbrace \overline{\bb_{D_{i}}(q,3)}\ |\  \forall q \in [m] \rbrace > \dd(1,i[1])$
\item $\overline{c_{j}}\ $:$\ \overline{\dd(j,i[j])}>\dd(j,i[j])\ \ \forall j \in [\lceil \log{t} \rceil ]$
\end{itemize}
\end{framed}

\begin{figure}[t]
\centering
\scalebox{0.7}{
\input{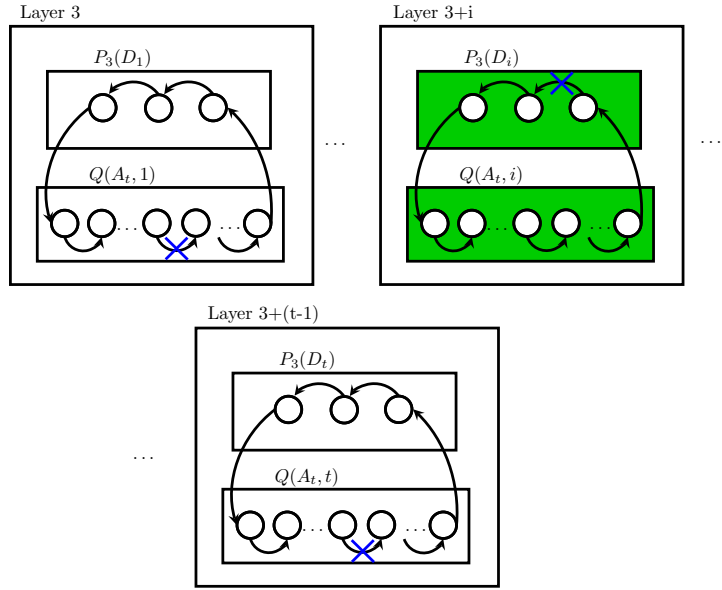}
}
\caption{The last $t$ layers constructed for the parameter $n+m+(\ell - \alpha)$. $Q(A_{t},i)$ consists of the preference lists of the agents from $A_{t}$ in layer $3+i$, for each $i \in \lbrace 0 , \ldots, t-1  \rbrace$. All the possible cycles in layers $3,\ldots,3+(t-1)$ can be decomposed into a path with 3 items in the trading graph of $P_{3}(D_{i})$, connected to a path with $\lceil \log{t} \rceil$ items in the trading graph of $Q(A_{t},i)$, and an edge connecting the end point of the second path to the start point of the first path. For an $\ell$-globally optimal assignment, the second path will be cut in most layers, then no cycles will exist. There will exist exactly one layer $3+i$ for which the trading graph of $Q(A_{t},i)$ contains such path. This implies that the trading graph of $P_{3}(D_{i})$ must not contain paths with 3 items, then $D_{i}$ is a \yesinstance.}
\label{fig:kernelAgentsPlusItemsPlusEllMinusAlpha}
\end{figure}

The construction is illustrated in Figure \ref{fig:kernelAgentsPlusItemsPlusEllMinusAlpha}. We now claim the following.

\begin{myclaim}\label{claim:cc1c4}
Let $p : A \rightarrow I \cup \lbrace b_{\emptyset} \rbrace$ be an assignment such that $p$ is pareto optimal in the first two layers and let $i \in \lbrace 0 , \ldots, t-1  \rbrace$. Then $p$ is pareto optimal in layer $3+i$ if and only if: (1) $p$ allocates all the items in $I$, and (2) $M_{i} \nsubseteq p$ or $p$ is pareto optimal in $P_{3}(D_{i})$.
\end{myclaim}
\begin{proof}

($\Rightarrow$): 
Assume that $p$ is pareto optimal in layer $3+i$. By \Clcref{claim:cc1c1}, $p$ is pareto optimal in both $P_{1}(n,n)$ and $P_{2}(n,n)$ and there exists exactly one $j \in \lbrace 0 , \ldots, t-1  \rbrace$ such that $M_{j} \subseteq p$. By the construction of the first layer, $p$ must allocate all the items since otherwise it would admit self loops. Notice that the only possible trading cycle in layer $3+i$, for any $i \in \lbrace 0 , \ldots, t-1  \rbrace$, has the form: $(a_{q,ind_{D_{i}}(q,3)},\overline{\bb_{D_{i}}(q,3)},a_{q,ind_{D_{i}}(q,2)},\overline{\bb_{D_{i}}(q,2)},a_{q,ind_{D_{i}}(q,1)}$ $,\overline{\bb_{D_{i}}(q,1)},c_{\lceil \log{t} \rceil},\dd(\lceil \log{t} \rceil,i[\lceil \log{t} \rceil]),\ldots,c_{1},\dd(1,i[1]))$ and it exists only when $p$ is not pareto optimal in $P_{3}(D_{i})$ and $M_{i} \subseteq p$.

($\Leftarrow$): Suppose that $p$ satisfies that $M_{i} \nsubseteq p$ or $p$ is pareto optimal in $P_{3}(D_{i})$. By the observation in the previous direction, $p$ does not admit trading cycles in layer $3+i$. Thus, $p$ is pareto optimal in layer $3+i$.\qed
\end{proof}

We prove that there exists $i \in \lbrace 0 , \ldots, t-1  \rbrace$ such that $D_{i}$ is a \yes-instance\ if and only if there exists an $\ell$-globally optimal assignment for the constructed instance.

($\Rightarrow$): Suppose there exists $i \in \lbrace 0 , \ldots, t-1  \rbrace$ such that $D_{i}$ is a \yes-instance. We extend $p$ by $p \leftarrow p \cup M_{i}$. By \Lecref{lemma:ThreeSatYesInstanceIffPO}, $p$ is pareto optimal in $P_{1}(n,n),P_{2}(n,n)$ and $P_{3}(D_{i})$. First, since all the items in $I_{t}$ are allocated, by Claims \ref{claim:cc1c2} and \ref{claim:cc1c3} we have that $p$ is pareto optimal in the first two layers. Second, by \Clcref{claim:cc1c4}, we have that $p$ is pareto optimal in layer $3+i$. Third, let $j \in \lbrace 0 , \ldots, t-1  \rbrace \setminus \lbrace i \rbrace$. By the construction of $p$ we have that $M_{j} \nsubseteq p$, then by \Clcref{claim:cc1c4}, $p$ is pareto optimal in layer $3+j$. Then we conclude that $p$ is $\ell$-globally optimal for the constructed instance.
 
($\Leftarrow$): Suppose that there exists an $\ell$-globally optimal assignment for the constructed instance $p$. By Claims \ref{claim:cc1c2} and \ref{claim:cc1c3}, $p$ is pareto optimal in both $P_{1}(n,n)$ and $P_{2}(n,n)$ and there exists $i \in \lbrace 0 , \ldots, t-1  \rbrace$ such that $M_{i} \subseteq p$. \Clcref{claim:cc1c4} implies that $p$ is pareto optimal in $P_{3}(D_{i})$. Thus, by \Lecref{lemma:ThreeSatYesInstanceIffPO}, $D_{i}$ is a \yes-instance.\qed
\end{proof}

We show the the same result holds also for the parameter $\nitems + \ell$.
\begin{theorem}\label{theorem:nopoly-kernel-items-l}
There does not exist a polynomial kernel for \goassign\ with respect to $k=\nitems + \ell$, unless \NP$\subseteq $\coNPpoly.
\end{theorem}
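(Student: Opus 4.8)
The plan is to give a cross-composition from \ThreeSat\ into \goassign\ parameterized by $k=\nitems+\ell$ and then invoke \Prcref{prop:noKern}. I would begin exactly as in \Thcref{theorem:nopoly-kernel-agents-items-alpha}: using \Obcref{observation:threeSatObservation} I may assume that the input instances $D_0,\dots,D_{t-1}$ share the variable set $\{x_1,\dots,x_n\}$ and have exactly $n$ clauses each, and I would reuse the shared item set $I(n,n)\cup I_{t}$ of size $\OO(n^{2}+\log{t})$ together with the first two layers built from $P_{1}(n,n)$, $P_{2}(n,n)$ and $P_{\mathrm{default}}(A_{t})$. By \Clcref{claim:cc1c1}, any assignment that is pareto optimal in these two layers is forced to encode a boolean assignment $\varphi$ on $I(n,n)$ and to select a \emph{single} index $i$ through the unique set $M_{i}\subseteq p$ among the selection items $I_{t}$.

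The essential new point is that the parameter $\nitems+\ell$ bounds only the items and the layers, leaving the number of \emph{agents} unconstrained. This rules out the earlier device of spending one layer per input instance (which gave $\ell=t+2$), so the selection of $i$ must remain inside the $\OO(\log{t})$ selection bits, while the clause verification of the selected $D_{i}$ must be carried out in a number of layers polynomial in $n$ and $\log{t}$ but \emph{independent of $t$}. The route I would pursue is to attach to every $D_{i}$ its own private family of clause-checking agents, all reading the shared items $I(n,n)$, and to make each family's forbidding trading cycle \emph{gated} by the selection items, so that a family can close a cycle only when $M_{i}\subseteq p$; for every non-selected $i'$ the severed gate $M_{i'}\not\subseteq p$ prevents that family from producing a cycle. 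If this can be arranged, then the checking layer(s) verify precisely the selected instance, and combining with (an analogue of) \Lecref{lemma:ThreeSatYesInstanceIffPO} and \Clcref{claim:threesatclaim3} shows that the composed instance is a \yes-instance iff some $D_{i}$ is satisfiable; the parameter is $\nitems+\ell=\OO(n^{2}+\log{t})$, which is polynomial in $s+\log{t}$.

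The hard part---and the real content of the theorem---is this packing step. The natural gadget $P_{3}(D_{i})$ realises the clause check as an owner-rotation on the items of $I(n,n)$, but in \goassign\ an assignment is a single global map, so the owner of each item is the same in every layer; writing the $P_{3}(D_{i})$-preferences onto these shared owners therefore forces one preference list, hence one layer, per instance. The crux is thus to relocate the instance-specific information from the shared owner-agents into the unbounded pool of fresh agents without duplicating items (which would blow up $\nitems$). I would design the fresh checking agents so that, away from the single armed gate, every pair of them, and every pair consisting of a checking agent and a base agent, respects each other; \Lecref{lemma:respectedAgents} then excludes all unintended trading cycles, and keeping the acceptable-item sets of distinct families disjoint except on the gating items lets me apply \Prcref{prop:po-iff-no-tc-and-sl} family by family. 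Proving analogues of \Clcref{claim:cc1c2}--\Clcref{claim:cc1c4}---namely that an assignment is pareto optimal in a checking layer iff it allocates all items and the \emph{selected} instance is satisfied---is what I expect to be the most delicate part of the argument.
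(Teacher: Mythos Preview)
Your plan diverges from the paper's construction in a way that leaves the central step unresolved. You try to keep the selection \emph{items} $I_{t}$ from \Thcref{theorem:nopoly-kernel-agents-items-alpha} and then gate per-instance clause cycles through them; as you yourself note, making the fresh checking agents participate in trading cycles on the shared items of $I(n,n)$---when those items are already owned by the base agents after layers~1 and~2---is the crux, and you do not actually carry it out. In particular, a trading cycle requires every participating agent to own an item, so your ``private checking agents'' must own something; with only $\OO(\log t)$ selection items and $t$ instance families, you have not explained what they own or how the gate opens for exactly one family. This is a genuine gap, not just missing detail.

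The paper's approach is simpler and sidesteps this entirely: it drops the selection items $I_{t}$ altogether and instead exploits the unbounded agent budget by creating $t$ full copies $A_{0}(n,n),\ldots,A_{t-1}(n,n)$ of the agent set over the single shared item set $I(n,n)$. Selection is then done by \emph{layers}, not items: for each bit $j\in[\lceil\log t\rceil]$ there are two layers, one in which only agents $a^{r}_{q,s}$ with $r[j]=0$ have nonempty lists (a copy of $P_{1}(n,n)$) and one where only those with $r[j]=1$ do. An assignment can be pareto optimal in at most one layer of each pair, so with $\alpha=\lceil\log t\rceil+2$ it must hit exactly one per pair and both of the last two layers; the chosen bits name a unique $i$ and force all items onto $A_{i}(n,n)$. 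The final two layers are simply the union over all $i$ of $P_{2}(n,n)$ and of $P_{3}(D_{i})$ written on $A_{i}(n,n)$; since every agent outside $A_{i}(n,n)$ is assigned $b_{\emptyset}$ and all items are allocated, only the $i$-th sub-profile matters there, and \Lecref{lemma:ThreeSatYesInstanceIffPO} finishes the argument. This yields $\nitems+\ell=2n^{2}+(2\lceil\log t\rceil+2)=\OO(n^{2}+\log t)$ with no gating gadget needed.
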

\begin{proof}
We present a cross-composition from \ThreeSat\ to \goassign. Given instances of \ThreeSat\ $D_{0}=(\XX_{0},\CC_{0}),\ldots,D_{t-1}=(\XX_{t-1},\CC_{t-1})$ of the same size $n \in \mathbb{N}$ for some $t \in \mathbb{N}$, we first modify each instance $D_{i}$ to have $\XX_{i} = \lbrace x_{1}, \ldots , x_{n} \rbrace$ and $|\CC_{i}|=n$ using \Obcref{observation:threeSatObservation}. We define an agent set $A_{i}(n,n)$ of $n^{2}$ agents for each instance $D_{i}$ by $A_{i}(n,n)=\lbrace a_{r,j}^{i}, \overline{a_{r,j}^{i}} \mid r,j \in [n] \rbrace$ and we set $A = \bigcup_{i=0}^{t-1}A_{i}(n,n)$. We also create the item set $I = I(n,n)$. The constructed instance is defined over $A$ and $I$, and it consists of $2\lceil \log{t} \rceil + 2$ layers. Notice that we have a total number of $2tn^{2}$ agents and $2n^{2}$ items, then in every assignment for the constructed instance, there will exist agents which get no items. Intuitively, the goal of the first $2 \lceil \log{t} \rceil$ layers is to enforce every $\alpha$-globally optimal assignment to allocate the items in $I$ only to agents that correspond to a \yes-instance\ (if one exists). They are constructed as compositions the profile $P_{1}(n,n)$ over the agent set $A_{i}(n,n)$ and the item set $I(n,n)$ for each $i \in \lbrace 0 , \ldots, t-1  \rbrace$ (namely, replacing each $a_{q,j}$ with $a_{q,j}^{i}$ in $P_{1}(n,n)$). Layer $2 \lceil \log{t} \rceil + 1$ is constructed as a composition of the profile $P_{2}(n,n)$ over $A_{i}(n,n)$ and $I(n,n)$ for each $i \in \lbrace 0 , \ldots, t-1  \rbrace$, and the last layer is a composition of the profiles $P_{3}(D_{i})$ over $A_{i}(n,n)$ and $I(n,n)$ for each $i \in \lbrace 0 , \ldots, t-1  \rbrace$. We first construct the first $\lceil \log{t} \rceil$ layers. Informally speaking, for $i \in [\lceil \log{t} \rceil]$, the preference profile in layer $i$ requires an assignment to allocate $b_{\emptyset}$ to any agent whose corresponding instance is $D_{j}$ such that the $i$-th bit in the binary representation of $j$ is $0$. Layer $i$ is formally defined as follows:

\begin{framed}[.9\textwidth]
\begin{itemize}
\item $a_{q,j}^{r}$ : $b_{q,j}>\overline{b_{q,j}}\ \ \forall q,j \in [n], r\in \lbrace 0 , \ldots, t-1 \rbrace \text{ such that } r[i]=0$
\item $\overline{a_{q,j}^{r}}$ : $b_{q,j}>\overline{b_{q,j}}\ \ \forall q,j \in [n], r\in \lbrace 0 , \ldots, t-1  \rbrace \text{ such that } r[i]=0$
\item $a_{q,j}^{r}$ : $\emptyset\ \ \forall q,j \in [n], r\in \lbrace 0 , \ldots, t-1  \rbrace \text{ such that } r[i]=1$
\item $\overline{a_{q,j}^{r}}$ : $\emptyset\ \ \forall q,j \in [n], r\in \lbrace 0 , \ldots, t-1  \rbrace \text{ such that } r[i]=1$
\end{itemize}
\end{framed}

We define the next $\lceil \log{t} \rceil$ layers. Intuitively, layer $\lceil \log{i} \rceil + i $ is different from layer $i$ such that it requires an assignment to allocate $b_{\emptyset}$ to any agent whose corresponding instance is $D_{j}$ such that the $i$-th bit in the binary representation of $j$ is $1$ (instead of $0$). For each $i \in [\lceil \log{t} \rceil]$, layer $\lceil \log{t} \rceil + i$ is defined as follows:

\begin{framed}[.9\textwidth]
\begin{itemize}
\item $a_{q,j}^{r}$ : $b_{q,j}>\overline{b_{q,j}}\ \ \forall q,j \in [n], r\in \lbrace 0 , \ldots, t-1  \rbrace \text{ such that } r[i]=1$
\item $\overline{a_{q,j}^{r}}$ : $b_{q,j}>\overline{b_{q,j}}\ \ \forall q,j \in [n], r\in \lbrace 0 , \ldots, t-1  \rbrace \text{ such that } r[i]=1$
\item $a_{q,j}^{r}$ : $\emptyset\ \ \forall q,j \in [n], r\in \lbrace 0 , \ldots, t-1  \rbrace \text{ such that } r[i]=0$
\item $\overline{a_{q,j}^{r}}$ : $\emptyset\ \ \forall q,j \in [n], r\in \lbrace 0 , \ldots, t-1  \rbrace \text{ such that } r[i]=0$
\end{itemize}
\end{framed}

Layer $2\lceil \log{t} \rceil +1$ is a composition of the profile $P_{2}(n,n)$ over the agent set $A_{i}(n,n)$ and the item set $I(n,n)$ for each $i \in [t]$.

\begin{framed}
\begin{itemize}
\item $a_{n,j}^{i}$ : $b_{n,j}>b_{n-1,j}>\overline{b_{n,j}}\ \ \forall i \in \lbrace 0 , \ldots, t-1  \rbrace, j \in [n]$
\item $\overline{a_{n,j}^{i}}\ $: $\ b_{n,j}>b_{n-1,j}>\overline{b_{n,j}}\ \ \forall i \in \lbrace 0 , \ldots, t-1  \rbrace,j \in [n]$
\item $a_{q,j}^{i}$ : $b_{q-1,j}>\overline{b_{q,j}}>\overline{b_{q+1,j}}>b_{q,j}\ \ \forall i \in \lbrace 0 , \ldots, t-1  \rbrace,q \in \lbrace 2,\ldots,n-1 \rbrace$,$ j \in [n]$
\item $\overline{a_{q,j}^{i}}$ : $b_{q-1,j}>\overline{b_{q,j}}>\overline{b_{q+1,j}}>b_{q,j}\ \ \forall i \in \lbrace 0 , \ldots, t-1  \rbrace,q \in \lbrace 2,\ldots,n-1 \rbrace$,$ j \in [n]$
\item $a_{1,j}^{i}$ : $\overline{b_{1,j}}>\overline{b_{2,j}}>b_{1,j}\ \ \forall i \in \lbrace 0 , \ldots, t-1  \rbrace,j \in [n]$
\item $\overline{a_{1,j}^{i}}$ : $\overline{b_{1,j}}>\overline{b_{2,j}}>b_{1,j}\ \ \forall i \in \lbrace 0 , \ldots, t-1  \rbrace,j \in [n]$
\end{itemize}
\end{framed}

Layer $2\lceil \log{t} \rceil +2$ is a composition of the profiles $P_{3}(D_{i})$ over $A_{i}(n,n)$ and $I(n,n)$ for each $i \in \lbrace 0 , \ldots, t-1  \rbrace$.

\begin{framed}
\begin{itemize}
\item $a_{j,\ind_{D_{i}}(j,3)}^{i}\ $: $\ \bb_{D_{i}}(j,3)>\overline{\bb_{D_{i}}(j,2)}>\overline{\bb_{D_{i}}(j,3)}\ \ \forall i \in \lbrace 0 , \ldots, t-1  \rbrace , j \in [n]$
\item $a_{j,\ind_{D_{i}}(j,2)}^{i}\ $: $\ \bb_{D_{i}}(j,2)>\overline{\bb_{D_{i}}(j,1)}>\overline{\bb_{D_{i}}(j,2)}\ \ \forall i \in \lbrace 0 , \ldots, t-1  \rbrace , j \in [n]$
\item $a_{j,\ind_{D_{i}}(j,1)}^{i}\ $: $\ \bb_{D_{i}}(j,1)>\overline{\bb_{D_{i}}(j,3)}>\overline{\bb_{D_{i}}(j,1)}\ \ \forall i \in \lbrace 0 , \ldots, t-1  \rbrace, j \in [n]$
\item $\overline{a_{j,\ind_{D_{i}}(j,r)}^{i}}\ $: $\ \bb_{D_{i}}(j,r)>\overline{\bb_{D_{i}}(j,r)}\ \ \forall i \in \lbrace 0 , \ldots, t-1  \rbrace , j \in [n] , r \in [3]$
\item $a_{q,r}^{i}\ $: $\ b_{q,r}>\overline{b_{q,r}}\ \ \forall i \in \lbrace 0 , \ldots, t-1  \rbrace , q,r \in [n]$ such that $x_{r}$ does not appear in the $q$-th clause in $\CC_{i}$
\item $\overline{a_{q,r}^{i}}\ $: $\ b_{q,r}>\overline{b_{q,r}}\ \ \forall i \in \lbrace 0 , \ldots, t-1  \rbrace , q,r \in [n]$ such that $x_{r}$ does not appear in the $q$-th clause in $\CC_{i}$
\end{itemize}
\end{framed} 
We finally set $\alpha = \lceil \log{t} \rceil +2$.
We claim the following:

\begin{myclaim} \label{claim1:ItemsPlusEll}
Let $p : A \rightarrow I \cup \lbrace b_{\emptyset} \rbrace$ be an assignment for the constructed instance and let $i \in \lceil \log{t} \rceil$. Then $p$ satisfies the following:
\begin{itemize}
\item $p$ is pareto optimal in layer $i$ if and only if it allocates all the items in $I$ to agents from the set $\lbrace a_{q,j}^{r} , \overline{a_{q,j}^{r}} \mid q,j \in [n], r \in \lbrace 0 , \ldots, t-1 \rbrace \text{ such that $r[i]=0$} \rbrace$.
\item $p$ is pareto optimal in layer $\lceil \log{t} \rceil + i$ if and only if it allocates all the items in $I$ to agents from the set $\lbrace a_{q,j}^{r} , \overline{a_{q,j}^{r}} \mid q,j \in [n], r \in \lbrace 0 , \ldots, t-1 \rbrace \text{ such that $r[i]=1$} \rbrace$.
\end{itemize}
\end{myclaim}
\begin{proof}
We provide a proof for the first condition (the proof for the second is symmetric).
 
($\Rightarrow$): Assume that $p$ is pareto optimal in layer $i$. Since the preference lists of agents $a_{q,j}^{r}$ and $\overline{a_{q,j}^{r}}$ for all $q,j \in [n]$ and $r \in \lbrace 0 , \ldots, t-1 \rbrace$ such that $r[i]=1$ are empty, $p$ allocates $b_{\emptyset}$ to all such agents, and therefore allocates items only to agents from $\lbrace a_{q,j}^{r} , \overline{a_{q,j}^{r}} \mid q,j \in [n], r \in \lbrace 0 , \ldots, t-1 \rbrace \text{ such that $r[i]=0$} \rbrace$. In addition, observe that each item in $I$ is acceptable by at least two agents from the set $\lbrace a_{q,j}^{r} , \overline{a_{q,j}^{r}} \mid q,j \in [n], r \in \lbrace 0 , \ldots, t-1 \rbrace \text{ such that $r[i]=0$} \rbrace$. By \Prcref{prop:po-iff-no-tc-and-sl}, $p$ does not admit self loops. Thus we have that all the items in $I$ are allocated.

($\Leftarrow$): Suppose that $p$ allocates all the items in $I$ to agents from the set $\lbrace a_{q,j}^{r} , \overline{a_{q,j}^{r}} \mid q,j \in [n], r \in \lbrace 0 , \ldots, t-1 \rbrace \text{ such that $r[i]=0$} \rbrace$. First, since $p$ allocates all the items, it does not admit self loops in layer $i$. Second, observe that all the agents pairwise respect each other in this layer, thus by \Lecref{lemma:respectedAgents}, $p$ does not admit trading cycle. By \Prcref{prop:po-iff-no-tc-and-sl}, $p$ is pareto optimal in layer $i$.\qed
\end{proof}

\begin{myclaim}\label{claim2:ItemsPlusEll}
Let $p : A \rightarrow I \cup \lbrace b_{\emptyset} \rbrace$ be an assignment for the constructed instance. Then there does not exist $i \in \lceil \log{t} \rceil$ such that $p$ is pareto optimal in both layers $i$ and $\lceil \log{t} \rceil + i$. 
\end{myclaim}
\begin{proof}
Towards a contradiction, suppose there exists $i \in \lceil \log{t} \rceil$ such that $p$ is pareto optimal in both layers $i$ and $\lceil \log{t} \rceil + i$. By \Clcref{claim1:ItemsPlusEll}, $p$ allocates all the items in $I$ to agents from both $\lbrace a_{q,j}^{r} , \overline{a_{q,j}^{r}} \mid q,j \in [n], r \in \lbrace 0 , \ldots, t-1 \rbrace \text{ such that $r[i]=0$} \rbrace$ and $\lbrace a_{q,j}^{r} , \overline{a_{q,j}^{r}} \mid q,j \in [n], r \in \lbrace 0 , \ldots, t-1 \rbrace \text{ such that $r[i]=1$} \rbrace$, but the intersection of these two sets is empty, a contradiction.\qed
\end{proof}

\begin{myclaim}\label{corollary1:ItemsPlusEll}
Let $p : A \rightarrow I \cup \lbrace b_{\emptyset} \rbrace$ be an assignment for the constructed instance. Then $p$ is pareto optimal in at most $\lceil \log{t} \rceil$ layers among the first $2 \lceil \log{t} \rceil$ layers. 
\end{myclaim}
\begin{proof}
\Clcref{claim2:ItemsPlusEll} implies that for each $i \in \lceil \log{t} \rceil$, $p$ is pareto optimal in at most one among layers $i$ and $\lceil \log{t} \rceil + i$. Implying that $p$ is pareto optimal in at most $\lceil \log{t} \rceil$ layers among the first $2 \lceil \log{t} \rceil$ layers.\qed
\end{proof}

Let $i \in \lbrace 0 , \ldots, t-1 \rbrace$, and let $p$ be an assignment for the constructed instance. We say that $p$ {\em encodes} $i$ in the first $2 \lceil \log{t} \rceil$ layers if it satisfies the following:
\begin{itemize}
\item For each $j \in [\lceil \log{t} \rceil]$ such that $i[j]=0$, $p$ is pareto optimal in layer $j$.
\item For each $j \in [\lceil \log{t} \rceil]$ such that $i[j]=1$, $p$ is pareto optimal in layer $\lceil \log{t} \rceil + j$.
\end{itemize}

\begin{myclaim}\label{corollary2:ItemsPlusEll}
Let $p : A \rightarrow I \cup \lbrace b_{\emptyset} \rbrace$ be an $\alpha$-globally optimal assignment for the constructed instance. Then there exists $i \in \lbrace 0 , \ldots, t-1 \rbrace$ that $p$ encodes in the first $2\lceil \log{t} \rceil$ layers. 
\end{myclaim}
\begin{proof}
Since $\alpha = \lceil \log{t} \rceil +2$, \Clcref{corollary1:ItemsPlusEll} implies that $p$ is pareto optimal in exactly $\lceil \log{t} \rceil$ layers among the first $2 \lceil \log{t} \rceil$ layers. By \Clcref{claim2:ItemsPlusEll}, for each $j \in [\lceil \log{t} \rceil]$, $p$ is pareto optimal in either layer $j$ or layer $\lceil \log{t} \rceil +j$. Denote $i[j]=0$ if $p$ is pareto optimal in layer $j$ and $i[j]=1$ if $p$ is pareto optimal in layer $\lceil \log{t} \rceil + j$. Observe that $p$ encodes $i=i[1]i[2]\ldots i[\lceil \log{t} \rceil]$ in the first $2 \lceil \log{t} \rceil$ layers.\qed
\end{proof}

\begin{myclaim}\label{claim3:ItemsPlusEll}
Let $p : A \rightarrow I \cup \lbrace b_{\emptyset} \rbrace$ be an assignment for the constructed instance. Then there exists $i \in \lbrace 0 , \ldots, t-1 \rbrace$ such that $p$ encodes $i$ in the first $2 \lceil \log{t} \rceil$ layers if and only if $p$ is pareto optimal in $P_{1}(n,n)$ over the agent set $A_{i}(n,n)$ and the item set $I(n,n)$. 
\end{myclaim}
\begin{proof}
($\Rightarrow$): Assume that there exists $i \in \lbrace 0 , \ldots, t-1 \rbrace$ that $p$ encodes in the first $2 \lceil \log{t} \rceil$ layers. 
Observe that for each $j \in [\lceil \log{t} \rceil]$, if $i[j]=0$ then $p$ is pareto optimal in layer $j$, and if $i[j]=1$ then $p$ is pareto optimal in layer $\lceil \log{t} \rceil +j$. By \Clcref{claim1:ItemsPlusEll}, $p$ allocates all the items in $I$ to agents from $\bigcap_{j=1}^{\lceil \log{t} \rceil}{\lbrace a_{q,s}^{r},\overline{a_{q,s}^{r}} \mid q,s \in [n], r \in \lbrace 0 , \ldots, t-1 \rbrace \text{ such that $r[j]=i[j]$} \rbrace }= A_{i}(n,n)$. Since the agents in $A_{i}(n,n)$ respect each other in $P_{1}(n,n)$ over $A_{i}(n,n)$ and $I(n,n)$, and all the items are allocated to $A_{i}(n,n)$, we have by \Lecref{lemma:respectedAgents} and \Prcref{prop:po-iff-no-tc-and-sl} that $p$ is pareto optimal in $P_{1}(n,n)$ over $A_{i}(n,n)$ and $I(n,n)$.

($\Leftarrow$): Suppose that $p$ is pareto optimal in $P_{1}(n,n)$ over $A_{i}(n,n)$ and $I(n,n)$. We claim that $p$ encodes $i$ in the first $2 \lceil \log{t} \rceil$ layers: By \Clcref{claim1:ItemsPlusEll}, for each $j \in [\lceil \log{t} \rceil]$, if $i[j] = 0$ then $p$ is pareto optimal in layer $j$, and if $i[j]=1$ then $p$ is pareo optimal in layer $\lceil \log{t} \rceil + j$. Then we have that $p$ encodes $i$ in the first $2 \lceil \log{t} \rceil$ layers.\qed
\end{proof}

\begin{myclaim} \label{claim4:ItemsPlusEll}
Let $p : A \rightarrow I \cup \lbrace b_{\emptyset} \rbrace$ be an assignment such that there exists $i \in \lbrace 0 , \ldots, t-1 \rbrace$ such that $p$ encodes $i$ in the first $2 \lceil \log{t} \rceil$ layers. Then $p$ is pareto optimal in layer $2 \lceil \log{t} \rceil +1 $ if and only if it is pareto optimal in $P_{2}(n,n)$ over $A_{i}(n,n)$ and $I(n,n)$.
\end{myclaim}
\begin{proof}

($\Rightarrow$): Assume that $p$ is pareto optimal in layer $2 \lceil \log{t} \rceil +1$. Since $P_{2}(n,n)$ over $A_{i}(n,n)$ and $I(n,n)$ is contained in the preference profile of layer $2 \lceil \log{t} \rceil +1 $, $p$ must be pareto optimal in this sub-profile.
 
($\Leftarrow$): Assume that $p$ is pareto optimal in $P_{3}(n,n)$ over $A_{i}(n,n)$ and $I(n,n)$. By the construction of $P_{3}(n,n)$ over $A_{i}(n,n)$ and $I(n,n)$, $p$ must allocate all the items to agents from $A_{i}(n,n)$, as otherwise, it would admit self loops. Observe that agents from $A \setminus A_{i}(n,n)$ do not admit trading cycles since they are assigned to $b_{\emptyset}$, and they also do not admit self loops since all their acceptable items are already allocated by $p$.\qed
\end{proof}
\begin{myclaim} \label{claim5:ItemsPlusEll}
Let $p : A \rightarrow I \cup \lbrace b_{\emptyset} \rbrace$ be an assignment such that there exists $i \in \lbrace 0 , \ldots, t-1 \rbrace$ such that $p$ encodes $i$ in the first $2 \lceil \log{t} \rceil$ layers. Then $p$ is pareto optimal in layer $2 \lceil \log{t} \rceil +2 $ if and only if it is pareto optimal in $P_{3}(D_{i})$ over $A_{i}(n,n)$ and $I(n,n)$.
\end{myclaim}
\begin{proof}

($\Rightarrow$): Assume that $p$ is pareto optimal in layer $2 \lceil \log{t} \rceil +2$. Since $P_{3}(n,n)$ over $A_{i}(n,n)$ and $I(n,n)$ is contained in the preference profile of layer $2 \lceil \log{t} \rceil +1 $, $p$ must be pareto optimal in this sub-profile.
 
($\Leftarrow$): Assume that $p$ is pareto optimal in $P_{3}(n,n)$ over $A_{i}(n,n)$ and $I(n,n)$. By the construction of $P_{3}(n,n)$ over $A_{i}(n,n)$ and $I(n,n)$, $p$ must allocate all the items to agents from $A_{i}(n,n)$, as otherwise, it would admit self loops. Observe that agents in $A \setminus A_{i}(n,n)$ do not admit trading cycles since they are assigned to $b_{\emptyset}$, they also do not admit self loops since all their acceptable items are already allocated by $p$.\qed
\end{proof}

\begin{figure}[t]
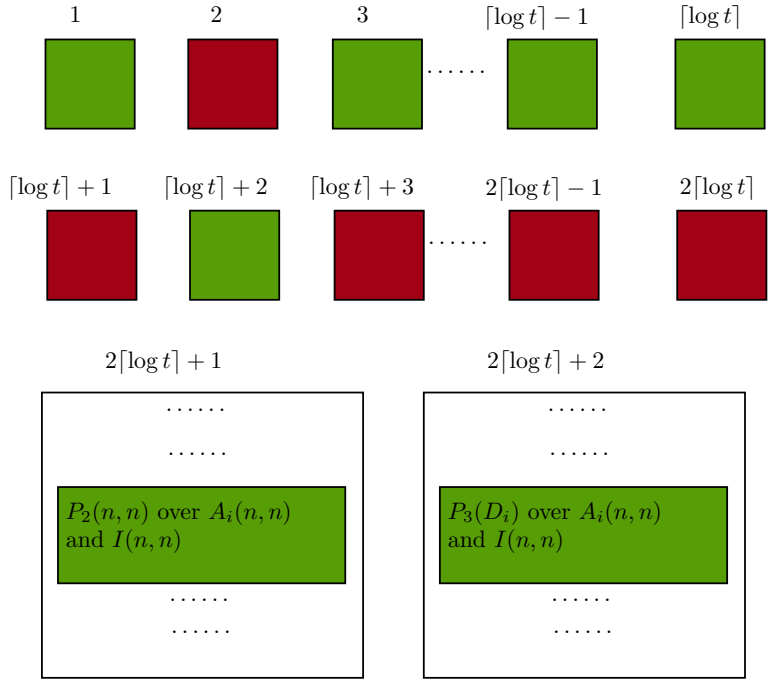

\centering
\scalebox{0.9}{
\tikzfig{Figures/ccItemsPlusEll}
}
\caption{If $p$ is $(\lceil \log{t} \rceil +2)$-globally optimal for the constructed instance, then the $\lceil \log{t} \rceil$ layers in which it is pareto optimal among the first $2\lceil \log{t} \rceil$ layers encode $i \in \lbrace 0 , \ldots, t-1 \rbrace$ such that: (1) $p$ allocates all the items from $I(n,n)$ to all the agents from $A_{i}(n,n)$, and (2) $p$ is pareto optimal in $P_{1}(n,n)$ over $A_{i}(n,n)$ and $I(n,n)$. We have that $p$ must be pareto optimal in both layers $2 \lceil \log{t} \rceil +1$ and $2 \lceil \log{t} \rceil +2$, implying that it is pareto optimal in both $P_{2}(n,n)$ and $P_{3}(D_{i})$ over $A_{i}(n,n)$ and $I(n,n)$.}
\label{fig:ccItemsPlusEll}
\end{figure}

We now prove the correctness of the construction. Namely, there exists $i \in \lbrace 0 , \ldots, t-1 \rbrace$ such that $D_{i}$ is a \yes-instance\ of \ThreeSat\ if and only if there exists an $\alpha$-globally optimal for the constructed instance.

($\Rightarrow$):
Suppose there exists $i \in \lbrace 0 , \ldots, t-1 \rbrace$ such that $D_{i}$ is a \yes-instance. By \Lecref{lemma:ThreeSatYesInstanceIffPO}, there exists an assignment $q : A(n,n) \rightarrow I(n,n) \cup \lbrace b_{\emptyset} \rbrace$ such that $q$ is pareto optimal in $P_{1}(n,n), P_{2}(n,n)$ and $P_{3}(D_{i})$. We define an assignment $p: A \rightarrow I \cup \lbrace b_{\emptyset} \rbrace$ by $p(a_{r,j}^{i}) = q(a_{r,j}) , p(\overline{a_{r,j}^{i}}) = q(\overline{a_{r,j}})$ for each $j,r \in [n]$, and $p(a_{r,j}^{y}) = b_{\emptyset} , p(\overline{a_{r,j}^{y}}) = b_{\emptyset}$ for each $j,r \in [n] , y \in \lbrace 0 , \ldots, t-1 \rbrace \setminus \lbrace i \rbrace$. By Claims \ref{claim1:ItemsPlusEll} and \ref{corollary1:ItemsPlusEll}, we have that $p$ is pareto optimal in exactly $\lceil \log{t} \rceil $ layers among the first $2\lceil \log{t} \rceil$ layers and that it encodes $i$. By Claims \ref{claim4:ItemsPlusEll} and \ref{claim5:ItemsPlusEll}, $p$ is pareto optimal in both layers $2 \lceil \log{t} \rceil +1$ and $2 \lceil \log{t} \rceil +2$. Thus, $p$ is an $\alpha$-globally optimal assignment for the constructed instance.

($\Leftarrow$): Assume that there exists $p : A \rightarrow I \cup \lbrace b_{\emptyset} \rbrace$ that is $\alpha$-globally optimal for the constructed instance (see Figure \ref{fig:ccItemsPlusEll}). By Claims \ref{corollary1:ItemsPlusEll} and \ref{corollary2:ItemsPlusEll}, $p$ is pareto optimal in exactly $\lceil \log{t} \rceil$ layers among the first $2 \lceil \log{t} \rceil$ layers, and there exists $i \in \lbrace 0 , \ldots, t-1 \rbrace$ such that $p$ encodes $i$ in the first $2 \lceil \log{t} \rceil$ layers. By \Clcref{claim3:ItemsPlusEll}, $p$ is pareto optimal in $P_{1}(n,n)$ over $A_{i}(n,n)$ and $I(n,n)$. Since $p$ is $\alpha$-globally optimal, it is pareto optimal in layers $2 \lceil \log{t} \rceil +1$ and $2 \lceil \log{t} \rceil +2$. Claims \ref{claim4:ItemsPlusEll} and \ref{claim5:ItemsPlusEll} imply that $p$ is pareto optimal in both $P_{2}(n,n)$ and $P_{3}(D_{i})$ over $A_{i}(n,n)$ and $I(n,n)$. Hence, by \Lecref{lemma:ThreeSatYesInstanceIffPO}, we have that $D_{i}$ is a \yes-instance.  

Since the constructed instance satisfies $m + \ell = \OO(n^{2}+ \log(t))$, we have by \Prcref{prop:noKern} that \goassign\ does not admit a polynomial kernel with respect to the parameter $ k = m + \ell$, unless \NP$\subseteq $\coNPpoly.\qed
\end{proof}
\section{Conclusion and Future Research}
\label{sec:conclusion}
In this paper, we introduced a new variant of the \assign\ problem where each agent is equipped with multiple incomplete preference lists, and we defined a corresponding notion of global optimality, that naturally extends pareto optimality. We considered several natural parameters, and presented a comprehensive picture of the parameterized complexity of the problem with respect to them.

The results show that the problem of finding an $\alpha$-globally optimal assignment is, in general, computationally hard, but that it admits more positive results when the parameter depends on $n = \nagents$ (and $\alpha$ or $\ell$) than when it depends on $m = \nitems$ (and $\alpha$ or $\ell$). We proved that the problem admits an \XP\ algorithm with respect to $m$, but is unlikely to admit one with respect to $\ell + d$ and $\ell + (m - d)$. We provided an $\OO^{*}(n!)$-time algorithm and an exponential kernel with respect to $m + \ell$. Both results showed that the problem is \FPT\ with respect to these parameters. In addition, we proved that $\OO^{*}(k!)$ is essentially a tight lower bound on the running time under \ETH\ for even larger parameters than $n$ such as $k = n + m + \alpha$ and $k = n + m + (\ell - \alpha)$. Moreover, we proved that the problem admits a polynomial kernel with respect to  $m + \ell$, but is unlikely to admit one with respect to all the other parameters that we considered. We also proved that the problem is \WOH\ with respect to $m + \alpha$ and $m + (\ell - \alpha)$. However, two questions are still open:
\begin{enumerate}
  \item  Is it possible to obtain a (not polynomial) better kernel for $m + \ell$ with size {\em substantially} smaller than $\OO^{*}((m!)^{\ell+1})$?
  \item Is it possible to obtain an algorithm with a better running time than $\OO^{*}(k!)$ for $k = n + m + \ell$? It can be shown using a reduction from \ThreeSat\ (similar to the reduction from \ThreeSat\ in Section \ref{sec:NPHardness}) that the problem cannot be solved in time $2^{o(k)}$ under \ETH. However, can a $2^{\OO(k)}$-algorithm be achieved?
\end{enumerate}

Continuing our research, it might be interesting to study ``weaker" definitions of optimality, for example: finding an assignment such that every group of $k$ agents has some $\alpha$ layers where they (1) do not admit trading cycles; (2) are not parts of larger trading cycles; or (3) do not admit the same trading cycle. Verification variants of these problems can also be suggested, i.e.~given an assignment $p$, check whether it is optimal.

Another direction is to study the particular case where the preferences of the agents are complete since it may provide more positive algorithmic results under some parameterizations. In addition, notice that a solution to \goassign\ can be seen as an approximation to the ``optimal'' solution in which an assignment is pareto optimal in a maximum number of layers (this is similar to the \textsc{Vertex Cover} problem, where the parameter $k$ is somewhat an ``approximation'' to the size of the minimum vertex cover). In this approach, we can define the problem as an approximation problem and study it from the perspective of parameterized approximation.

In this paper, we considered the basic ``unweighed'' model of the problem (since this is the first study of this kind). Another direction is to consider a weighted version in which some criteria (layers) may have higher importance than others. A straightforward way to model this is by having several copies of layers. However, if weights are high and varied, this might lead to inefficiency.
\newpage
\bibliographystyle{splncs04}
\bibliography{Refs}

\begin{thebibliography}{10}
\providecommand{\url}[1]{\texttt{#1}}
\providecommand{\urlprefix}{URL }
\providecommand{\doi}[1]{https://doi.org/#1}

\bibitem{Abdulkadirog1999HouseAW}
Abdulkadirog, A.: House allocation with existing tenants 1 (1999)

\bibitem{RePEc:ecm:emetrp:v:66:y:1998:i:3:p:689-702}
Abdulkadiroglu, A., Sonmez, T.: {Random Serial Dictatorship and the Core from
  Random Endowments in House Allocation Problems}. Econometrica
  \textbf{66}(3),  689--702 (May 1998),
  \url{https://ideas.repec.org/a/ecm/emetrp/v66y1998i3p689-702.html}

\bibitem{10.1007/978-3-540-30551-4-3}
Abraham, D.J., Cechl\'{a}rov\'{a}, K., Manlove, D.F., Mehlhorn, K.: Pareto
  optimality in house allocation problems. In: Proceedings of the 15th
  International Conference on Algorithms and Computation. p. 3–15.
  ISAAC’04, Springer-Verlag, Berlin, Heidelberg (2004).
  \doi{10.1007/978-3-540-30551-4\_3},
  \url{https://doi.org/10.1007/978-3-540-30551-4\_3}

\bibitem{alderfer1970personal}
Alderfer, C.P., McCord, C.G.: Personal and situational factors in the
  recruitment interview. Journal of Applied Psychology  \textbf{54}(4), ~377
  (1970)

\bibitem{soton425734}
Aziz, H., Biro, P., de~Haan, R., Rastegari, B.: Pareto optimal allocation under
  compact uncertain preferences. In: Thirty Third AAAI Conference on Artificial
  Intelligence (01/02/19) (October 2018),
  \url{https://eprints.soton.ac.uk/425734/}

\bibitem{ijcai2017-12}
Aziz, H., de~Haan, R., Rastegari, B.: Pareto optimal allocation under uncertain
  preferences. In: Proceedings of the Twenty-Sixth International Joint
  Conference on Artificial Intelligence, {IJCAI-17}. pp. 77--83 (2017).
  \doi{10.24963/ijcai.2017/12}, \url{https://doi.org/10.24963/ijcai.2017/12}

\bibitem{DBLP:journals/jcss/BodlaenderDFH09}
Bodlaender, H.L., Downey, R.G., Fellows, M.R., Hermelin, D.: On problems
  without polynomial kernels. J. Comput. Syst. Sci.  \textbf{75}(8),  423--434
  (2009). \doi{10.1016/j.jcss.2009.04.001},
  \url{https://doi.org/10.1016/j.jcss.2009.04.001}

\bibitem{DBLP:journals/siamdm/BodlaenderJK14}
Bodlaender, H.L., Jansen, B.M.P., Kratsch, S.: Kernelization lower bounds by
  cross-composition. {SIAM} J. Discrete Math.  \textbf{28}(1),  277--305
  (2014). \doi{10.1137/120880240}, \url{https://doi.org/10.1137/120880240}

\bibitem{Bogomolnaia2001ANS}
Bogomolnaia, A., Moulin, H.: A new solution to the random assignment problem.
  J. Economic Theory  \textbf{100},  295--328 (2001)

\bibitem{browne2000lives}
Browne, A.: Lives ruined as nhs leaks patients’ notes. The Observer, June
  25th  (2000)

\bibitem{DBLP:conf/iwpec/CalabroIP09}
Calabro, C., Impagliazzo, R., Paturi, R.: The complexity of satisfiability of
  small depth circuits. In: Parameterized and Exact Computation, 4th
  International Workshop, {IWPEC} 2009, Copenhagen, Denmark, September 10-11,
  2009, Revised Selected Papers. pp. 75--85 (2009).
  \doi{10.1007/978-3-642-11269-0\_6},
  \url{http://dx.doi.org/10.1007/978-3-642-11269-0\_6}

\bibitem{10.1145/3219166.3219168}
Chen, J., Niedermeier, R., Skowron, P.: Stable marriage with multi-modal
  preferences. In: Proceedings of the 2018 ACM Conference on Economics and
  Computation. p. 269–286. EC ’18, Association for Computing Machinery, New
  York, NY, USA (2018). \doi{10.1145/3219166.3219168},
  \url{https://doi.org/10.1145/3219166.3219168}

\bibitem{chernichovskystate}
Chernichovsky, D., Kfir, R.: The state of the acute care hospitalization system
  in {Israel}

\bibitem{clinedinst2019state}
Clinedinst, M.: State of college admission  (2019)

\bibitem{cools2010assessing}
Cools, M., Moons, E., Wets, G.: Assessing the impact of weather on traffic
  intensity. Weather, Climate, and Society  \textbf{2}(1),  60--68 (2010)

\bibitem{DBLP:books/sp/CyganFKLMPPS15}
Cygan, M., Fomin, F.V., Kowalik, L., Lokshtanov, D., Marx, D., Pilipczuk, M.,
  Pilipczuk, M., Saurabh, S.: Parameterized Algorithms. Springer (2015).
  \doi{10.1007/978-3-319-21275-3},
  \url{http://dx.doi.org/10.1007/978-3-319-21275-3}

\bibitem{DBLP:series/txcs/DowneyF13}
Downey, R.G., Fellows, M.R.: Fundamentals of Parameterized Complexity. Texts in
  Computer Science, Springer (2013). \doi{10.1007/978-1-4471-5559-1},
  \url{http://dx.doi.org/10.1007/978-1-4471-5559-1}

\bibitem{faudzi2018assignment}
Faudzi, S., Abdul-Rahman, S., Abd~Rahman, R.: An assignment problem and its
  application in education domain: A review and potential path. Advances in
  Operations Research  \textbf{2018} (2018)

\bibitem{fearnhead2011estimating}
Fearnhead, P., Taylor, B.M.: On estimating the ability of nba players. Journal
  of Quantitative analysis in sports  \textbf{7}(3) (2011)

\bibitem{fomin2019kernelization}
Fomin, F.V., Lokshtanov, D., Saurabh, S., Zehavi, M.: Kernelization: Theory of
  Parameterized Preprocessing. Cambridge University Press (2019)

\bibitem{gourves:hal-01741519}
Gourv{\`e}s, L., Lesca, J., Wilczynski, A.: {Object Allocation via Swaps along
  a Social Network}. In: {26th International Joint Conference on Artificial
  Intelligence (IJCAI'17)}. pp. 213--219. Melbourne, Australia (2017).
  \doi{10.24963/ijcai.2017/31},
  \url{https://hal.archives-ouvertes.fr/hal-01741519}

\bibitem{DBLP:journals/jcss/ImpagliazzoP01}
Impagliazzo, R., Paturi, R.: On the complexity of $k$-{SAT}. J. Comput. Syst.
  Sci.  \textbf{62}(2),  367--375 (2001). \doi{10.1006/jcss.2000.1727},
  \url{http://dx.doi.org/10.1006/jcss.2000.1727}

\bibitem{DBLP:journals/jcss/ImpagliazzoPZ01}
Impagliazzo, R., Paturi, R., Zane, F.: Which problems have strongly exponential
  complexity? J. Comput. Syst. Sci.  \textbf{63}(4),  512--530 (2001).
  \doi{10.1006/jcss.2001.1774}, \url{http://dx.doi.org/10.1006/jcss.2001.1774}

\bibitem{irving1994stable}
Irving, R.W.: Stable marriage and indifference. Discrete Applied Mathematics
  \textbf{48}(3),  261--272 (1994)

\bibitem{KINICKI1985117}
Kinicki, A.J., Lockwood, C.A.: The interview process: An examination of factors
  recruiters use in evaluating job applicants. Journal of Vocational Behavior
  \textbf{26}(2),  117--125 (1985).
  \doi{https://doi.org/10.1016/0001-8791(85)90012-0},
  \url{https://www.sciencedirect.com/science/article/pii/0001879185900120}

\bibitem{doi:10.4103/0256-4602.78092}
Kolasa, T., Krol, D.: A survey of algorithms for paper-reviewer assignment
  problem. IETE Technical Review  \textbf{28}(2),  123--134 (2011).
  \doi{10.4103/0256-4602.78092},
  \url{https://www.tandfonline.com/doi/abs/10.4103/0256-4602.78092}

\bibitem{AAAI1817396}
Lian, J.W., Mattei, N., Noble, R., Walsh, T.: The conference paper assignment
  problem: Using order weighted averages to assign indivisible goods (2018),
  \url{https://www.aaai.org/ocs/index.php/AAAI/AAAI18/paper/view/17396}

\bibitem{doi:10.1137/16M1104834}
Lokshtanov, D., Marx, D., Saurabh, S.: Slightly superexponential parameterized
  problems. SIAM Journal on Computing  \textbf{47}(3),  675--702 (2018).
  \doi{10.1137/16M1104834}, \url{https://doi.org/10.1137/16M1104834}

\bibitem{mulder2019willingness}
Mulder, J., de~Bruijne, M.: Willingness of online respondents to participate in
  alternative modes of data collection. Surv. Pract  \textbf{12}(1),  1--11
  (2019)

\bibitem{nass2009value}
Nass, S.J., Levit, L.A., Gostin, L.O., et~al.: The value and importance of
  health information privacy. In: Beyond the HIPAA Privacy Rule: Enhancing
  Privacy, Improving Health Through Research. National Academies Press (US)
  (2009)

\bibitem{RePEc:inm:ormoor:v:40:y:2015:i:4:p:1005-1014}
Saban, D., Sethuraman, J.: The complexity of computing the random priority
  allocation matrix. Mathematics of Operations Research  \textbf{40}(4),
  1005--1014 (2015),
  \url{https://EconPapers.repec.org/RePEc:inm:ormoor:v:40:y:2015:i:4:p:1005-1014}

\bibitem{saleh2017study}
Saleh, S.M., Sugiarto, S., Hilal, A., Ariansyah, D.: A study on the traffic
  impact of the road corridors due to flyover construction at surabaya
  intersection, banda aceh of indonesia. In: AIP Conference Proceedings.
  vol.~1903, p. 060005. AIP Publishing LLC (2017)

\bibitem{singh2012comparative}
Singh, S., Dubey, G., Shrivastava, R.: A comparative analysis of assignment
  problem. IOSR Journal of Engineering  \textbf{2}(8),  01--15 (2012)

\bibitem{educationInIsrael}
{The Council for Higher Education and The Planning and Budgeting Committee}:
  The higher education system in {Israel}  (2014)

\bibitem{TOPCU2020101691}
Topcu, M., Gulal, O.S.: The impact of covid-19 on emerging stock markets.
  Finance Research Letters  \textbf{36},  101691 (2020).
  \doi{https://doi.org/10.1016/j.frl.2020.101691},
  \url{https://www.sciencedirect.com/science/article/pii/S1544612320306966}

\bibitem{wu2016students}
Wu, K., DeVriese, A.: How students pick their housing situations: Factors and
  analysis  (2016)

\bibitem{zeren2020impact}
Zeren, F., HIZARCI, A.: The impact of covid-19 coronavirus on stock markets:
  evidence from selected countries. Muhasebe ve Finans {\.I}ncelemeleri Dergisi
   \textbf{3}(1),  78--84 (2020)

\bibitem{RePEc:eee:jetheo:v:52:y:1990:i:1:p:123-135}
Zhou, L.: On a conjecture by gale about one-sided matching problems. Journal of
  Economic Theory  \textbf{52}(1),  123--135 (1990),
  \url{https://EconPapers.repec.org/RePEc:eee:jetheo:v:52:y:1990:i:1:p:123-135}

\bibitem{doi:10.3138/infor.45.3.123}
Öncan, T.: A survey of the generalized assignment problem and its
  applications. INFOR: Information Systems and Operational Research
  \textbf{45}(3),  123--141 (2007). \doi{10.3138/infor.45.3.123},
  \url{https://doi.org/10.3138/infor.45.3.123}

\end{thebibliography}
\end{document}